\newtheorem{theorem}{Theorem}[section]
\newtheorem{lemma}[theorem]{Lemma}
\newtheorem{claim}[theorem]{Claim}
\newtheorem{corollary}[theorem]{Corollary}
\newtheorem{definition}[theorem]{Definition}
\newcommand{\R}{\ensuremath{\mathbb{R}}}
\newcommand{\Z}{\ensuremath{\mathbb{Z}}}
\newcommand{\lat}{\mathcal{L}}
\newcommand{\eps}{\varepsilon} 
\renewcommand{\epsilon}{\varepsilon}
\newcommand{\poly}{\mathrm{poly}}
\newcommand{\rank}{\mathrm{rank}}
\DeclareMathOperator*{\expect}{\mathbb{E}}
\DeclareMathOperator{\dist}{dist}
\renewcommand{\vec}[1]{\ensuremath{\boldsymbol{#1}}}
\newcommand{\basis}{\ensuremath{\mathbf{B}}}
\newcommand{\problem}[1]{\ensuremath{\mathrm{#1}} }
\DeclarePairedDelimiter\inner{\langle}{\rangle}
\DeclarePairedDelimiter\ceil{\lceil}{\rceil}
\newcommand{\CVP}{\problem{CVP}}
\newcommand{\SVP}{\problem{SVP}}
\newcommand{\SAT}{\problem{SAT}}
\newcommand{\ESC}{\problem{ExactSetCover}}
\newcommand{\cL}{{\mathcal L}}
\newcommand{\cC}{{\mathcal C}}
\newcommand{\tC}{\widetilde {C}}
 \newif\iffull
 \newcommand{\full}[2]{\iffull#1\else#2\fi} %
\begin{document}
	
	\title{(Gap/S)ETH Hardness of SVP}
	\author{Divesh Aggarwal\thanks{Supported by the Singapore Ministry of Education and the National Research Foundation, also through the Tier 3 Grant ``Random numbers from quantum processes" MOE2012-T3-1-009.} \\
		Centre for Quantum Technologies, NUS\\
		\texttt{divesh.aggarwal@gmail.com}
		\and 
		Noah Stephens-Davidowitz\thanks{Supported by the Simons Collaboration on Algorithms and Geometry.}\\
		Princeton University\\
		\texttt{noahsd@gmail.com}
			 }
	\date{}
	\maketitle
	
	\begin{abstract}
		We prove the following quantitative hardness results for the Shortest Vector Problem in the $\ell_p$ norm ($\SVP_p$), where $n$ is the rank of the input lattice. 
		\begin{enumerate}
			\item For ``almost all'' $p > p_0 \approx 2.1397$, there no $2^{n/C_p}$-time algorithm for $\SVP_p$ for some explicit constant $C_p > 0$ unless the (randomized) Strong Exponential Time Hypothesis (SETH) is false.
			\item For any $p > 2$, there is no $2^{o(n)}$-time algorithm for $\SVP_p$ unless the (randomized) Gap-Exponential Time Hypothesis (Gap-ETH) is false. Furthermore, for each $p > 2$, there exists a constant $\gamma_p > 1$ such that the same result holds even for $\gamma_p$-approximate $\SVP_p$.
			\item There is no $2^{o(n)}$-time algorithm for $\SVP_p$ for any $1 \leq p \leq 2$ unless either (1) (non-uniform) Gap-ETH is false; or (2) there is no family of lattices with exponential kissing  number in the $\ell_2$ norm. Furthermore, for each $1 \leq p \leq 2$, there exists a constant $\gamma_p > 1$ such that the same result holds even for $\gamma_p$-approximate $\SVP_p$.
		\end{enumerate}
	\end{abstract}
	
	\setcounter{page}{0}
	\thispagestyle{empty}
	\newpage
	\setcounter{page}{1}
	
	\section{Introduction}
	\label{sec:intro}

A lattice $\lat$ is the set of all integer combinations of
linearly independent basis vectors $\vec{b}_1,\dots,\vec{b}_n \in \R^d$,
\[
\lat = \lat(\vec{b}_1,\ldots, \vec{b}_n) := \Big\{ \sum_{i=1}^n z_i \vec{b}_i \ : \ z_i \in \Z \Big\}
\; .
\] We call $n$ the \emph{rank} of the lattice $\lat$ and $d$ the \emph{dimension} or the \emph{ambient dimension} of the lattice $\cL$.

The Shortest Vector Problem ($\SVP$) takes as input a basis for a lattice $\lat \subset \R^d$ and $r > 0$ and asks us to decide whether the shortest non-zero vector in $\lat$ has length at most $r$. Typically, we define length in terms of the $\ell_p$ norm for some $1 \leq p \leq \infty$, defined as
\[
\|\vec{x}\|_p := (|x_1|^p + |x_2|^p + \cdots + |x_d|^p)^{1/p}
\]
for finite $p$ and 
\[
\|\vec{x}\|_\infty := \max |x_i|
\; .
\]
In particular, the $\ell_2$ norm is the familiar Euclidean norm, and it is the most interesting case from our perspective.
We write $\SVP_p$  for $\SVP$ in the $\ell_p$ norm (and just $\SVP$ when we do not wish to specify a norm).

Starting with the breakthrough work of Lenstra, Lenstra, and Lov{\'a}sz in 1982~\cite{LLL82}, algorithms for solving $\SVP$ in both its exact and approximate forms have found innumerable applications, including
factoring polynomials over the rationals~\cite{LLL82}, integer programming~\cite{Lenstra83,Kannan87,DPV11}, cryptanalysis~\cite{Shamir84,Odl90,JS98,NS01}, etc. More recently, many cryptographic primitives have been constructed whose security is based on the (worst-case) hardness of $\SVP$ or closely related lattice problems \cite{Ajtai96,oded05,GPV08,Pei10,chris_survey}. Such lattice-based cryptographic constructions are likely to be used on massive scales (e.g., as part of the TLS protocol) in the not-too-distant future~\cite{new_hope,frodo,NIST_quantum}. 

Most of the above applications rely on approximate variants of $\SVP$ with rather large approximation factors (e.g., the relevant approximation factors are polynomial in $n$ for most cryptographic constructions). However, the best known algorithms for the approximate variant of $\SVP$ use an algorithm for exact $\SVP_2$ over lower-rank lattices as a subroutine~\cite{Schnorr87,GN08,MW16}. So, the complexity of the exact problem is of particular interest. We briefly discuss some of what is known below.

\paragraph{Algorithms for $\SVP$. } 
Most of the asymptotically fastest known algorithms for $\SVP$ are variants of the celebrated randomized sieving algorithm due to Ajtai, Kumar, and Sivakumar~\cite{AKS01}, which solved $\SVP_p$ in $2^{O(d)}$ time for $p = 2$ and $p=\infty$.
This was extended to all $\ell_p$ norms~\cite{BN09}, then to $\SVP$ in all norms~\cite{AJ08}, and then even to ``norms'' whose unit balls are not necessarily symmetric~\cite{DPV11}. These $2^{O(d)}$-time algorithms that work in all norms in particular imply $2^{O(n)} \cdot \poly(d)$-time algorithms for $\SVP_p$, by taking the ambient space to be the span of the lattice. We are therefore primarily interested in the running time of these algorithms as a function of the rank $n$. (Notice that, in the $\ell_2$ norm, we can always assume that $n=d$.)

In the special case of $p = 2$, quite a bit of work has gone into improving the constant in the exponent in these $2^{O(n)}$-time algorithms~\cite{NguyenVidick08,PS09,MV10,LWXZ11}. The current fastest known algorithm for $\SVP_2$ runs in $2^{n + o(n)}$ time~\cite{ADRS15,AS17}. But, this is unlikely to be the end of the story. Indeed, there is also a $2^{n/2 + o(n)}$-time algorithm that approximates $\SVP_2$ up to a small constant factor,\footnote{%
	Unlike all other algorithms mentioned here, this $2^{n/2 + o(n)}$-time algorithm does not actually find a short vector; it only outputs a length. In the exact case, these two problems are known to be equivalent under an efficient rank-preserving reduction~\cite{MicciancioBook}, but this is not known to be true in the approximate case.%
	}
 and there is some reason to believe that this algorithm can be modified to solve the exact problem~\cite{ADRS15,AS17}. Further complicating the situation, there exist even faster ``heuristic algorithms,'' whose correctness has not been proven but can be shown under certain heuristic assumptions~\cite{NguyenVidick08,WLTB11,Laarhoven2015}. The fastest such heuristic algorithm runs in time $(3/2)^{n/2 + o(n)} \approx 2^{0.29 n}$~\cite{BDGL16}.
 
\paragraph{Hardness of $\SVP$. } Van Emde Boaz first asked whether $\SVP_p$ was NP-hard in 1981, and he proved NP-hardness in the special case when $p = \infty$~\cite{Boas81}. Despite much work, his question went unanswered until 1998, when Ajtai showed NP-hardness of $\SVP_p$ for all $p$~\cite{Ajtai-SVP-hard}. A series of works by Cai and Nerurkar~\cite{CN98}, Micciancio~\cite{Mic01svp}, Khot~\cite{Khot05svp}, and Haviv and Regev~\cite{HRsvp} simplified the reduction and showed hardness of the approximate version of $\SVP_p$. We now know that $\SVP_p$ is NP-hard to approximate to within any constant factor and hard to approximate to within approximation factors as large as $n^{c/\log \log n}$ for some constant $c > 0$ under reasonable complexity-theoretic assumptions.\footnote{%
	All of these reductions for finite $p$ are randomized, as are ours. An unconditional deterministic reduction would be a major breakthrough. See~\cite{Mic01svp,Mic12} for more discussion and even a conditional deterministic reduction that relies on a certain number-theoretic assumption.%
} 

However, such hardness proofs tell us very little about the \emph{quantitative} or \emph{fine-grained} complexity of $\SVP_p$. E.g., does the fastest possible algorithm for $\SVP_2$ still run in time at least, say, $2^{n/5}$, or is there an algorithm that runs in time $2^{n/20}$ or even $2^{\sqrt{n}}$? The above hardness results cannot distinguish between these cases, but we certainly need to be confident in our answers to such questions if we plan to base the security of widespread cryptosystems on these answers. Indeed, most proposed instantiations of lattice-based cryptosystems (i.e., proposed cryptosystems that specify a key size) can essentially be broken by solving $\SVP_2$ with, say, $n \ll 600$ or $\SVP_p$ for any $p$ with $n \ll 1500$. So, if we discovered an algorithm running in time, say, $2^{n/20}$-time for $\SVP_2$ or $2^{n/50}$ or $2^{n/\log^2 n}$ for $\SVP_p$, then these schemes would be broken in practice. And, given the large number of recent algorithmic advances, one might (reasonably?) worry that such algorithms will be found. We would therefore very much like to rule out this possibility!

To rule out such algorithms, we typically rely on a fine-grained complexity-theoretic hypothesis---such as the Strong Exponential Time Hypothesis (SETH\full{, see Section~\ref{sec:fine-grained_prelims}}{}) or the Exponential Time Hypothesis (ETH). To that end, Bennett, Golovnev, and Stephens-Davidowitz recently showed quantitative hardness results for the Closest Vector Problem in $\ell_p$ norms ($\CVP_p$)~\cite{BGS17}, which is a close relative of $\SVP_p$ that is known to be at least as hard (so that this was a necessary first step towards proving similar results for $\SVP_p$). In particular, assuming SETH,~\cite{BGS17} showed that there is no $2^{(1-\eps) n}$-time algorithm for $\CVP_p$ or $\SVP_\infty$ for any $\eps > 0$ and ``almost all'' $1 \leq p \leq \infty$ (\emph{not} including $p = 2$). Under ETH,~\cite{BGS17} showed that there is no $2^{o(n)}$-time algorithm for $\CVP_p$ for any $1 \leq p \leq \infty$. We prove similar results for $\SVP_p$ for $p > 2$ (and a conditional result for $1 \leq p\le 2$ that holds if there exists a family of lattices satisfying certain geometric conditions).

\subsection{Our results}

We now present our results, which are also summarized in Table~\ref{tab:complexity_summary}. 

\newcommand{\boldandblueexp}{\color{blue!70!black} $\mathbf{2^{\Omega(n)}}$}

\begin{table}
	\begin{center}
		\begin{tabular}{| c | c | c |  c | c | c|}
			\hline
			 & Upper Bound & \multicolumn{3}{|c|}{Lower Bounds} &Notes\\
			\hline
			&  & SETH & Gap-ETH & \begin{tabular}{c}Gap-ETH +\\Kissing Number\end{tabular} & \\
			\hline
			$p_0 < p < \infty$  & $2^{O(n)}$& {\color{blue!70!black} $2^{n/C_p}$} &  \boldandblueexp & \boldandblueexp & See Fig.~\ref{fig:Cp}.  \\
			$2 < p \leq p_0$ & $2^{O(n)}$ & -- & \boldandblueexp & \boldandblueexp & $p_0 \approx 2.1397$  \\
			$p = 2$ & $2^{n}$ ($2^{0.3 n}$) & -- & -- & \boldandblueexp &  \\
			$1 \leq p < 2$ & $2^{O(n)}$ &-- &-- & \boldandblueexp &  \\
			$p = \infty$ & $2^{O(n)}$ & $\mathbf{2^n}$ & $\mathbf{2^{\Omega(n)}}$ & $\mathbf{2^{\Omega(n)}}$ & \cite{BGS17} \\
			\hline
		\end{tabular}
		\caption{\label{tab:complexity_summary} Summary of known fine-grained upper and lower bounds for $\SVP_p$ for various $p$ under various assumptions, with new results in {\color{blue!70!black}blue}. Lower bounds in {\bf bold} also apply for some constant approximation factor strictly greater than one. The one upper bound in parentheses is due to a heuristic algorithm. The SETH-based lower bound only applies for ``almost all'' $p > p_0$\full{, in the sense of Theorem~\ref{thm:SETH_hardness_centered_theta}}{ (as defined in the full version)}. We have suppressed low-order terms for simplicity.}
	\end{center}
\end{table}

\paragraph{SETH-hardness.} Our first main result essentially gives an explicit constant $C_p > 1$ for each $p > p_0 \approx 2.1397$ such that, under (randomized) SETH, there is no algorithm for $\SVP_p$ that runs in time better than $2^{n/C_p}$. The constants $p_0$ and $C_p$ do not have a closed form, but they are easily computable to high precision in practice.
(E.g., $p_0 = 2.13972134795007\ldots$, $C_3 = 
3.01717780317660\ldots$, and $C_5 = 1.3018669052709\ldots$.) We plot $C_p$ over a wide range of $p$ in Figure~\ref{fig:Cp}. Notice that $C_p$ is unbounded as $p$ approaches $p_0$, but it is a relatively small constant for, say, $p \gtrsim 3$.

We present this result informally here, as the actual statement is rather technical. In particular, because we use the theorem from~\cite{BGS17} that only applies to ``almost all'' $p$, our result also has this property. See \full{Theorem~\ref{thm:SETH_hardness_centered_theta}}{the full version} for the formal statement.

		\begin{figure}[ht]
			\begin{center}
				\includegraphics[width=0.4 \textwidth]{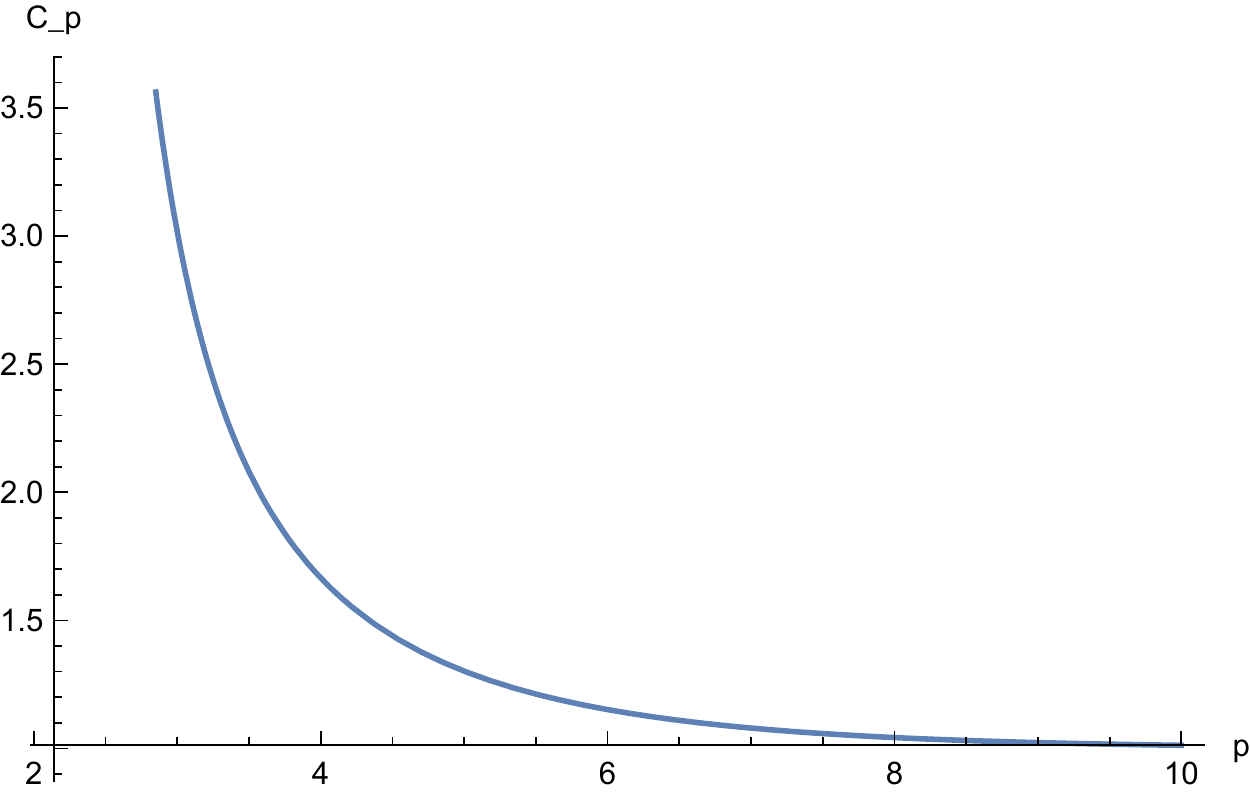}
				\qquad
				\includegraphics[width=0.4 \textwidth]{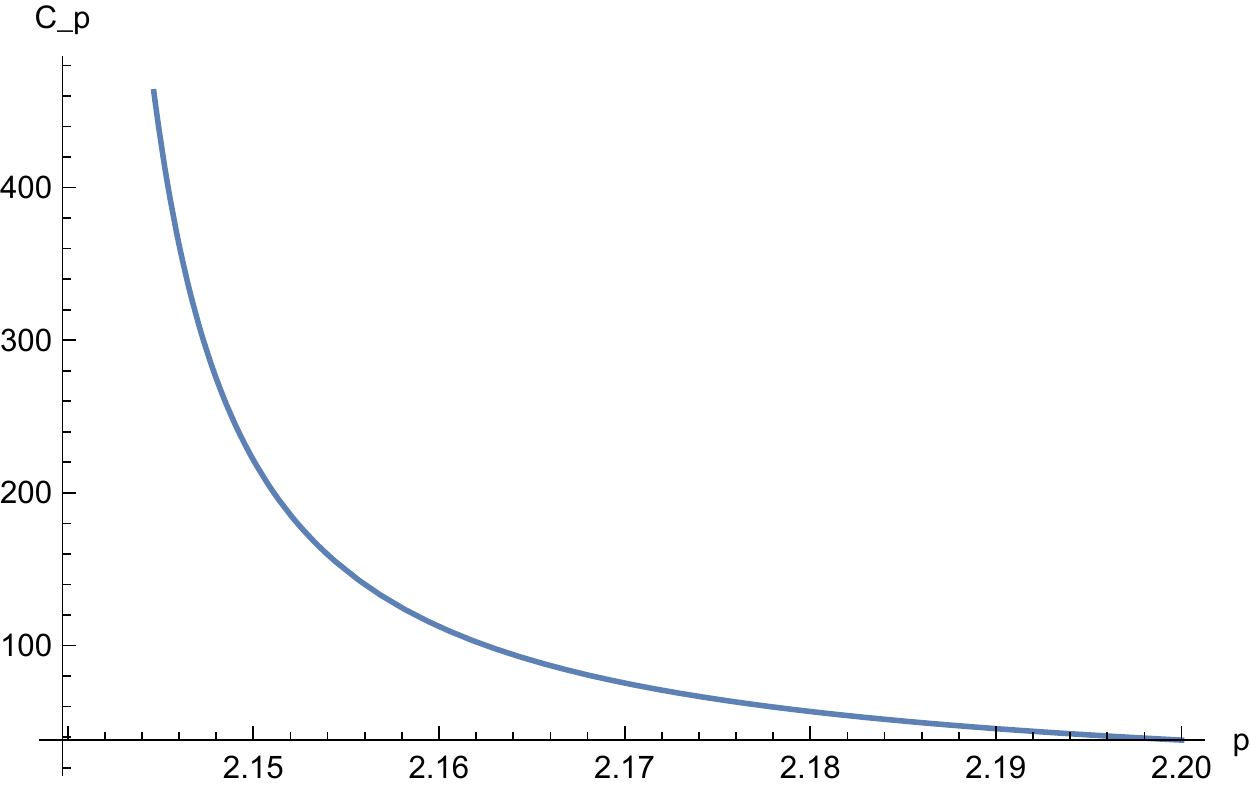}
				\caption{\label{fig:Cp} 
					The value $C_p$ for different values of $p > p_0$. In particular, up to the minor technical issues \full{in Theorem~\ref{thm:SETH_hardness_centered_theta}}{discussed in the full version}, there is no $2^{n/C_p}$-time algorithm for $\SVP_p$ unless SETH is false. The plot on the left shows $C_p$ over a wide range of $p$, while the plot on the right shows the behavior when $p$ is close to its minimal value $p_0 \approx 2.13972$.}
			\end{center}
		\end{figure}

\begin{theorem}[Informal]
	\label{thm:SETH_intro}
	For ``almost all'' $p > p_0 \approx 2.1397$ (including all odd integers $p \geq 3$), there is no $2^{n/C_p}$-time algorithm for $\SVP_p$ unless (randomized) SETH is false, where $C_p > 1$ is as in Figure~\ref{fig:Cp}. Furthermore, $C_p \to 1$ as $p \to \infty$.
\end{theorem}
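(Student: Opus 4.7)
The plan is to reduce $\CVP_p$, whose SETH-hardness is established in~\cite{BGS17} for ``almost all'' $p$, to $\SVP_p$ by means of a ``locally dense lattice'' gadget in the spirit of the classical Khot and Haviv--Regev reductions. The essential point is quantitative: a reduction that blows up the rank by a factor of $1+\mu$ converts a $2^{(1-\eps)n}$ SETH lower bound on $\CVP_p$ into a $2^{(1-\eps)n/(1+\mu)}$ lower bound on $\SVP_p$, so $C_p$ is essentially determined by the smallest achievable rank blowup for which the gadget still works in the $\ell_p$ norm.

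Concretely, I would fix a $\CVP_p$ instance $(\mathcal{L},\vec{t},r)$ of rank $n$ for which~\cite{BGS17} rules out $2^{(1-\eps)n}$-time algorithms, and then construct a gadget lattice $\mathcal{L}'$ of rank $m$ together with a target $\vec{s}$ such that the coset $\mathcal{L}'+\vec{s}$ contains many vectors of $\ell_p$-norm at most $\rho$. The natural choice is $\mathcal{L}'=\Z^m$ with $\vec{s}$ a carefully chosen non-lattice point so that integer points in the $\ell_p$-ball of radius $\rho$ around $\vec{s}$ are counted by a one-dimensional theta-type sum raised to the $m$-th power. I would then glue the two instances into a single lattice $\mathcal{L}^*$ of rank $n+m$, generated by the direct sum $\mathcal{L}\oplus\mathcal{L}'$ together with one extra generator $(\alpha\vec{t},\vec{s})$, where $\alpha>0$ is balanced so that the combined weight $\alpha^p\|\vec{v}-k\vec{t}\|_p^p+\|\vec{w}-k\vec{s}\|_p^p$ is strictly larger in the NO case than in the YES case. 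The short nonzero vectors of $\mathcal{L}^*$ are then forced to have the form $(\alpha(\vec{v}-\vec{t}),\vec{w}-\vec{s})$, so the YES/NO distinction for $\CVP_p$ becomes a short-vs-long distinction for $\SVP_p$. Setting $C_p=1+m/n+o(1)$ and optimizing $m/n$ as a function of $p$ then yields the claimed hardness; as $p\to\infty$, the $\ell_p$-ball tends to the $\ell_\infty$-cube which contains $(2\lfloor\rho\rfloor+1)^m$ integer points, so an arbitrarily small $m/n$ suffices to achieve the required gadget density and $C_p\to 1$.

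The main obstacle is the gadget density analysis near the critical value $p_0$. One must show that, for $p>p_0$, it is possible to choose $\vec{s}$ and $\rho$ so that the relevant single-coordinate $p$-theta sum has strictly positive exponential growth rate while $\rho$ remains small enough to preserve the $\CVP_p$ YES/NO gap; the transcendental equation that equates these two quantities is exactly what pins $p_0\approx 2.1397$ and determines the explicit curve $C_p$ of Figure~\ref{fig:Cp}. A secondary subtlety is inheriting the ``almost all $p$'' caveat from~\cite{BGS17}: the set of admissible $p$ is restricted by that theorem, but the restriction is compatible with our gadget since it depends only on $p$ and not on the particular $\CVP_p$ instance produced.
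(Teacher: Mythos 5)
Your high-level plan is the right one (pad the SETH-hard $\CVP_p$ instances of~\cite{BGS17} with an integer-lattice gadget analyzed via a one-dimensional theta sum, and read $C_p$ off the rank blowup), and the observation that $C_p = 1 + m/n + o(1)$ where $m$ is the gadget rank is consistent with the paper's $C_p$. However, there is a genuine gap at the heart of the construction: you claim that after gluing, ``the short nonzero vectors of $\mathcal{L}^*$ are forced to have the form $(\alpha(\vec{v}-\vec{t}),\vec{w}-\vec{s})$,'' i.e.\ to carry a nonzero coefficient $k$ on the extra generator. That is false. For $k=0$ the lattice $\mathcal{L}\oplus\Z^m$ already contains an enormous number of short vectors: $\Z^m$ alone has $2^m$ vectors in $\{0,1\}^m$ of $\ell_p$ norm $m^{1/p}$ (and $2m$ vectors of norm $1$), and the $\CVP_p$ instance lattice $\mathcal{L}$ from~\cite{BGS17} contains a copy of $\Z^n$ and hence its own $2^{\Omega(n)}$ short vectors. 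So in the NO case $\lambda_1^{(p)}(\mathcal{L}^*)$ is still small, and no choice of $\alpha$, $\rho$, or $\vec{s}$ fixes this: your $\alpha$-balancing inequality controls only the ``impostors'' (vectors with $k \neq 0$), not the honest short lattice vectors with $k = 0$. This is precisely why Khot's reduction is \emph{not} a direct Karp reduction; its indispensable ingredient is random \emph{sparsification} (the paper's Theorem~\ref{thm:sparsify}): one argues that the number of ``good'' vectors at distance $r$ from the target in the YES case exceeds the number of ``annoying'' vectors (short lattice vectors, impostors, and their multiples) in the NO case by at least a polynomial factor, and then passes to a random index-$q$ sublattice which kills all annoying vectors with high probability while retaining at least one good vector. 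Your plan never mentions this step and implicitly asserts a clean structural dichotomy that does not hold.

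Two secondary points. First, the gluing as you describe it (adding the generator $(\alpha\vec{t},\vec{s})$ \emph{inside} the ambient space of $\mathcal{L}\oplus\Z^m$) loses control of the multiple $k$; the paper, following Khot, appends a fresh coordinate of value $s$ so that a vector carrying coefficient $k$ on the extra generator has a $|ks|$ contribution in that coordinate, which is what lets one count the ``annoying'' vectors layer by layer (the quantity $A^{(p)}_{r,s,\gamma}$ in the paper). Second, the constant is not obtained by ``optimizing $m/n$'' freely: the correct quantitative criterion is that $N_p(\Z^{n+m},\widehat{r},\vec0) \ll 2^{m}$ where $\widehat r \approx (n+m)^{1/p}/2$, which via the theta bound $N_p(\Z^{N},\widehat r,\vec0) \le \inf_\tau \exp(\tau \widehat r^p)\Theta_p(\tau)^N = W_p^N$ becomes $W_p^{\,n+m} < 2^m$, i.e.\ $m/(n+m) > \log_2 W_p$, yielding $C_p = (n+m)/n = 1/(1 - \log_2 W_p)$ and $p_0$ as the root of $W_{p_0}=2$. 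Your proposal gestures at this but never writes down the inequality $N_p(\Z^{n+m},\widehat r,\vec0) \ll 2^m$ as the controlling constraint, which is what actually pins down the curve $C_p$ and the threshold $p_0$.
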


To prove this theorem, we give a (randomized) reduction from the $\CVP_p$ instances created by the reduction of~\cite{BGS17} to $\SVP_p$ that only increases the rank of the lattice by a constant factor. As we describe in Section~\ref{sec:techniques}, our reduction is surprisingly simple. 
In particular, the key step in Khot's reduction~\cite{Khot05svp} uses a certain ``gadget'' consisting of a lattice $\lat^\dagger$, vector $\vec{t}^\dagger$, and distance $r^\dagger > 0$ to convert a provably hard $\CVP_p$ instance into an $\SVP_p$ instance. Our reduction is similar to Khot's reduction with the simple gadget given by $\lat^\dagger := \Z^{n^\dagger}$, $\vec{t}^\dagger := (1/2,\ldots, 1/2) \in \R^{n^\dagger}$, and $r^\dagger := n^{1/p}/2$. 

We note in passing that we actually do not need the full strength of SETH. We can instead rely on the analogous assumption for Max-$k$-SAT, which is potentially weaker. (We inherit this property directly from~\cite{BGS17}. See \full{Section~\ref{sec:gadget_Zn_all_halves}}{the full version}.)

\paragraph{Gap-ETH-hardness.} Our second main result is the Gap-ETH-hardness of $\SVP_p$ for all $p > 2$.\footnote{Gap-ETH is the assumption that there is no $2^{o(n)}$-time algorithm that distinguishes a satisfiable $3$-SAT formula from one in which at most a constant fraction of the clauses are simultaneously satisfiable.\full{ See Section~\ref{sec:fine-grained_prelims}.}{}} In fact, we prove this even for the problem of approximating $\SVP_p$ up to some fixed constant $\gamma_p > 1$ depending only on $p$ (and the approximation factor implicit in the Gap-ETH assumption). \full{See Corollary~\ref{cor:Gap3SAT_to_SVP}.}{}

\begin{theorem}[Informal]
	\label{thm:ETH_intro}
	For any $p > 2$, there is no $2^{o(n)}$-time algorithm for $\SVP_p$ unless (randomized) Gap-ETH is false. Furthermore, for each such $p$ there is a constant $\gamma_p > 1$ such that the same result holds even for $\gamma_p$-approximate $\SVP_p$.
\end{theorem}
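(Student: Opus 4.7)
The plan is to transport the Gap-ETH hardness of a gap variant of $\CVP_p$ (in the spirit of~\cite{BGS17}) to $\SVP_p$ via the same $\Z^{n^\dagger}$ gadget that underlies the SETH reduction sketched for Theorem~\ref{thm:SETH_intro}, tracking how a constant-factor $\CVP_p$ gap transfers to a constant-factor $\SVP_p$ gap when $p > 2$. The starting problem will be a ``Gap-$\CVP_p'$'': rank-$n$ instances $(\basis, \vec{t}, r)$ with the promises (YES) $\dist_p(\vec{t}, \lat(\basis)) \le r$, versus (NO) $\dist_p(k \vec{t}, \lat(\basis)) \ge \alpha |k| r$ for every nonzero $k \in \Z$ together with $\lambda_1^{(p)}(\lat(\basis)) \ge \beta r$, for suitable constants $\alpha = \alpha_p > 1$ and $\beta = \beta_p$ to be chosen. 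Gap-ETH hardness of this variant follows either by inspecting the~\cite{BGS17} reduction and checking that it preserves the constant inherent gap of a Gap-ETH starting instance, or by standard gap amplification on top of their exact-$\CVP_p$ hardness, combined with the familiar shift-by-$\vec{b}_i$ and direct-sum tricks used to enforce the auxiliary $\lambda_1^{(p)}$ and multi-$\vec{t}$ promises.

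Given such an instance, I would build the rank-$(n+1)$ lattice $\lat'$ generated by the columns $(\vec{b}_i, 0)$ for $i = 1,\dots,n$, by $(\vec{0}, 2r)$, and by the shift column $(-\vec{t}, r)$. This is exactly the Khot-style reduction applied with the simple gadget $(\lat^\dagger, \vec{t}^\dagger, r^\dagger) = (\Z, 1/2, 1/2)$ scaled by $2r$, so that the gadget minimum distance contributes $\ell_p^p$-mass $r^p$. A generic nonzero vector of $\lat'$ has the form $(\vec{v} - k \vec{t},\, r(2w + k))$ for $\vec{v} \in \lat(\basis)$ and $k, w \in \Z$, with $\ell_p^p$-norm $\|\vec{v} - k\vec{t}\|_p^p + r^p |2w + k|^p$. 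In a YES instance, taking $k = 1$, $\vec{v}$ a close vector, and $w = 0$ produces a vector of $\ell_p^p$-length at most $r^p + r^p = 2 r^p$. In a NO instance, a short case analysis (odd $k$: gadget summand $\ge r^p$ and lattice summand $\ge (\alpha r)^p$; even $k \ne 0$: lattice summand $\ge (2\alpha r)^p$; $k = 0$: length $\ge \min(\beta r, 2 r)$) shows that every nonzero vector has $\ell_p^p$-length at least $(\alpha^p + 1) r^p$, provided $\beta \ge (\alpha^p + 1)^{1/p}$ (a cheap constraint on the starting instance) and $\alpha \le (2^p - 1)^{1/p}$ (which is automatic since $\alpha$ is only barely above $1$). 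The $\SVP_p$ approximation gap is therefore $\gamma_p := ((\alpha^p + 1)/2)^{1/p} > 1$, and since the rank grows by only $1$, the Gap-ETH lower bound carries over.

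The main obstacle is the first step: obtaining, under Gap-ETH, a starting $\CVP_p$ instance with a constant multiplicative gap together with the auxiliary $\lambda_1^{(p)}$ and multi-$\vec{t}$ promises. This is exactly where the restriction $p > 2$ bites: direct-sum amplification in the $\ell_p$ norm preserves (and even amplifies) a relative distance gap precisely when $p > 2$, because $p$-th powers of distances add across summands and the gap survives normalization, whereas for $p \le 2$ the gap dilutes across coordinates---the same phenomenon that forces Item~3 of the abstract to assume an exponential kissing number in the $\ell_2$ norm. Once the starting problem is in hand, the remainder of the reduction is routine $\CVP$-to-$\SVP$ bookkeeping of the type familiar from~\cite{Khot05svp, HRsvp}, and the exact-$\SVP_p$ Gap-ETH hardness follows as the special case $\gamma_p = 1$.
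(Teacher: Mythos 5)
Your reduction from the hypothetical ``Gap-$\CVP_p'$'' to $\SVP_p$ is internally consistent, but the critical step---establishing Gap-ETH hardness of that starting problem---is precisely where the proposal breaks down, and the hand-waving there cannot be repaired. The promise you impose in the NO case, $\lambda_1^{(p)}(\lat) \ge \beta r$ with $\beta > (\alpha^p+1)^{1/p} > 1$, is effectively asking for a $\CVP_p$ instance whose shortest lattice vector is longer than the target distance. Such an instance converts to $\SVP_p$ by the trivial one-row append; in other words, a $\CVP_p$ variant with this promise is already an $\SVP_p$ problem in disguise, and asserting its Gap-ETH hardness is circular. The $\CVP_p$ instances coming out of \cite{BGS17} explicitly contain a copy of $\Z^n$ (Theorem~\ref{thm:CVPSETH}), so they have $\lambda_1^{(p)} = 1$ while $r \approx n^{1/p}/2$---the short vectors outnumber the close vectors by a factor of $2^{\Omega(n)}$, and none of ``gap amplification,'' ``shift-by-$\vec{b}_i$,'' or direct sums change this: direct sums and scalings preserve the ratio $\lambda_1^{(p)}/r$, which is the quantity you would need to push above $1$.

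This is exactly the obstacle that forces Khot's reduction, and the paper's, to use a rank-$\Theta(n)$ gadget and random sparsification. Your scaled $(\Z, 1/2, 1/2)$ gadget has only two close vectors; it cannot dilute the $2^{\Omega(n)}$ short vectors of the starting lattice. The paper instead reduces Gap-$3$-$\SAT$ through approximate Exact Set Cover (Theorem~\ref{thm:SAT_to_ESC}) to $(A,G)$-$\CVP_{p,\gamma}$ (Theorem~\ref{thm:ESC_to_CVP}) with a gadget $\lat^\dagger = \Z^{n^\dagger}$, target $\vec{t}^\dagger = (t,\ldots,t)$ for a carefully chosen $t \in (0,1/2]$, and radius $\approx C(n^\dagger)^{1/p}$, and then sparsifies (Theorem~\ref{thm:sparsification_reduction}). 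The choice of $t$ here is \emph{not} $1/2$: for the Gap-ETH argument one needs a radius strictly larger than $\dist_p(\vec{t}^\dagger, \Z^{n^\dagger})$ in order to absorb impostors using the constant slack from the Gap-Exact-Set-Cover promise, and the optimal shift comes from maximizing $\Theta_p(\tau;t)$ (Lemma~\ref{lem:integer_lattice_bound}).

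Finally, your stated reason for the restriction $p>2$ is incorrect. Direct sums in $\ell_p$ add $p$-th powers of distances across summands, so a multiplicative gap is preserved (not amplified) \emph{for every} $1 \le p < \infty$; nothing special happens at $p=2$ in that regard. The actual role of $p>2$ is the analytic fact, proved in Lemma~\ref{lem:local_minimum_theta}, that $t=0$ is a strict local minimum of $t\mapsto\Theta_p(\tau;t)$ exactly when $p>2$---equivalently, that for $p>2$ there is a shift $t\in(0,1/2]$ at which the integer lattice has exponentially more points within distance $\approx C(n^\dagger)^{1/p}$ of $\vec{t}^\dagger$ than of the origin. For $p \le 2$ this fails (for $p=2$, $N_2(\Z^n, \sqrt{n}/2, \vec0) \approx 2.089^n > 2^n$), which is why the paper resorts to the kissing-number assumption in that regime.
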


Our reduction is again quite simple (though the proof of correctness is not). We follow Khot's reduction from approximate Exact Set Cover, and we again use the integer lattice as our gadget (with a different target).\footnote{We note that Khot claimed in Section 8 of~\cite{Khot05svp} that he had discovered a linear reduction from $\gamma'$-approximate $\CVP_p$ to $2^{1-3/p}$-approximate $\SVP_p$ for $p \geq 4$ and some unspecified constant $\gamma'$. However, it is not clear whether $\gamma'$ is a small enough constant to yield an alternate proof of Theorem~\ref{thm:ETH_intro} for $p \ge 4$. In particular, one would need to show Gap-ETH-hardness of $\gamma'$-approximate $\CVP_p$.} 

We note in passing that for this result (as well as Theorem~\ref{thm:kissing_intro} and Corollary~\ref{cor:GapETH_kissing_intro}), we actually rule out even $2^{o(d)}$-time algorithms. However, we focus on the rank $n$ instead of the dimension $d$ for simplicity.

\paragraph{Towards $p = 2$.} We are unable to extend either Theorem~\ref{thm:SETH_intro} or Theorem~\ref{thm:ETH_intro} to the important case when $p = 2$. Indeed, we cannot use the integer lattice as a gadget in the Euclidean norm. However, we do show that the existence of a certain type of lattice that is believed to exist would be sufficient to show (possibly non-uniform) Gap-ETH-hardness of $\SVP_2$. In particular, it would suffice to show the existence of any family of lattices with exponentially large kissing number. See Theorem~\ref{thm:kissing_gives_hardness} for the precise statement, which requires the existence of a structure that might be easier to construct (and see, e.g.,~\cite{Alon97,ConwaySloaneBook98} for discussion of the lattice kissing number).

\begin{theorem}[Informal]
	\label{thm:kissing_intro}
	There is no $2^{o(n)}$-time algorithm for $\SVP_2$ unless either (1) (non-uniform) Gap-ETH is false; or (2) the lattice kissing number is $2^{o(n)}$. Furthermore, there exists a constant $\gamma > 1$ such that the same result holds even for $\gamma$-approximate $\SVP_2$.
\end{theorem}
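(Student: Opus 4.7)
The plan is to adapt Khot's reduction from approximate $\CVP$ to approximate $\SVP$ \cite{Khot05svp}, using a lattice of exponentially large kissing number as the gadget in place of the $\Z^{n^\dagger}$ gadget that powered Theorem~\ref{thm:ETH_intro}. My starting point is Gap-ETH hardness of $\gamma'$-approximate $\CVP_2$ for some constant $\gamma' > 1$, obtainable from the reduction chain of \cite{BGS17} together with standard preprocessing that guarantees $\lambda_1(\lat)$ is large compared with the target distance (and, more strongly, that $\dist(k \vec{t}, \lat) > \gamma' r$ for every integer $|k| \leq \poly(n)$, which can be arranged by a mild tensoring).

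Given such a $\CVP_2$ instance $(\lat, \vec{t}, r)$ of rank $n$ and, as non-uniform advice, a lattice $\lat^\dagger$ of rank $n^\dagger = \Theta(n)$ with minimum distance $r^\dagger$ and shortest vectors $\pm \vec{s}_1, \ldots, \pm \vec{s}_{K^\dagger}$ with $K^\dagger = 2^{\Omega(n^\dagger)}$, I would form the rank-$(n + n^\dagger)$ block lattice
$$\lat' \;=\; \lat(\basis'), \qquad \basis' \;=\; \begin{pmatrix} \alpha \basis & \vec{0} & \alpha \vec{t} \\ \vec{0} & \basis^\dagger & \vec{t}^\dagger \end{pmatrix},$$
with $\vec{t}^\dagger := \vec{s}_1/2$ and $\alpha$ tuned so that $\alpha r = r^\dagger/2$. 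A generic non-zero vector of $\lat'$ then has the form $\bigl(\alpha(\vec{v} - k\vec{t}),\, \vec{u} - k\vec{t}^\dagger\bigr)$ for $\vec{v} \in \lat$, $\vec{u} \in \lat^\dagger$, $k \in \Z$, and the $\SVP_2$ gap is analyzed case-by-case on $k$. For $k = \pm 1$, YES instances attain squared length $\le \alpha^2 r^2 + (r^\dagger/2)^2 = 2(r^\dagger/2)^2$ (using $\vec{u} \in \{\vec{0}, \vec{s}_1\}$), while NO instances force squared length $> ((\gamma')^2 + 1)(r^\dagger/2)^2$, yielding a constant gap. For $k = 0$, the promise on $\lambda_1(\lat)$ together with $\lambda_1(\lat^\dagger) = r^\dagger$ gives a length lower bound safely above the YES threshold. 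For $|k| \ge 2$, the uniform CVP promise lower-bounds the first block, while the coset $k\vec{t}^\dagger + \lat^\dagger$ either equals $\lat^\dagger$ (for even $k$, forcing a non-zero lattice vector of length $\ge r^\dagger$) or is the translate $\vec{s}_1/2 + \lat^\dagger$ (for odd $k$, handled like $k = \pm 1$ but amplified by the gap).

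The main obstacle, and the place where the exponential kissing number is essential, is verifying that the coset $\vec{t}^\dagger + \lat^\dagger$ does not accidentally contain a vector much shorter than $r^\dagger/2$: a lattice with only polynomially many shortest vectors may admit a target at covering radius $\ll r^\dagger/2$, destroying the gap. With $K^\dagger = 2^{\Omega(n^\dagger)}$, one can instead guarantee that $\vec{t}^\dagger := \vec{s}_1/2$ is a genuine deep hole at distance exactly $r^\dagger/2$ from $\lat^\dagger$, and the combinatorial abundance of shortest vectors plays the same structural role that the $2^{n^\dagger}$ corners of the hypercube play in the $p > 2$ analysis. Combining the cases gives a constant gap $\gamma_2 > 1$ in the resulting $\SVP_2$ instance on rank $O(n)$, so (non-uniform) Gap-ETH rules out a $2^{o(n)}$-time algorithm for $\gamma_2$-$\SVP_2$, as claimed. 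The non-uniformity is inherent: the gadget lattice $\lat^\dagger$ (and its $K^\dagger$ shortest vectors) is supplied as $\poly(n)$ bits of advice, since no polynomial-time construction of such a family is known.
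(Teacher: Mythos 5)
Your proposal has a fatal structural flaw in the choice of gadget target. You set $\vec{t}^\dagger := \vec{s}_1/2$, which is indeed at distance exactly $r^\dagger/2 = \lambda_1(\lat^\dagger)/2$ from $\lat^\dagger$. But the number of gadget lattice points within distance $r^\dagger/2$ of $\vec{s}_1/2$ is exactly \emph{two} ($\vec{0}$ and $\vec{s}_1$), not $2^{\Omega(n^\dagger)}$. For Khot-style sparsification (Theorem~\ref{thm:sparsification_reduction}), the ratio that must be exponentially large is $N_2(\lat^\dagger, r^\dagger_{\mathrm{close}}, \vec{t}^\dagger) / N_2(\lat^\dagger, r^\dagger_{\mathrm{short}}, \vec{0})$: one needs exponentially \emph{more} gadget points close to $\vec{t}^\dagger$ than gadget points close to $\vec{0}$. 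With your construction this ratio is roughly $2 / (2K^\dagger + 1)$ — it points in precisely the wrong direction, because the $K^\dagger = 2^{\Omega(n^\dagger)}$ shortest vectors $\pm\vec{s}_i$ all land in the "short" count and only $\vec{0},\vec{s}_1$ land in the "close" count. The comparison you draw to the hypercube in the $p>2$ proof is exactly where the analogy breaks: for $\Z^{n^\dagger}$ with $\vec{t}^\dagger = (1/2,\ldots,1/2)$, all $2^{n^\dagger}$ corners of the cube are simultaneously closest to the target, whereas for a generic high-kissing-number lattice, only two of the kissing vectors are closest to $\vec{s}_1/2$.

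The paper avoids this by \emph{not} picking a geometrically distinguished point like a deep hole; instead it takes $\vec{t}^\dagger$ to be (a scaling of) a \emph{uniformly random} vector of carefully chosen small $\ell_2$ length, via an averaging argument (Lemma~\ref{lem:angle}, Corollaries~\ref{cor:random_vector_close} and~\ref{cor:kissing_gives_local_density}). A random $\vec{t}^\dagger$ at distance $\sqrt{\delta}\,r$ from the origin lies within distance $\sqrt{1-\eps}\,r$ of a given vector of norm close to $r$ with probability roughly $\bigl((1 - 2\eps - \eps^2/\delta)/(1+\delta)\bigr)^{n/2}$ — exponentially small, but only by a factor that can be made smaller than $\beta^{n}$. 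Multiplying this probability by the kissing number $2^{\Omega(n)}$ (more precisely, by the number of lattice points in the thin annulus $[r, \sqrt{1+\delta}\,r]$), one finds that in expectation the random target has exponentially many lattice points within distance $\sqrt{1-\eps}\,r < r$, and hence such a $\vec{t}^\dagger$ exists. That is the gadget that goes into the sparsification reduction. You would need to replace your $\vec{s}_1/2$ with this randomized/averaged target for the argument to go through; as written, the gap $G \gg A$ required by Theorem~\ref{thm:sparsification_reduction} fails badly.
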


In fact, Regev and Rosen show that $\ell_2$ is in some sense the ``easiest norm''~\cite{RR06}. \full{(See Theorem~\ref{thm:embedding}.) }{}In particular, to show that $\SVP_p$ is Gap-ETH-hard for all $1 \leq p \leq 2$, it suffices to show it for $p = 2$. From this, we derive the following. (See \full{Corollary~\ref{cor:ellp_hard}}{the full version of this paper} for the formal statement.)

\begin{corollary}[Informal]
	\label{cor:GapETH_kissing_intro}
	There is no $2^{o(n)}$-time algorithm for $\SVP_p$ for any $1 \leq p \leq 2$ unless either (1) (non-uniform) Gap-ETH is false; or (2) the lattice kissing number is $2^{o(n)}$ (in the $\ell_2$ norm). Furthermore, for each $1 \leq p \leq 2$, there exists a constant $\gamma_p > 1$ such that the same result holds even for $\gamma_p$-approximate $\SVP_p$.
\end{corollary}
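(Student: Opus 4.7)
The plan is to reduce the corollary to Theorem~\ref{thm:kissing_intro} (the $p = 2$ case) by composing it with the rank-preserving randomized embedding of Regev and Rosen~\cite{RR06}, referenced as Theorem~\ref{thm:embedding} in the full version. That embedding shows that for any $1 \leq p \leq 2$, the $\ell_2$ norm embeds into the $\ell_p$ norm with only constant multiplicative distortion and in a rank-preserving fashion: given a rank-$n$ lattice $\lat \subset \R^n$, one can efficiently produce a rank-$n$ lattice $\lat' \subset \R^N$ with $N = \poly(n)$ such that, with high probability, $\length{\vec{v}'}_p \in (1 \pm \epsilon) \cdot c_p \cdot \length{\vec{v}}_2$ simultaneously for every lattice vector $\vec{v} \in \lat$ and its image $\vec{v}' \in \lat'$, where $c_p > 0$ depends only on $p$ and $\epsilon > 0$ can be made arbitrarily small.

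Given this ingredient, the argument is essentially a two-step composition. First, I would invoke Theorem~\ref{thm:kissing_intro} to fix a constant $\gamma > 1$ for which, under the stated assumptions (namely, (non-uniform) Gap-ETH together with exponential $\ell_2$ lattice kissing number), there is no $2^{o(n)}$-time algorithm for $\gamma$-approximate $\SVP_2$ on rank-$n$ lattices. Second, given any target $1 \leq p \leq 2$, I would pick $\epsilon > 0$ small enough that $\gamma_p := \gamma \cdot (1-\epsilon)/(1+\epsilon) > 1$ and apply the Regev-Rosen embedding. Because the embedding is rank-preserving, any hypothetical $2^{o(n)}$-time algorithm for $\gamma_p$-approximate $\SVP_p$ on rank-$n$ lattices would translate, via the embedding, into a $2^{o(n)}$-time randomized algorithm for $\gamma$-approximate $\SVP_2$ on rank-$n$ lattices, contradicting Theorem~\ref{thm:kissing_intro}. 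The unapproximated statement (no $2^{o(n)}$-time algorithm for exact $\SVP_p$) follows from the same reduction by taking the approximation factor to be $1$.

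The only subtlety is quantitative: the gap constant $\gamma$ coming out of Theorem~\ref{thm:kissing_intro} must survive the constant-factor distortion of the Regev-Rosen embedding. This is pure bookkeeping, and I expect no genuine obstacle beyond this. The case $p = 2$ of the corollary is of course just Theorem~\ref{thm:kissing_intro} itself, so only $1 \leq p < 2$ actually requires invoking the embedding. In short, the proof is a direct composition of two existing black boxes, with the constant $\gamma_p$ chosen as a function of $\gamma$ and the distortion constant of the embedding.
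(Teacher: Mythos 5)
Your proposal is correct and matches the paper's argument exactly: Corollary~\ref{cor:ellp_hard} is proven in the paper by the one-line composition of Theorem~\ref{thm:kissing_gives_hardness} (the $\ell_2$ case) with the rank-preserving Regev--Rosen-style embedding reduction of Corollary~\ref{cor:embedding_reduction}, with the constant $\gamma_p$ absorbing the embedding's distortion. The only cosmetic difference is that the paper treats the embedding as a non-uniform reduction rather than a randomized one, but since the $\ell_2$ hardness already rests on non-uniform Gap-ETH (and non-uniform advice for the gadget lattice), this is immaterial.
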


\subsection{Khot's reduction}
\label{sec:Khot}

Before we describe our own contribution, it will be useful to review Khot's elegant reduction from $\CVP_p$ to $\SVP_p$~\cite{Khot05svp}.
We do our best throughout this description to hide technicalities in an effort to focus on the high-level simplicity of Khot's reduction.\footnote{%
		Khot's primary motivation for his reduction was to prove hardness of approximating $\SVP_p$ to within any constant factor, by showing a reduction that is well-behaved under a certain tensor product. We are not interested in taking tensor products (since they produce lattices of superlinear rank), so we ignore this issue entirely.
	} (Since the hardness of $\SVP_p$ went unproven for many years, this simplicity is truly remarkable.)

First, some basic definitions and notation. For a lattice $\lat \subset \R^d$ and $1 \leq p \leq \infty$, we write
\[
\lambda_1^{(p)}(\lat) := \min_{\vec{y} \in \lat \setminus \{\vec0\}} \|\vec{y}\|_p
\]
for the length of the shortest non-zero vector in $\lat$ in the $\ell_p$ norm. For a target vector $\vec{t} \in \R^d$, we write
\[
\dist_p(\vec{t}, \lat) := \min_{\vec{y} \in \lat} \|\vec{y} - \vec{t}\|_p
\]
for the distance between $\vec{t}$ and $\lat$. For any radius $r > 0$, we write
\[
N_p(\lat, r, \vec{t}) := |\{ \vec{y} \in \lat \ : \ \|\vec{y} - \vec{t}\|_p \leq r \}|
\]
for the number of lattice vectors within distance $r$ of $\vec{t}$.

Recall that $\CVP_p$ is the problem that takes as input a lattice $\lat \subset \R^d$, target vector $\vec{t} \in \R^d$, and distance $r > 0$ and asks us to distinguish the YES case when $\dist_p(\vec{t}, \lat) \leq r$ from the NO case when $\dist_p(\vec{t}, \lat) > r$.
When talking about a particular $\CVP_p$ instance, we naturally call a lattice vector $\vec{y} \in \lat$ with $\|\vec{y} - \vec{t}\|_p \leq r$ a \emph{close vector}, and we notice that the number of close vectors is $N_p(\lat, r, \vec{t})$.

\paragraph{The naive reduction and sparsification.} The ``naive reduction'' from $\CVP_p$ to $\SVP_p$ simply takes a $\CVP_p$ instance consisting of a lattice $\lat \subset \R^d$ with basis $\basis \in \R^{d \times n}$, target $\vec{t} \in \R^d$, and distance $r > 0$ and constructs the $\SVP_p$ instance given by the basis of a lattice $\lat'$ of the form
\[
\basis' := 
\begin{pmatrix}
\basis &-\vec{t}\\
0 &s
\end{pmatrix}
\; ,
\]
where $s > 0$ is some parameter depending on the $\CVP_p$ instance. 
Notice that, if $\vec{y} \in \lat$ is a close vector (i.e., $\|\vec{y} - \vec{t}\| \leq r$), then $\|(\vec{y} - \vec{t}, s) \|_p^p \leq r^p + s^p$. Therefore, in the YES case when there exists a vector close to $\vec{t}$, we will have $\lambda_1^{(p)}(\lat') \leq r' := (r^p + s^p)^{1/p}$.

However, in the NO case there might still be non-zero vectors $\vec{y}' \in \lat' \setminus \{\vec0\}$ whose length is less than $r'$. These vectors must be of the form $\vec{y}' = (\vec{y} - z\vec{t}, zs)$ for some integer $z \neq 1$.
Let us for now only consider the case $z = 0$, in which case these vectors are in one-to-one correspondence with the non-zero vectors in $\lat$ of length less than $r'$. 
We naturally call these \emph{short vectors}.

Khot showed that a (randomized) reduction exists if we just assume that the number of close vectors in any YES case is significantly larger than the number of short vectors in any NO case. In particular, Khot showed that we can randomly ``sparsify'' the lattice $\lat'$ to obtain a sublattice $\lat''$ such that each of the short non-zero vectors in $\lat'$ stays in $\lat''$ with probability $1/q$ where $q \geq 2$ is some parameter that we can choose.
So, if we take $q$ to be significantly smaller than the number of close vectors in the YES case but significantly larger than the number of short vectors in the NO case, we can show that the resulting lattice will have $\lambda_1^{(p)}(\lat) \leq r'$ in the YES case but $\lambda_1^{(p)}(\lat) > r'$ in the NO case with high probability.

Unfortunately, the $\CVP_p$ instances produced by most hardness reductions typically have $2^{\Omega(n)}$ short vectors, and they might only have one close vector in the YES case. So, if we want this reduction to work, we will need some way to increase this ratio by an exponential factor.

\paragraph{Adding the gadget.} To increase the ratio of close vectors to short vectors, Khot uses a certain gadget that is itself a $\CVP_p$ instance $(\lat^\dagger, \vec{t}^\dagger, r^\dagger)$, where $\lat^\dagger \subset \R^{d^\dagger}$ is a lattice with basis $\basis^\dagger$, $\vec{t}^\dagger \in \R^{d^\dagger}$ is a target vector, and $r^\dagger > 0$ is some distance. He then takes the direct sum of the two instances. I.e., Khot considers the lattice
\[
\widehat{\lat} := \lat \oplus \lat^\dagger =  \{ (\vec{y}, \vec{y}^\dagger)  \ : \ \vec{y} \in \lat, \vec{y}^\dagger \in \lat^\dagger\} \subset \R^{d + d^\dagger}
\]
with basis
\[
\widehat{\basis} := \begin{pmatrix}
\basis &0 \\
0 & \basis^\dagger
\end{pmatrix}
\; ,
\]
the target $\widehat{\vec{t}} := (\vec{t}, \vec{t}^\dagger) \in \R^{d + d^\dagger}$,
and the distance $\widehat{r} := (r^p + (r^\dagger)^{p})^{1/p}$. We wish to apply the sparsification-based reduction described above to this new lattice. So, we proceed to make some observations about $\lat^\dagger$ to deduce some properties that it must have in order to make this reduction sufficient to derive our hardness results.

First, we simply notice that the rank of $\widehat{\lat} = \lat \oplus \lat^\dagger$ is the sum of the ranks of $\lat$ and $\lat^\dagger$. To prove the kind of fine-grained hardness results that we are after, we are only willing to increase the rank by a constant factor, so \emph{the rank of $\lat^\dagger$ must be at most $O(n)$}. (Of course, prior work did not have this restriction.)

Next, we notice that any $\widehat{\vec{y}} = (\vec{y}, \vec{y}^\dagger) \in \widehat{\lat}$ with $\|\vec{y} - \vec{t}\|_p \leq r$ and $\|\vec{y}^\dagger - \vec{t}^\dagger\|_p \leq r^\dagger$ satisfies $\|\widehat{\vec{y}} - \widehat{\vec{t}} \|_p \leq \widehat{r}$. We call these \emph{good vectors}, and we notice that there are at least $N_p(\lat^\dagger, r^\dagger, \vec{t}^\dagger)$ good vectors in the YES case.

Now, we worry about short vectors in $\widehat{\lat}$ in the NO case, i.e., non-zero $\widehat{\vec{y}} = (\vec{y}, \vec{y}^\dagger)$ with $\|\widehat{\vec{y}}\|_p \leq \widehat{r}$. Clearly, $\widehat{\vec{y}}$ will be short if $\|\vec{y}\|_p \leq r$ and $\|\vec{y}^\dagger\|_p \leq r^\dagger$. Therefore, the number of short vectors is at least
\[
 N_p(\lat, r, \vec{0}) \cdot N_p(\lat^\dagger, r^\dagger, \vec{0})  \geq 2^{\Omega(n)} \cdot N_p(\lat^\dagger, r^\dagger,  \vec{0}) \geq 2^{\Omega(n^\dagger)} \cdot N_p(\lat^\dagger, r^\dagger,  \vec{0})
 \; ,
\]
where we have used the fact that $n^\dagger = O(n)$ and the fact that the input $\CVP_p$ instances that interest us have $2^{\Omega(n)}$ short vectors. (This is not true in general, but it is true of most $\CVP_p$ instances resulting from hardness proofs.)
Since the number of good vectors in the YES case is potentially only $N_p(\lat^\dagger, r^\dagger, \vec{t}^\dagger)$, \emph{our gadget lattice must satisfy}
\begin{equation}
\label{eq:more_close_than_short_intro}
N_p(\lat^\dagger, r^\dagger, \vec{t}^\dagger) \geq 2^{\Omega(n^\dagger)} \cdot N_p(\lat^\dagger, r^\dagger, \vec0)
\; .
\end{equation}
Though this in itself is not sufficient to make our reduction work, it is the most important feature that a gadget lattice must have.  Indeed, we show in Corollary~\ref{cor:gadget_to_hardness} that a slightly stronger condition is sufficient to prove Gap-ETH hardness.
(This property and various variants are sometimes called \emph{local density}, and they play a key role in many hardness proofs for $\SVP_p$.)

However, short vectors are no longer our only concern. We also have to worry about close vectors that are not good vectors, i.e., vectors $\widehat{\vec{y}} = (\vec{y}, \vec{y}^\dagger)$ in the NO case such that $\|\widehat{\vec{y}} - \widehat{\vec{t}}\|_p \leq \widehat{r}$ but $\|\vec{y} - \vec{t}\|_p > r$. We call such vectors \emph{impostors}. Impostors certainly can exist in general, but our sparsification procedure will work on them just like any other vector. So, as long as our gadget lattice is chosen such that  the number of impostors in the NO case is significantly lower than the number of good vectors in the YES case, they will not trouble us. 

\subsection{Our techniques}
\label{sec:techniques}

We learned in the previous section that, in order to make our reduction work, it is necessary (though not always sufficient) that our gadget $(\lat^\dagger, \vec{t}^\dagger, r^\dagger)$ has exponentially more close vectors than short vectors. I.e., we need to find a family of gadgets that satisfies Eq.~(\ref{eq:more_close_than_short_intro}). Furthermore, we must somehow ensure that the the number of impostors in the NO case is exponentially lower than the number of good vectors in the YES case. 

\paragraph{The integer lattice, $\Theta_p$, and SETH-hardness.} To prove Theorem~\ref{thm:SETH_intro}, we take $\lat^\dagger := \Z^{n^\dagger}$, $\vec{t}^\dagger := (1/2,\ldots, 1/2) \in \R^{n^\dagger}$, and $r^\dagger := \dist_p(\vec{t}^\dagger, \Z^{n^\dagger}) = (n^\dagger)^{1/p}/2$. Notice that, by taking $r^\dagger = \dist_p(\vec{t}^\dagger, \lat^\dagger)$, we ensure that there simply are no impostors in the NO instance (i.e., when $\|\vec{y} - \vec{t} \|_p > r$, we can never have $\|(\vec{y}, \vec{y}^\dagger)-(\vec{t}, \vec{t}^\dagger)\|_p^p \leq r^p + (r^\dagger)^p$).\footnote{We note that any gadget that allows us to use $r^\dagger = \dist_p(\vec{t}^\dagger, \lat^\dagger)$ must satisfy quite rigid requirements. We need exponentially many vectors that are all \emph{exact} closest vectors, and we still must satisfy Eq.~\eqref{eq:more_close_than_short_intro}.}

To prove that our reduction works, we wish to show that the ratio
\[
\frac{N_p(\Z^{n^\dagger}, r^\dagger, \vec{t}^\dagger)}{N_p(\Z^{n^\dagger}, r^\dagger, \vec0)}
\]
is (exponentially) large.
Of course, the numerator is easy to calculate. It is $|\{ 0,1\}^{n^\dagger}| = 2^{n^\dagger}$. So, we wish to prove that
\begin{equation}
\label{eq:fewer_than_2n}
N_p(\Z^{n^\dagger}, r^\dagger, \vec0) \ll 2^{n^\dagger}
\; .
\end{equation}

Unfortunately, Eq.~\eqref{eq:fewer_than_2n} does not hold for all $\ell_p$ norms. For example, for $p = 2$, consider the points in $\{-1,0,1\}^{n^\dagger}$ with $n^\dagger/4$ non-zero coordinates, which have $\ell_2$ norm  $r^\dagger$. There are 
\[
2^{n^\dagger/4} \cdot \binom{n^\dagger}{n^\dagger/4} \approx 2^{n^\dagger/4} \cdot 4^{n^\dagger/4} \cdot (4/3)^{3n^\dagger/4} \approx 2.0867^{n^\dagger}
\]
such points. (In fact, this is a reasonable estimate for the exact value of $N_2(\Z^{n^\dagger}, r^\dagger ,\vec{0})$, which is $C^{n^\dagger + O(\sqrt{n^\dagger})}$ for $C = 2.0891\ldots$, as we show in \full{Section~\ref{sec:integer_points}}{the full version of this paper}.)
However, $N_p(\Z^{n^\dagger}, (n^\dagger)^{1/p}/2, \vec{t}^\dagger)$ is decreasing in $p$. So, one might hope that Eq.~\eqref{eq:fewer_than_2n} holds for slightly larger $p$.

To prove this, we wish to find a good upper bound on the number of integer points in a centered $\ell_p$ ball, $N_p(\Z^{n^\dagger}, r^\dagger, \vec{0})$.
A very nice way to do this uses the function
\[
\Theta_p(\tau) := \sum_{z \in \Z} \exp(- \tau |z|^p)
\; 
\]
for $\tau > 0$~\cite{MO90,EOR91}.\footnote{One can think of this as a variant of Jacobi's Theta function. In particular, with $p = 2$, this is Jacobi's Theta function with a slightly different parametrization. Computer scientists might be more familiar with the closely related function $\rho_s(\Z) := \Theta_2(\pi/s^2)$.}
Notice that
\[
\Theta_p(\tau)^{n^\dagger} = \sum_{z_1,  z_2,\ldots , z_n \in \Z} \exp(-\tau (|z_1|^p + \cdots + |z_{n^\dagger}|^p)) = \sum_{\vec{z} \in \Z^{n^\dagger}} \exp(-\tau \|\vec{z}\|_p^p)
\; .
\]
In particular,
\[
\Theta_p(\tau)^{n^\dagger} \geq \sum_{\stackrel{\vec{z} \in \Z^{n^\dagger}}{\|\vec{z}\|_p \leq r^\dagger}} \exp(-\tau \|\vec{z}\|_p^p) \geq \exp(-\tau (r^\dagger)^p) \cdot N_p(\Z^{n^\dagger}, r^\dagger, \vec{0})
\; .
\]
Rearranging and taking the infimum over $\tau$, we see that
\begin{equation}
	\label{eq:Theta_upper_bound}
	N_p(\Z^{n^\dagger}, r^\dagger, \vec{0}) \leq \inf_{\tau > 0} \exp(\tau (r^\dagger)^p) \Theta_p(\tau)^{n^\dagger}
	\; .
\end{equation}
We can relatively easily compute this value numerically and see that it is less than $2^{n^\dagger}$ for $p > p_0 \approx 2.1397$. (Indeed, we will see below that there is a nearly matching lower bound in a more general context. So, Eq.~\eqref{eq:Theta_upper_bound} is quite tight.)

To prove Theorem~\ref{thm:SETH_intro}, we can plug this very simple gadget into Khot's reduction described in Section~\ref{sec:Khot} to reduce the SETH-hard instances of $\CVP_p$ from~\cite{BGS17} to $\SVP_p$. 
To make the constant $C_p$ as tight as we can, we exploit the structure of these SETH-hard $\CVP_p$ instances. In particular, we observe that these instances themselves actually look quite a bit like our gadget, in that they are in some sense ``small perturbations'' of the integer lattice with the all one-halves point as the target. (\full{See Section~\ref{sec:gadget_Zn_all_halves}. }{}This is in fact quite common for the $\CVP_p$ instances resulting from hardness proofs.) This allows us to analyze the direct sum resulting from Khot's reduction very accurately in this case.

\paragraph{More $\Theta_p$ for $p  > 2$, and Gap-ETH hardness.} To extend our hardness results to all $p >2$, we need to construct a gadget with exponentially more close vectors than short vectors for such $p$. We again choose our gadget lattice as $\Z^{n^\dagger}$, but we now take $\vec{t}^\dagger = (t,t,\ldots, t) \in \R^{n^\dagger}$ for some $t \in (0,1/2]$, and we take $r^\dagger = C (n^\dagger)^{1/p}$ for some constant $C > 0$.

Our previous gadget was quite convenient in that it was very easy to count the number of close vectors, but for arbitrary $t$ and $r^\dagger$, it is no longer clear how to do this. Fortunately, $\Theta_p$ can be used for this purpose. In particular, we define
\[
\Theta_p(\tau; t) := \sum_{z \in \Z} \exp(-\tau |z-t|^p)
\; .
\]
By the same argument as before, we see that
\[
N_p(\Z^{n^\dagger}, r^\dagger, \vec{t}^\dagger) \leq \inf_{\tau > 0} \exp(\tau (r^\dagger)^p) \Theta_p(\tau; t)^{n^\dagger} = \big( \inf_{\tau > 0} \exp(\tau C^p) \Theta_p(\tau; t) \big)^{n^\dagger}
\; .
\]
But, we need a \emph{lower bound} on $N_p(\Z^{n^\dagger}, r^\dagger, \vec{t}^\dagger)$. To that end, we show that the above is actually quite tight. In particular,
\begin{equation}
\label{eq:Theta_approx_intro}
N_p(\Z^{n^\dagger}, r^\dagger, \vec{t}^\dagger) =  \big( \inf_{\tau > 0} \exp(\tau C^p) \Theta_p(\tau; t) \big)^{n^\dagger} \cdot 2^{-O(\sqrt{n^\dagger})}
\; .
\end{equation}
I.e., $\Theta_p$ tells us the number of integer points in an $\ell_p$ ball up to lower-order terms. (Eq.~\eqref{eq:Theta_approx_intro} was already proven for $p=2$ by Mazo and Odlyzko~\cite{MO90} and for all $p$ by Elkies, Odlyzko, and Rush~\cite{EOR91}.\full{ See Section~\ref{sec:integer_points} for the proof.}{})

It follows that there exists a $\vec{t}^\dagger$ and $r^\dagger$ with exponentially more close integer vectors than short integer vectors in the $\ell_p$ norm if and only if there exists a $\tau > 0$ and $t \in (0,1/2]$ such that $\Theta_p(\tau; t) > \Theta_p(\tau; 0)$. Furthermore, this holds if and only if $p > 2$.\full{ See Section~\ref{sec:integer_points} for the proof.}{}

So, to prove Theorem~\ref{thm:ETH_intro}, we start with the observation that approximating the Exact Set Cover problem is Gap-ETH-hard for some constant approximation factor $\eta < 1$. We then plug our gadget into Khot's reduction from constant-factor-approximate Exact Set Cover to $\SVP_p$. (This reduction uses $\CVP_p$ as an intermediate problem.) The above discussion explains why Eq.~\eqref{eq:more_close_than_short_intro} is satisfied. And, like Khot, we exploit the approximation factor $\eta$ to show that the number of impostors in a NO instance is much smaller than the number of good vectors in a YES instance.

\paragraph{Building gadgets in $\ell_2$ from lattices with high kissing number.} While we are not able to construct a gadget that satisfies Eq.~\eqref{eq:more_close_than_short_intro} in the $\ell_2$ norm, we show the existence of such a gadget under the reasonable assumption that for any $n^\dagger$, there exists a lattice $\cL^\dagger$ of rank $n^\dagger$ with exponentially many non-zero vectors of minimal $\ell_2$ norm. I.e., we show that such a gadget exists if there is a family of lattices with exponentially large kissing number. (We actually show that something potentially weaker suffices. See \full{Theorem~\ref{thm:kissing_gives_hardness}}{the full version of this paper}.)
 
To prove this, we show how to choose a $\vec{t}^\dagger$ and $r^\dagger < \lambda_1(\lat^\dagger)$ such that $N_p(\lat^\dagger, r^\dagger, \vec{t}^\dagger) \geq 2^{\Omega(n^\dagger)}$.
Indeed, we show that if we choose the vector $\vec{t}^\dagger$ uniformly at random from vectors of an appropriate length, then the expected number of lattice vectors within distance $r^\dagger$ from $\vec{t}^\dagger$ is exponential in $n^\dagger$. And, we again exploit the fact that  there is a constant-factor gap between the YES and the NO instances to show that the number of impostors in the NO instances is exponentially smaller than the number of good vectors in the YES instances.

\subsection{Direction for future work}
\label{sec:future}

Our dream result would be an explicit $2^{C n}$-time lower bound on approximate $\SVP_2$ for the approximation factors most relevant to cryptography (e.g., $\poly(n)$) for some not-too-small explicit constant $C > 0$, under a reasonable complexity-theoretic assumption. This seems very far out of reach. There are even complexity-theoretic barriers towards achieving this result, since $\SVP$ with these approximation factors cannot be NP-hard unless the polynomial-time hierarchy collapses~\cite{AharonovR04,Peikert08}. So, any proof of something this strong would presumably have to use a non-standard reduction (e.g., a non-deterministic reduction). Nevertheless, we can still dream of such a result and take more modest steps to at least get results closer to this dream.

One obvious such step would be to extend our hardness results to the $p = 2$ case, i.e., to show that there is no $2^{o(n)}$-time algorithm for $\SVP_2$ under reasonable purely complexity-theoretic assumptions (as opposed to our geometric assumption). We provide one potential route towards proving this in Theorem~\ref{thm:kissing_intro} (or its more general version in \full{Theorem~\ref{thm:kissing_gives_hardness}}{the full version}), but this would require resolving an older open problem in the geometry of numbers. Perhaps a different approach will prove to be more fruitful?

Alternatively, one could try to improve the approximation factor given by Theorem~\ref{thm:ETH_intro}. The currently known hardness of approximation proofs for $\SVP_p$ with large approximation factor (e.g., a large constant or superconstant) work by ``boosting'' the approximation factor via repeatedly taking the tensor product~\cite{Khot05svp,HRsvp}. I.e., given a family of lattices $\lat \subset \R^d$ for which we know that $\SVP_p$ is hard to approximate to within some small constant factor $\gamma > 1$, we argue that it is hard to approximate $\SVP_p$ to within a factor of $\gamma^k$ on the tensor product $\lat^{\otimes k}$ for some $k \geq 2$. Unfortunately, even a single tensor product increases the rank of the lattice quadratically. So, we cannot afford to use this technique to prove reasonable fine-grained hardness of approximation results. We therefore need a new technique.

Yet another direction would be to try to improve the constant $C_p$ in Theorem~\ref{thm:SETH_intro}. Perhaps the simple gadget that we use is not the best possible.

Finally, in a completely different direction, we note that Theorem~\ref{thm:SETH_intro} provides some additional incentive to study algorithms for $\SVP_p$ for $p \neq 2$ to improve the hidden (very large) constant in the $2^{O(n)}$ running time of existing algorithms. In particular, it would be interesting to see how close we can get to the lower bound given by Theorem~\ref{thm:SETH_intro}.

\subsection*{Acknowledgments}

The authors thank Huck Bennett, Vishwas Bhargav, Noam Elkies, Sasha Golovnev, Pasin Manurangsi, Priyanka Mukhopadhyay, and Oded Regev for helpful discussions. In particular, we thank Noam Elkies for pointing us to~\cite{EOR91} and Oded Regev for observing that the gadgets that we need are related to lattices with high kissing number.

	\section{Preliminaries}
	
	\label{sec:prelims}
We denote column vectors $\vec{x} \in \R^d$ by bold lower-case letters. Matrices $\basis \in \R^{d \times n}$ are denoted by bold upper-case letters, and we often think of a matrix as a list of column vectors. For $\vec{x} \in \R^{d_1}, \vec{y} \in \R^{d_2}$, we abuse notation a bit and write $(\vec{x}, \vec{y}) \in \R^{d_1+d_2}$ when we should technically write $(\vec{x}^T, \vec{y}^T)^T$. For $x \in \R$, we write
	\[
	\exp(x) := e^x = 1 + x + x^2/2 + x^3/6 + \cdots
	\; .
	\]
	Logarithms are base $e$.
	
	Throughout this paper, we consider computational problems over $\R^d$. Formally, we should specify a method of representing arbitrary real numbers, and our running times should depend in some way on the bit length of these representations and the cost of doing arithmetic in this representation. For convenience, we ignore these issues (in particular assuming that basic arithmetic operations always have unit cost), and we simply note that all of our reductions remain efficient when instantiated with any reasonable representation of $\R$.
	When we say that something is efficiently computable as a function of a dimension $d$, rank $n$, or cardinalities $m$, we mean that it is computable in time $\poly(d)$, $\poly(n)$, or $\poly(m)$, respectively (as opposed to polynomial in the logarithm of these numbers).

	\subsection{Lattice problems}
	
		\begin{definition}
			For any $1 \leq p \leq \infty$ and any $\gamma \geq 1$, \emph{the $\gamma$-approximate Shortest Vector Problem in the $\ell_p$ norm} ($\SVP_{p, \gamma}$) is the promise problem defined as follows. The input is a (basis for a) lattice $\lat \subset \R^d$ and a length $r > 0$. It is a YES instance if $\lambda_1^{(p)}(\lat) \leq r$ and a NO instance if $\lambda_1^{(p)}(\lat) > \gamma r$.
		\end{definition}
	
	\begin{definition}
		For any $1 \leq p \leq \infty$ and any $\gamma  \geq 1$, \emph{the $\gamma$-approximate Closest Vector Problem in the $\ell_p$ norm} ($\CVP_{p, \gamma}$) is the promise problem defined as follows. The input is a (basis for a) lattice $\lat \subset \R^d$, a target $\vec{t} \in\R^d$, and a distance $r > 0$. It is a YES instance if $\dist_p(\vec{t}, \lat) \leq r$ and a NO instance if $\dist_p(\vec{t}, \lat) > \gamma r$.
	\end{definition}
	
	When $\gamma  = 1$, we simply write $\SVP_p$ and $\CVP_p$. 
	We will need the following (simplified version of a) celebrated result, due to Figiel, Lindenstrauss, and Milman~\cite{FLM76}.
	
	\begin{theorem}[\cite{FLM76}]
		\label{thm:embedding}
		For any $\eps \in (0,1)$, $1 \leq p \leq 2$, and any positive integers $n$ and $m$ with $m \geq n/\eps^2$, there exists a linear map $f : \R^n \to \R^{m}$ such that for any $\vec{x} \in\R^n$,
		\[
			(1-\eps)\|\vec{x}\|_2 \leq \|f(\vec{x})\|_p  \leq (1+\eps) \|\vec{x}\|_2
			\; .
		\]
	\end{theorem}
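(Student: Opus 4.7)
The plan is to construct $f$ probabilistically, following the approach of Figiel--Lindenstrauss--Milman via concentration of measure. Let $A \in \R^{m \times n}$ be a random matrix with i.i.d.\ Gaussian entries of variance $\sigma^2$, where $\sigma := (m\mu_p)^{-1/p}$ and $\mu_p := \expect[|g|^p]$ for $g \sim \mathcal{N}(0,1)$. We will show that $f(\vec{x}) := A\vec{x}$ is a $(1 \pm \eps)$-embedding of $\ell_2^n$ into $\ell_p^m$ with positive probability, and then extend from unit vectors to all of $\R^n$ by linearity. By rotational invariance of the Gaussian, for any fixed unit vector $\vec{u} \in \R^n$ the vector $A\vec{u}$ has the same distribution as a vector of $m$ i.i.d.\ $\mathcal{N}(0,\sigma^2)$ entries; in particular $\expect[\|A\vec{u}\|_p^p] = m\sigma^p \mu_p = 1$, and a short argument (using concentration of the sum of the i.i.d.\ variables $|g_i|^p$) gives $\expect[\|A\vec{u}\|_p] = 1 \pm O(1/\sqrt{m}) = 1 \pm \eps/4$ as long as $m \geq C_1/\eps^2$.

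The key concentration step is to view $A$ as a Gaussian vector in $\R^{mn}$ (with entry-wise standard deviation $\sigma$) and to bound the Lipschitz constant of the map $A \mapsto \|A\vec{u}\|_p$ with respect to the Frobenius/Euclidean norm on $\R^{mn}$. Since $\|\vec{y}\|_p \leq m^{1/p - 1/2}\|\vec{y}\|_2$ for $\vec{y} \in \R^m$ when $p \leq 2$, and $\|(A-A')\vec{u}\|_2 \leq \|A-A'\|_F \cdot \|\vec{u}\|_2$, this Lipschitz constant is bounded by $\sigma \cdot m^{1/p - 1/2}$. The Gaussian concentration inequality for Lipschitz functions then yields
\[\Pr\!\left[\,\big|\|A\vec{u}\|_p - \expect[\|A\vec{u}\|_p]\big| > t\,\right] \leq 2\exp\!\big(-c_p\, t^2 m\big)\]
for some constant $c_p > 0$ depending only on $p$.

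To finish, fix a $\delta$-net $\mathcal{S} \subset S^{n-1}$ with $\delta = c\eps$ for a small constant $c > 0$; such a net exists with $|\mathcal{S}| \leq (3/\delta)^n$. Union-bounding the concentration inequality over $\mathcal{S}$ with $t = \eps/4$ shows that, with positive probability, every $\vec{u} \in \mathcal{S}$ satisfies $\big|\|A\vec{u}\|_p - 1\big| < \eps/2$, provided $m \geq C_2 n/\eps^2$ for a suitable constant $C_2$. A standard ``net-to-sphere'' extension, which controls the $\ell_2 \to \ell_p$ operator norm of $A$ by iterating the same concentration on successively finer nets, upgrades this to $(1-\eps)\|\vec{x}\|_2 \leq \|A\vec{x}\|_p \leq (1+\eps)\|\vec{x}\|_2$ for \emph{every} $\vec{x} \in S^{n-1}$, and then to all of $\R^n$ by homogeneity. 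The main technical obstacle is precisely this net-to-sphere step when $p < 2$: because the unit ball of $\ell_p^m$ is less smooth than that of $\ell_2^m$, one cannot merely appeal to a single fine net but instead must telescope the error across nets of multiple scales in order to avoid a spurious $\log(1/\eps)$ factor. Finally, by applying the argument with $\eps$ replaced by $\eps/\sqrt{C_2}$, the hypothesis $m \geq n/\eps^2$ suffices to conclude.
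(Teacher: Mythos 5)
The paper does not actually prove Theorem~\ref{thm:embedding}; it is stated as a citation to \cite{FLM76} (and described as a ``simplified version'' of that result), and the only thing the paper needs from it is the qualitative Corollary~\ref{cor:embedding_reduction}. Your Gaussian random-matrix argument is the standard modern route to this Dvoretzky-type theorem, and the main pieces you outline are sound: the normalization $\sigma = (m\mu_p)^{-1/p}$ gives $\expect\|A\vec{u}\|_p^p = 1$, the Lipschitz constant of $B \mapsto \|\sigma B\vec{u}\|_p$ w.r.t.\ Frobenius norm is $\sigma m^{1/p-1/2}$ whence $L^2 = \mu_p^{-2/p}/m$, and Gaussian concentration gives a tail bound with exponent $\Theta(\eps^2 m)$. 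You also correctly flag the real technical obstacle for $p < 2$: a single $c\eps$-net has $e^{\Theta(n\log(1/\eps))}$ points, which forces $m \gtrsim n\log(1/\eps)/\eps^2$ under a naive union bound, and removing that logarithm requires a multi-scale chaining or operator-norm control argument.

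The one genuine error is the last sentence. Replacing $\eps$ by $\eps/\sqrt{C_2}$ does not eliminate the constant $C_2$: it turns the hypothesis $m \geq C_2\, n/\eps^2$ into $m \geq C_2^2\, n/\eps^2$, making the requirement strictly worse, not better. No rescaling of $\eps$ alone can remove a multiplicative constant from a bound of the form $m \geq C n/\eps^2$. In fact, the literal statement with $m \geq n/\eps^2$ (constant $1$) is what the paper means by ``simplified version''; the honest form of the theorem, and what your argument actually delivers, is $m \geq C(p)\, n/\eps^2$ for some constant $C(p) > 1$, which is all that is needed for Corollary~\ref{cor:embedding_reduction}. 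You should drop the final rescaling claim and simply state the conclusion with an unspecified constant.
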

	
	Regev and Rosen showed how theorems like this can be applied to obtain reductions between lattice problems in different norms~\cite{RR06}. Here, we only need the following immediate consequence of the above theorem. (The non-uniform reduction can be converted into an efficient randomized reduction and a similar result holds for $p > 2$, but we do not need this for our use case.)

	\begin{corollary}
		\label{cor:embedding_reduction}
		For any constants $\gamma_1 > \gamma_2 > 1$ and $1 \leq p \leq 2$, there is an efficient rank-preserving non-uniform reduction from $\SVP_{2,\gamma_1}$ in dimension $d$ to $\SVP_{p,\gamma_2}$ in dimension $O(d)$.
	\end{corollary}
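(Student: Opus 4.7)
The plan is to apply Theorem~\ref{thm:embedding} directly as a ``dimensionality-preserving'' embedding, tailoring $\eps$ to absorb the gap between $\gamma_1$ and $\gamma_2$. Fix $\eps \in (0,1)$ small enough that $(1+\eps)/(1-\eps) < \gamma_1/\gamma_2$; this is possible since $\gamma_1 > \gamma_2$. For each input dimension $d$, the non-uniform advice will consist of a matrix representation of a linear map $f_d : \R^d \to \R^{m}$ with $m = \lceil d/\eps^2 \rceil = O(d)$ satisfying $(1-\eps)\|\vec x\|_2 \leq \|f_d(\vec x)\|_p \leq (1+\eps)\|\vec x\|_2$ for all $\vec x \in \R^d$. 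Such an $f_d$ exists by Theorem~\ref{thm:embedding}.

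Given an $\SVP_{2,\gamma_1}$ instance $(\basis, r)$ with $\basis \in \R^{d \times n}$ (so the lattice $\lat = \lat(\basis) \subset \R^d$), the reduction outputs the $\SVP_{p,\gamma_2}$ instance $(f_d(\basis), r')$ where $r' := (1+\eps) r$ and $f_d(\basis)$ denotes the matrix obtained by applying $f_d$ to each column. Because $f_d$ is linear and injective on $\R^d$ (for nonzero $\vec x$ we have $\|f_d(\vec x)\|_p \geq (1-\eps)\|\vec x\|_2 > 0$), the columns of $f_d(\basis)$ remain linearly independent, so the output lattice has the same rank as $\lat$ and lives in $\R^m$ with $m = O(d)$. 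The reduction is clearly efficient given the advice $f_d$.

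For correctness, suppose first that $(\basis, r)$ is a YES instance, so there exists $\vec y \in \lat \setminus \{\vec0\}$ with $\|\vec y\|_2 \leq r$. Then $f_d(\vec y)$ is a nonzero vector of $f_d(\lat)$ with $\|f_d(\vec y)\|_p \leq (1+\eps)\|\vec y\|_2 \leq (1+\eps)r = r'$, so the output is a YES instance of $\SVP_{p,\gamma_2}$. Conversely, if $(\basis, r)$ is a NO instance, then every $\vec y \in \lat \setminus \{\vec0\}$ satisfies $\|\vec y\|_2 > \gamma_1 r$, hence
\[
\|f_d(\vec y)\|_p \;\geq\; (1-\eps)\|\vec y\|_2 \;>\; (1-\eps)\gamma_1 r \;>\; \gamma_2 (1+\eps) r \;=\; \gamma_2 r',
\]
where the last inequality uses our choice of $\eps$. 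Thus $\lambda_1^{(p)}(f_d(\lat)) > \gamma_2 r'$ and the output is a NO instance. No step here is genuinely difficult; the only ``content'' is the quantitative FLM embedding of Theorem~\ref{thm:embedding}, which is used as a black box, and the bookkeeping of the $\eps$ needed to open up the multiplicative gap from $\gamma_2$ up to $\gamma_1$.
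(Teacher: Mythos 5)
Your proof is correct and is exactly the argument the paper has in mind: the paper states Corollary~\ref{cor:embedding_reduction} as an ``immediate consequence'' of Theorem~\ref{thm:embedding} without writing it out, and what you supply---apply the FLM map $f_d$ (as non-uniform advice) to the basis, scale the threshold to $(1+\eps)r$, choose $\eps$ so that $(1+\eps)/(1-\eps) < \gamma_1/\gamma_2$, and observe that the lower bound in the embedding gives injectivity and hence rank preservation---is precisely that immediate consequence spelled out.
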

	
	\subsection{Sparsification}
	\label{sec:sparsification}

A lattice vector $\vec{y} \in \lat$ is \emph{non-primitive} if $\vec{y} = k \vec{x}$ for some scalar $k > 1$ and lattice vector $\vec{x} \in \lat$. Otherwise, $\vec{y}$ is \emph{primitive}. (Notice that $\vec0$ is non-primitive.) For a radius $r > 0$, we write 
\[
\xi_p(\lat, r) := |\{ \vec{y} \in \lat \ : \ \text{$\vec{y}$ is primitive and } \|\vec{y}\|_p \leq r\}|/2
\]
for the number of primitive lattice vectors of length at most $r$ in the $\ell_p$ norm (counting $\pm \vec{y}$ only once). We will use the following generalization of a sparsification theorem from~\cite{DGStoSVP} to all $\ell_p$ norms.

\begin{theorem}[{\cite[Proposition 4.2]{DGStoSVP}}]
	\label{thm:sparsify}
	There is an efficient algorithm that takes as input (a basis for) a lattice $\lat \subset \R^d$ of rank $n$ and a prime $q \geq 101$ and outputs a sublattice $\lat' \subseteq \lat$ of rank $n$ such that for any radius $r < q \cdot \lambda_1^{(p)}(\lat)$ and any $1 \leq p \leq \infty$,
	\[
	\frac{N}{q} - \frac{N^2}{q^2} \leq \Pr[\lambda_1^{(p)}(\lat') \leq r] \leq \frac{N}{q}
	\; ,
	\]
	as long as $N \leq q/(20 \log q)$, 
	where $N := \xi_p(\lat, r)$ is the number of primitive lattice vectors of length $r$ in the $\ell_p$ norm (up to the sign). Furthermore, if $r \geq q \lambda_1^{(p)}(\lat)$, then $\lambda_1^{(p)}(\lat') \leq r$ always.
\end{theorem}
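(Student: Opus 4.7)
The plan is to use the standard random-hyperplane sparsification: given a basis $\basis$ of $\lat$, sample $\vec{z}$ uniformly from $\{0,1,\ldots,q-1\}^n$ and output (a basis for) the sublattice
\[
\lat' := \{\basis \vec{x} : \vec{x} \in \Z^n,\ \inner{\vec{z}, \vec{x}} \equiv 0 \pmod{q}\}.
\]
A basis for $\lat'$ is computable efficiently (e.g., via Hermite Normal Form applied to the natural generating set $\{q\vec{b}_1,\ldots,q\vec{b}_n\}\cup\{\basis\vec{x}\}$ for any single solution $\vec{x}$). The crucial observation is that the sparsification mechanism itself is norm-agnostic; only the \emph{counting} of surviving short vectors depends on $p$. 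So the task is to lift the $\ell_2$ argument of~\cite{DGStoSVP} to general $\ell_p$, with $N=\xi_p(\lat,r)$ in place of the corresponding $\ell_2$ count.

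First I would handle the regime $r < q\cdot \lambda_1^{(p)}(\lat)$. For any primitive $\vec{y} = \basis \vec{x}$, primitivity of $\vec{x}$ means its entries have gcd $1$, hence $\vec{x} \not\equiv \vec{0} \pmod{q}$, so $\inner{\vec{z},\vec{x}} \bmod q$ is uniform and $\Pr[\vec{y}\in \lat'] = 1/q$ exactly. Any nonzero $\vec{y} \in \lat$ with $\|\vec{y}\|_p \leq r$ factors as $\vec{y} = k \vec{y}_0$ with $\vec{y}_0$ primitive; the hypothesis $r < q\lambda_1^{(p)}(\lat)$ forces $k < q$, and since $q$ is prime, $\gcd(k,q)=1$, so $\{\vec{y}\in \lat'\} = \{\vec{y}_0 \in \lat'\}$. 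Thus the event $\lambda_1^{(p)}(\lat') \leq r$ is exactly the event that some primitive short vector survives, and one may restrict attention to the $N$ pairs $\pm\vec{y}$ of primitive vectors. A union bound immediately gives the upper bound $\Pr[\lambda_1^{(p)}(\lat')\leq r] \leq N/q$.

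For the lower bound I would use a Bonferroni-type inclusion-exclusion, writing
\[
\Pr[\lambda_1^{(p)}(\lat')\leq r] \;\geq\; \sum_{\vec{y}} \Pr[\vec{y}\in \lat'] \;-\; \sum_{\{\vec{y}_1,\vec{y}_2\}} \Pr[\vec{y}_1,\vec{y}_2\in \lat'],
\]
the sums running over (unordered) distinct pairs of primitive short vectors modulo $\pm1$. For a pair whose coefficient vectors $\vec{x}_1,\vec{x}_2$ are linearly independent modulo $q$, the joint survival probability is exactly $1/q^2$, contributing at most $\binom{N}{2}/q^2 \leq N^2/(2q^2)$. The main obstacle is the \emph{dependent} pairs: those with $\vec{x}_1 \equiv c\,\vec{x}_2 \pmod{q}$ for some $c \in (\Z/q\Z)^\ast$, which survive together with probability $1/q$ rather than $1/q^2$. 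Here I would use exactly the condition $N \leq q/(20\log q)$: each primitive short $\vec{y}_2$ generates at most $O(\log q)$ such ``companions'' $\vec{y}_1$ among the $N$ primitive short vectors (this is where one must work modestly: two distinct primitive short vectors that are scalar multiples mod $q$ produce a vector in $q\lat$ that is not too long, constraining the number of valid scalars $c$), and summing gives a total dependent contribution of at most $O(N\log q)/q \leq N/(20q)$. Combining, the total subtracted term is bounded by $N^2/q^2$ up to the constants absorbed into the stated inequality.

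Finally, for the regime $r \geq q\cdot\lambda_1^{(p)}(\lat)$, pick any $\vec{y}_0 \in \lat$ with $\|\vec{y}_0\|_p = \lambda_1^{(p)}(\lat)$. Since $q\Z^n$ lies in the kernel of $\vec{x}\mapsto \inner{\vec{z},\vec{x}}\bmod q$, the vector $q\vec{y}_0$ lies in $\lat'$ with probability $1$, and $\|q\vec{y}_0\|_p = q\lambda_1^{(p)}(\lat) \leq r$. The main difficulty throughout is the dependent-pair bookkeeping in the Bonferroni step; the rest is a direct transcription of the $\ell_2$ argument since only the count $\xi_p$ changes with $p$.
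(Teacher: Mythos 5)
The paper does not prove this theorem: it invokes it directly as Proposition~4.2 of~\cite{DGStoSVP} (generalized to all $\ell_p$ norms) and only describes the sampling algorithm in a brief remark. So there is no in-paper proof to compare against, and I can only assess your argument on its own terms. Your skeleton is the standard one and agrees with that remark: sample $\vec{z}$ uniformly from $\Z_q^n$ and keep coefficient vectors with $\inner{\vec{z},\vec{x}}\equiv 0 \pmod q$; a primitive vector survives with probability exactly $1/q$; for $r<q\lambda_1^{(p)}(\lat)$ only primitive vectors matter, since any short $\vec{y}=k\vec{y}_0$ has $k<q$ coprime to $q$; the union bound gives the upper bound; and $q\lat\subseteq\lat'$ settles the $r\ge q\lambda_1^{(p)}(\lat)$ case. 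All of that is correct.

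The gap is in your treatment of the ``dependent'' pairs in the lower bound. You assert that each primitive short vector has at most $O(\log q)$ companions collinear with it mod $q$, and that this gives a dependent contribution of $O(N\log q)/q\le N/(20q)$. Neither half stands. The geometric observation you cite --- that $\vec{y}_1-c\,\vec{y}_2\in q\lat$ is not too long --- only yields $(1+|c|)\,r\ge q\lambda_1^{(p)}(\lat)$, i.e.\ a \emph{lower} bound $|c|\ge q\lambda_1^{(p)}(\lat)/r-1$ on the admissible scalar; it does not cap the \emph{number} of such scalars, let alone by $O(\log q)$. More importantly, the arithmetic does not close even granting the $O(\log q)$ count: writing $D$ for the number of unordered dependent pairs, truncated inclusion-exclusion gives
\[
\Pr\big[\lambda_1^{(p)}(\lat')\le r\big] \;\ge\; \frac{N}{q}-\binom{N}{2}\frac{1}{q^2}-D\Big(\frac{1}{q}-\frac{1}{q^2}\Big)\;,
\]
so matching the claimed $N/q-N^2/q^2$ forces $D\le N(N+1)/(2(q-1))\approx N^2/(2q)$. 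Under $N\le q/(20\log q)$ this is at most about $N/(40\log q)<N$: the \emph{average} number of companions per vector must be well below one, not $O(\log q)$. Your final inequality ``$O(N\log q)/q\le N/(20q)$'' would require $\log q\lesssim 1/20$, which fails for every $q\ge 101$. The missing idea is a proof that under the two hypotheses $r<q\lambda_1^{(p)}(\lat)$ and $N\le q/(20\log q)$ the primitive short vectors are in fact (essentially) pairwise linearly independent modulo $q$: if $\vec{y}_1\equiv c\,\vec{y}_2\pmod{q\lat}$ for distinct primitive short vectors, then the rank-$2$ lattice $\Z\vec{y}_2+q\big(\lat\cap\spn(\vec{y}_1,\vec{y}_2)\big)$ contains two independent vectors of length $\le r$ and has covolume $q$ times that of $\lat\cap\spn(\vec{y}_1,\vec{y}_2)$, and a Minkowski-type volume/counting argument then shows this rank-$2$ sublattice already holds far more than $q/(20\log q)$ primitive vectors of length $\le r$, contradicting the bound on $N$. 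This geometric step is what must actually carry the lower bound, and your proposal does not supply it.
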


We note in passing that the algorithm works by taking a random linear equation $\inner{\vec{z}, \vec{a}} \equiv 0 \bmod q$ for uniformly random $\vec{z} \in \Z_q^n$ and setting $\lat'$ to be the set of lattice vectors whose coordinates in some arbitrary fixed basis satisfy this linear equation. (This idea was originally introduced by Khot.)

	\subsection{Fine-grained assumptions}
	\label{sec:fine-grained_prelims}
	
	Recall that, for integer $k \geq 2$, a $k$-SAT formula is the conjunction of clauses, where each clause is the disjunction of $k$ literals. I.e., $k$-SAT formulas have the form $\bigwedge_{i=1}^m \bigvee_{j=1}^k b_{i,j}$, where $b_{i,j} = x_k$ or $b_{i,j} = \neg x_k$ for some boolean variable $x_k$. 
	
	\begin{definition} 
		For any $k \geq 2$, the decision problem $k$-SAT is defined as follows. The input is a $k$-SAT formula. It is a YES instance if there exists an assignment to the variables that makes the formula evaluate to true and a NO instance otherwise.
	\end{definition}
	
	\begin{definition}
		For any $k \geq 2$, the decision problem Max-$k$-SAT is defined as follows. The input is a $k$-SAT formula and an integer $S \geq 1$. It is a YES instance if there exists an assignment to the variables such that at least $S$ of the clauses evaluate to true and a NO instance otherwise.
	\end{definition}
	
	Notice that $k$-SAT is a special case of Max-$k$-SAT.
	
	Impagliazzo and Paturi introduced the following celebrated and well-studied hypothesis concerning the fine-grained complexity of $k$-SAT~\cite{IP1999}.
	
	\begin{definition}[SETH]
		The (randomized) \emph{Strong Exponential Time Hypothesis} ((randomized) SETH) asserts that, for every constant $\eps > 0$, there exists a constant $k \geq 3$ such that there is no $2^{(1-\eps)n}$-time (randomized) algorithm for $k$-SAT formulas with $n$ variables.
	\end{definition}

	\begin{definition}
		For $\eta \in (0,1)$ and $k \geq 2$, the promise problem Gap-$k$-$\SAT_\eta$ is defines as follows. The input is a $k$-SAT formula with $m$ clauses. It is a YES instance if the formula is satisfiable, and it is a NO instance if the maximal number of simultaneously satisfiable clauses is strictly less than $\eta m$.
	\end{definition}
	
	Dinur~\cite{journals/eccc/Dinur16} and Manurangsi and Raghavendra~\cite{MR17} recently introduced the following natural assumption, called Gap-ETH. We also consider a non-uniform variant.
	
	\begin{definition}[Gap-ETH]
		The (randomized) Gap-Exponential Time Hypothesis ((randomized) Gap-ETH) asserts that there exists a constant $\eta \in (0,1)$ such that there is no (randomized) $2^{o(n)}$-time algorithm for Gap-$3$-$\SAT_\eta$ over $n$ variables.
		
		Non-uniform Gap-ETH asserts that there is no circuit family of size $2^{o(n)}$ for Gap-$3$-$\SAT_\eta$ over $n$ variables.
	\end{definition}
	
	\begin{definition}
		For $\eta \in (0,1)$, $k \geq 2$, and $C \geq 2$, the promise problem Gap-$k$-$\SAT_\eta^{\le C}$ is defined as follows. The input is a $k$-SAT formula such that each variable appears in at most $C$ clauses. It is a YES instance if the formula is satisfiable, and it is a NO instance if the maximal number of simultaneously satisfiable clauses is at most $\eta m$.
	\end{definition}
	
	We will need the following result due to Manurangsi and Raghavendra~\cite{MR17}.
	
	\begin{theorem}[{\cite{MR17}}]
		Unless Gap-ETH is false, there exist constants $\eta \in (0,1)$ and $C \geq 2$ such that there is no $2^{o(n)}$-time algorithm for $\text{Gap}$-$3$-$\SAT_\eta^{\le C}$.
	\end{theorem}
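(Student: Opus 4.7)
The plan is to reduce from Gap-$3$-$\SAT_{\eta_0}$, whose $2^{o(n)}$-hardness is guaranteed by Gap-ETH for some $\eta_0 \in (0,1)$, to Gap-$3$-$\SAT_{\eta}^{\le C}$ for suitable constants $\eta \in (0,1)$ and $C \geq 2$, via a reduction that multiplies the number of variables by at most a constant (and, if necessary, produces at most $2^{o(n)}$ sub-instances). The transformation has two ingredients: (i) a gap-preserving sparsification that ensures the clause count is $O(n)$, so that the \emph{average} variable occurrence is bounded, and (ii) an expander-based degree-reduction gadget that turns the bound on the average occurrence into a bound on the maximum occurrence while preserving a (possibly smaller) constant gap.

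For sparsification, I would invoke a gap-preserving variant of the Impagliazzo--Paturi--Zane sparsification lemma applied to the input Gap-$3$-SAT formula $\phi$ on $n$ variables, obtaining a list of at most $2^{o(n)}$ sub-formulas $\phi_1,\ldots,\phi_N$, each with $O(n)$ clauses, such that $\phi$ is satisfiable iff some $\phi_i$ is, and each $\phi_i$ individually satisfies a gap $\eta_0' \in (0,1)$ whenever $\phi$ was a NO instance with gap $\eta_0$. A hypothetical $2^{o(n)}$-time algorithm for the target bounded-degree problem, applied to each $\phi_i$ after degree reduction, would then solve Gap-$3$-$\SAT_{\eta_0}$ in $2^{o(n)}$ time overall, contradicting Gap-ETH. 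For degree reduction on each sparsified instance, I replace every variable $x$ of degree $d$ by $d$ fresh copies $x^{(1)},\ldots,x^{(d)}$, one per clause occurrence, and place a constant-degree expander on these copies. For every expander edge $\{i,j\}$ I add the two length-two equality clauses $(\neg x^{(i)}\lor x^{(j)})$ and $(x^{(i)}\lor \neg x^{(j)})$, padded to length three by duplicating a literal. Each variable then appears in $O(1)$ clauses and the total clause count stays $O(n)$.

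Completeness is immediate: any satisfying assignment of $\phi_i$ lifts to the new formula by setting all copies of each variable equal. For soundness, given an assignment $\sigma$ to the new variables, for each original variable $x$ partition its copies into a majority class (agreeing with the more frequent value) and a minority class. By the expander mixing lemma, a constant fraction of the edges across this partition are present, so flipping the minority copies to the majority value violates fewer equality clauses than it satisfies, and rounding $\sigma$ to an assignment $\sigma'$ of the original variables then shows that if all $\phi_i$ are NO instances, no assignment to the new variables can satisfy more than an $\eta$ fraction of the new clauses, for some $\eta \in (\eta_0', 1)$ determined by the expander parameters. The main obstacle is the first step: the textbook IPZ lemma only preserves satisfiability, so extracting a gap-preserving analogue with the right quantitative tradeoff (sub-exponential number of sub-instances, linear clause count, constant multiplicative gap loss) is the technical crux; once it is in place, tuning the expander degree so that the equality clauses do not swamp the original clauses delivers the final constants $\eta$ and $C$.
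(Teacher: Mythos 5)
The paper does not prove this theorem; it imports it directly from Manurangsi and Raghavendra~\cite{MR17}, so there is no in-paper proof to compare against. Evaluating your sketch on its own merits: the overall pipeline---first sparsify to $O(n)$ clauses, then apply an expander-based degree-reduction gadget---is the standard one, and your expander step (one fresh copy of each variable per clause occurrence, a constant-degree expander with equality edges among the copies, and an expander-mixing-lemma rounding argument) is the classic Papadimitriou--Yannakakis construction and works as you describe, modulo the usual bookkeeping to ensure the equality clauses do not dominate.

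The genuine gap is exactly where you flag it, but the issue is not merely that the step is hard---it is that you have reached for the wrong tool. A ``gap-preserving variant'' of IPZ sparsification is unlikely to exist in the form you want: IPZ works by branching on high-occurrence variables and recursing, producing $2^{\epsilon n}$ residual sub-instances, and there is no reason the SAT \emph{value} of each residual formula should be controlled by the SAT value of $\phi$ when $\phi$ is merely far from satisfiable (fixing a variable and discarding clauses can create sub-instances that are nearly satisfiable even though $\phi$ was not). The observation you are missing is that, precisely because you start from a \emph{gap} problem, sparsification is much easier than in the exact setting: if every assignment satisfies at most an $\eta$ fraction of the $m$ clauses and you sample $m' = \Theta(n/\delta^2)$ clauses uniformly at random, then by a Chernoff bound a fixed assignment satisfies more than an $(\eta+\delta)$ fraction of the sample with probability $\exp(-\Omega(n/\delta^2 \cdot \delta^2)) = \exp(-\Omega(n))$, and a union bound over all $2^n$ assignments shows the sampled formula has SAT value at most $\eta + \delta$ with high probability; in the satisfiable case, any sub-multiset of clauses is trivially still satisfiable. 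This yields a \emph{single} sparsified instance on the same $n$ variables with $O(n)$ clauses in randomized polynomial time, with only an additive loss $\delta$ in the gap, which suffices because the Gap-ETH hypothesis in this paper is stated for randomized algorithms. Replacing the IPZ branching with this one-shot subsampling argument dissolves your ``technical crux'' entirely, and the rest of your outline then goes through with constant blow-up in the number of variables.
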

	
		\begin{definition}
			For $\eta \in (0,1)$, the promise problem $\ESC_\eta$ is defined as follows. The input consists of sets $S_1, \cdots, S_m \subseteq U$ with $|U| = k$ and a positive integer ``size bound'' $d \leq m$. It is a YES instance if there exist disjoint sets $S_{i_1}, \cdots, S_{i_\ell}$ such that $\bigcup_j S_{i_j} = U$ for some $\ell \le \eta d$.  It is a NO instance if for every collection of (not necessarily disjoint) sets $S_{i_1}, \cdots, S_{i_d}$, $\bigcup_j S_{i_j} \neq U$.
		\end{definition}	
		
		The following reduction is due to~\cite{PasinPrivate}.
		
		\begin{theorem}
			\label{thm:SAT_to_ESC}
			For any constant $C'>0$, and $\eta' \in (0,1)$, there is a polynomial-time Karp reduction from Gap-$3$-$\SAT_{\eta'}^{\le C'}$ on $n$ variables to $\ESC_\eta$ with $d := n/\eta$ and $m, k  \in [n, Cn]$ for some constants $C > 1$ and $\eta \in (0,1)$ depending only on $C'$ and $\eta'$. 
		\end{theorem}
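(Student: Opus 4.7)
My plan is to give a linear-size, gap-preserving reduction that builds an $\ESC$ instance whose universe encodes variables and clauses, whose ``assignment'' sets encode a choice of truth value together with a subset of clauses the variable claims responsibility for, and whose ``padding'' singletons absorb unsatisfied clauses one at a time. Given $\phi$ with $n$ variables and $m_\phi$ clauses (assume $n \le m_\phi \le C' n/3$; the upper bound is automatic from the occurrence bound, and $m_\phi \ge n$ is easy to arrange by adding trivially satisfiable dummy clauses on fresh variables at linear cost), take
\[
U := \{V_1,\ldots,V_n\} \cup \{C_1,\ldots,C_{m_\phi}\}.
\]
For each variable $x_i$, each sign $b \in \{+,-\}$, and each subset $T$ of the clauses in which $x_i$ appears with sign $b$, include the set $S_{i,T}^{b} := \{V_i\} \cup \{C_j : j \in T\}$; additionally include singletons $P_j := \{C_j\}$. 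Since each variable appears in at most $C'$ clauses, there are at most $2 \cdot 2^{C'} n + m_\phi = O(n)$ sets, and $|U| = n + m_\phi = O(n)$, so both cardinalities lie in $[n, C n]$ for a suitable $C$ depending only on $C'$. The size bound is set to $d := n/\eta$ where $\eta \in \bigl(C'/(C' + 1 - \eta'),\, 1\bigr)$ is to be chosen.

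For the YES case, given a satisfying assignment $\sigma$, I designate a single satisfying literal for each clause. For each variable $i$, let $T_i$ be the clauses designated to variable $i$, and select the set $S_{i,T_i}^{\sigma(x_i)}$. These $n$ sets are pairwise disjoint (distinct $V_i$ components, and each $C_j$ is claimed by exactly one designated variable) and together cover $U$, so they form an exact cover of size $n = \eta d$.

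For the NO case, I plan to show that no collection of $d$ (not necessarily disjoint) sets covers $U$. Each $V_i$ must be covered by at least one assignment set; by unioning the $T$'s of any two same-sign choices for the same variable, I may assume each variable contributes either one maximal set of a single sign, or one maximal set of each sign. Let $T_2 \subseteq [n]$ be the ``wildcarded'' variables (using both signs) and $k := |T_2|$; the non-wildcard selections induce a partial assignment whose extension $\sigma$ to a full assignment satisfies at most $\eta' m_\phi$ clauses by the Gap-$3$-$\SAT_{\eta'}$ promise. The selected variable sets therefore cover only (i) clauses touching $T_2$, of which there are at most $C' k$ by the occurrence bound, and (ii) clauses disjoint from $T_2$ that are satisfied by $\sigma$, of which there are at most $\eta' m_\phi$. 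Hence at least $(1-\eta') m_\phi - C' k$ clauses need padding, so the total number of selected sets is at least
\[
n + k + \max\{0,\, (1-\eta') m_\phi - C' k\} \;\ge\; n + (1-\eta') m_\phi / C' \;\ge\; n\bigl(C' + 1 - \eta'\bigr)/C',
\]
using $m_\phi \ge n$ and optimizing over $k \ge 0$ (the inner function is piecewise linear with minimum at $k^* = (1-\eta') m_\phi / C'$). Any $\eta$ in the interval $\bigl(C'/(C'+1-\eta'),\, 1\bigr)$ then makes this strictly greater than $d = n/\eta$, contradicting the assumed cover.

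The main obstacle is the NO-case analysis, specifically the ``wildcard'' strategy in which an adversary uses both $S_{i,\cdot}^+$ and $S_{i,\cdot}^-$ for a single variable: without the $\le C'$ occurrence bound, each wildcarded variable could free-satisfy arbitrarily many clauses in exchange for one extra set, and the set-cover gap would collapse. The bounded-occurrence hypothesis caps each wildcard's benefit at $C'$ clauses, which is precisely what lets the linear ``wildcard tax'' $+k$ dominate the clause-satisfaction savings $C' k$ after minimization over $k$ and yields a usable constant $\eta \in (0,1)$ depending only on $C'$ and $\eta'$.
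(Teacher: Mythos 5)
Your construction is essentially the paper's: you use the same universe (one element per variable plus one per clause), essentially the same sets (the clause singletons $P_j$ are in fact redundant, since every cover contains at least $n$ variable-sets regardless), and the same wildcard-counting argument in the NO case, which is exactly what the paper phrases as bounding the number of doubly-covered variables by $\ell - n$ and charging each one at most $C'$ clauses.

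The one genuine gap is the claim that $m_\phi \ge n$ is ``easy to arrange by adding trivially satisfiable dummy clauses on fresh variables at linear cost.'' That padding step breaks two things. First, it does not preserve the Gap-$3$-SAT promise: if a NO instance has $<\eta' m_\phi$ simultaneously satisfiable clauses, then after adding $a$ always-satisfiable clauses the bound becomes $<(\eta' m_\phi + a)/(m_\phi + a)$, which is strictly larger than $\eta'$, so the constant you feed into the NO-case inequality is not the one in the hypothesis. Second, the fresh variables inflate the variable count to $n_{\mathrm{new}} > n$, so the exact cover you build in the YES case has size $n_{\mathrm{new}}$, which exceeds $\eta d$ when $d := n/\eta$ is defined from the original $n$ as the theorem requires. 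Either issue invalidates the explicit interval $\eta \in \bigl(C'/(C'+1-\eta'),\,1\bigr)$. The paper sidesteps all of this: once every variable or its negation appears in at least one clause (a genuine WLOG, obtained by deleting unused variables), one automatically has $m_\phi \ge n/3$, and your final inequality becomes $\ell \ge n + (1-\eta')m_\phi/C' \ge n\bigl(1 + (1-\eta')/(3C')\bigr)$, so any $\eta \in \bigl(3C'/(3C'+1-\eta'),\,1\bigr)$ works. With that one substitution, and dropping the padding and the superfluous singletons, your argument matches the paper's proof.
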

		\begin{proof}
			The reduction takes as input a set of clauses $\cC_1, \cC_2, \cdots, \cC_t$, over a set of variables $x_1, \ldots, x_n$ where each variable is in at most $C'$ clauses. We assume without loss of generality that each variable or its negation is in at least one clause, and so $n/3 \le t \le C'n$.

			Define $U$ to be the set $\{\cC_1, \ldots, \cC_t, x_1, \ldots, x_n\}$. For each literal $b = x_i$ or $b = \neg x_i$ and for each set $S$ of clauses containing $b_i$, we create a set $S \cup \{x_i\}$ in our instance. I.e. a literal that is contained in exactly $r$ clauses will be contained in exactly $2^r$ sets.
			The reduction outputs YES if there exists an exact set cover of size at most $n$, and no, otherwise. 
			
			It is easy to see that the reduction is efficient and that $n \le k \le (C'+1)n$ and $n \le m \le 2^{C' + 1}n$. We now argue correctness.
			
			Suppose the Gap-$3$-$\SAT_{\eta'}^{\le C'}$ instance is a YES instance, i.e. the formula is satisfiable. Then there exists a satisfying assignment obtained by setting $b_1 = b_2 = \cdots = b_n = 1$, where each $b_i$ is either $x_i$ or $\neg x_i$. Thus, for all $i = 1, 2, \ldots, n$, let $S_i$ be the set of clauses containing $b_i$ but not containing any of $b_1, \ldots, b_{i-1}$. Clearly, each of these sets is disjoint, and $\cup_i S_i = \{\cC_1, \ldots, \cC_t\}$, since $b_1, \ldots, b_n$ is a satisfying assignment. Thus, the sets $S_i \cup \{x_i\}$ form an exact set cover of $U$ of size $n$.
			
			Suppose, on the other hand, that the Gap-$3$-$\SAT_{\eta'}^{\le C'}$ instance is a NO instance, i.e. any assignment satisfies at most $\eta' t$ clauses. Let $S_1, \ldots, S_{\ell}$ be a set cover of $U$, where the sets are not necessarily disjoint. We wish to show that $\ell \geq d = n/\eta$ for some constant $\eta \in (0,1)$. 
			
			Let $S(b)$ be the set of all clauses containing a literal $b$. Without loss of generality, we can assume that each set $S_i$ equals either $S(x_j) \cup \{x_j\}$ or $S(\neg x_j) \cup \{x_j\}$ for some $j$.
			The total number of variables for which $S(x_i) \cup \{x_i\}$ and $S(\neg x_i) \cup \{x_i\}$ are both in the set cover is at most $\ell - n$. Thus, the total number of clauses covered by $S_1, \ldots, S_\ell$ is at most $\eta' t + C' (\ell - n)$, so we must have $\eta' t + C'(\ell - n) \ge t$. This implies that
			\[
			\ell \ge \frac{t}{C'}(1 - \eta') + n \geq
			\left(1 + \frac{1 - \eta'}{3C'}\right)  \cdot n
			\;,\] 
			as needed.
		\end{proof}
	
	\section{A reduction from a variant of CVP to  SVP}
	
	As we discussed in Section~\ref{sec:Khot}, the ``naive reduction'' from $\CVP_{p, \gamma'}$ to $\SVP_{p, \gamma}$ simply takes a $\CVP$ instance consisting of a basis $\basis \in \R^{d \times n}$ for a lattice $\lat \subset \R^d$, target $\vec{t} \in \R^d$, and distance $r > 0$, and constructs the $\SVP$ instance given by the basis for $\lat'$ of the form
	\[
	\basis' := 
	\begin{pmatrix}
	\basis &-\vec{t}\\
	0 &s
	\end{pmatrix}
	\]
	and length $r' := (r^p + s^p)^{1/p}$, where $s > 0$. Notice that, if the input is a YES instance (i.e., $\dist_p(\vec{t}, \lat) \leq r$, then $\lambda_1^{(p)}(\lat') \leq r'$.
	
	If the input instance is a NO instance (i.e., if $\dist_p(\vec{t}, \lat) > \gamma' r$), then we call a non-zero vector $\vec{y}' = (\vec{y} - z \vec{t}, zs) \in \lat'$ \emph{annoying} if $\|\vec{y}'\|_p \leq \gamma r'$. As Khot showed, we can sparsify (as in Theorem~\ref{thm:sparsify}), to make this naive reduction work as long as there are significantly fewer annoying vectors than close vectors. We therefore define a rather unnatural quantity below that exactly counts the number of annoying vectors in a NO instance.
	
	For $1 \leq p < \infty$, and $\gamma \ge 1$, a lattice $\lat \subset \R^d$, target $\vec{t} \in \R^n$, and distances $r,s > 0$, we define
	\[
	A_{r, s, \gamma}^{(p)}(\vec{t}, \lat) := \sum_{z = 0}^{\gamma (r^p/s^p + 1)^{1/p}} N_p(\lat, (\gamma^p r^p - (z^p - \gamma^p) s^p)^{1/p}, z \vec{t}) - 1
		\; .
	\]
	Notice that $A_{r,s,\gamma}^{(p)}$ does in fact count the number of annoying vectors resulting from the above reduction (up to sign). In particular, the summand $N_p(\lat, (\gamma^p r^p - (z^p - \gamma^p) s^p)^{1/p}, z \vec{t})$ is the number of vectors $\vec{y}' = (\vec{y} - z\vec{t}, zs) \in \lat$ of length at most $\gamma r'$ for some fixed $z$.

We now define the class of $\CVP_p$ instances on which this sparsification-based reduction works.

\begin{definition}
	For $1 \leq p < \infty$, $A = A(n) \geq 0$ (the number of annoying vectors), $G = G(n) \geq 1$ (the number of ``good'' or close vectors), and $\gamma = \gamma(n) \ge 1$ (the approximation factor), the promise problem $(A, G)\text{-}\CVP_{p, \gamma}$ is defined as follows. The input is a (basis for a) lattice $\lat \subset \R^d$, target $\vec{t} \in \R^d$, and distances $r,s > 0$. It is a YES instance if $N_p(\lat, r, \vec{t}) \geq G$.  It is a NO instance if $A_{r, s, \gamma}^{(p)}(\vec{t}, \lat) \le \alpha$.
\end{definition}

Notice that the YES and NO instances of $(A, G)\text{-}\CVP_{p,\gamma}$ are disjoint when $A < G$, since $A_{r,s,\gamma}^{(p)}(\vec{t}, \lat) \geq N_p(\lat, r, \vec{t}) $.\footnote{We find it convenient to define the problem even for $A \geq G$ because it will not always be clear which of the two values is larger. Our results will always be vacuous when $A \geq G$. E.g., Theroem~\ref{thm:sparsification_reduction} requires $G \gg A$.} We drop the subscript $\gamma$ from $A_{r, s, \gamma}^{(p)}(\vec{t}, \lat)$, $(A, G)\text{-}\CVP_{p, \gamma}$ and $\SVP_{p,\gamma}$  if $\gamma = 1$. 

		Having defined $(A,G)$-$\CVP_{p, \gamma}$ specifically so that we can reduce it to $\SVP_{p,\gamma}$, we now present the reduction from $(A, G)\text{-}\CVP_{p,\gamma}$ to $\SVP_{p, \gamma}$. It essentially follows from the definition of $A_{r, s,\gamma}^{(p)}$ together with Theorem~\ref{thm:sparsify}.
	
	\begin{theorem}
		\label{thm:sparsification_reduction}
		For $1 \leq p < \infty$ and efficiently computable $A = A(n) \geq 1$, $G = G(n) \geq 1000A(n)$, and $\gamma = \gamma(n)\ge 1$, there is a (randomized) reduction from $(A, G)\text{-}\CVP_{p, \gamma}$ on a lattice with rank $n$ in $d$ dimensions to $\SVP_{p, \gamma}$ on a lattice with rank $n+1$ in $d+1$ dimensions that runs in time $\poly(d, \log A, \log G)$.
		\end{theorem}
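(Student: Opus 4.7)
The plan is to combine Khot's naive reduction with a single application of Theorem~\ref{thm:sparsify}. Given an $(A, G)$-$\CVP_{p, \gamma}$ instance $(\basis, \vec{t}, r, s)$ of rank $n$ in $\R^d$, I would form the lattice $\lat' \subset \R^{d+1}$ of rank $n+1$ whose basis is $\basis' := \begin{pmatrix} \basis & -\vec{t} \\ 0 & s \end{pmatrix}$, set $r' := (r^p + s^p)^{1/p}$, choose a prime $q$ as described below, run the sparsification algorithm of Theorem~\ref{thm:sparsify} on input $(\lat', q)$ to obtain a sublattice $\lat'' \subseteq \lat'$, and output the $\SVP_{p, \gamma}$ instance $(\lat'', r')$. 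The rank and dimension each increase by exactly one, and everything runs in $\poly(d, \log A, \log G)$ time once $q$ is chosen to have $\poly(\log A, \log G)$ bit length.

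Correctness rests on two counting claims about $\lat'$. In the YES case, each of the $\geq G$ vectors $\vec{y} \in \lat$ with $\|\vec{y}-\vec{t}\|_p \leq r$ produces the lattice vector $(\vec{y} - \vec{t}, s) \in \lat'$ of $\ell_p$-length at most $r'$, and each such vector is primitive in $\lat'$, since its last coordinate $s$ is the smallest positive value the last coordinate of any vector in $\lat'$ can take (no $1/k$-multiple with $k \geq 2$ can lie in $\lat'$). Hence $\xi_p(\lat', r') \geq G$. In the NO case, every nonzero $\vec{y}' \in \lat'$ has the form $(\vec{y} - z\vec{t}, zs)$ for some integer $z$ and some $\vec{y} \in \lat$, and the condition $\|\vec{y}'\|_p \leq \gamma r'$ is equivalent to $|z| \leq \gamma (r^p/s^p + 1)^{1/p}$ together with $\|\vec{y} - z\vec{t}\|_p \leq (\gamma^p r^p - (z^p - \gamma^p) s^p)^{1/p}$. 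The quantity $A_{r, s, \gamma}^{(p)}(\vec{t}, \lat)$ was defined precisely to tally these, so there are at most $A$ of them in $\lat'$ and in particular $\xi_p(\lat', \gamma r') \leq A$.

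With these counts in hand, I would apply Theorem~\ref{thm:sparsify} to the single random sublattice $\lat''$ at both radii $r'$ and $\gamma r'$. The theorem's upper bound immediately yields $\Pr[\lambda_1^{(p)}(\lat'') \leq \gamma r'] \leq A/q$, so taking any prime $q$ much larger than $A$ drives the NO-side failure probability to a small constant. The theorem's lower bound, combined with the $G \geq 1000 A$ gap hypothesis, yields that with constant probability at least one primitive vector of length at most $r'$ survives in $\lat''$, so that the correct YES answer is produced with constant probability bounded away from the NO-side error. Standard error reduction for randomized Karp reductions then completes the argument.

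The main obstacle is the regime condition $N \leq q/(20 \log q)$ in Theorem~\ref{thm:sparsify}: the useful lower bound requires the relevant count not to be too large relative to $q$, which pushes us to choose $q$ of order at least $G \log G$ rather than $G$ itself. A related subtlety is that distinct close vectors can collapse to the same residue class modulo $q \lat'$ and hence survive sparsification in a perfectly correlated way; handling this requires either choosing $q$ somewhat larger than $r / \lambda_1^{(p)}(\lat)$ (so that all $G$ close vectors land in distinct cosets of $q\lat'$), or a second-moment / inclusion--exclusion argument that uses the factor-$1000$ slack in $G \geq 1000 A$ to absorb the clustering. Once the right $q$ is fixed, the remaining steps are straightforward bookkeeping against the definitions of $\xi_p$ and $A_{r, s, \gamma}^{(p)}$.
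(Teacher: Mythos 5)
Your construction, use of Theorem~\ref{thm:sparsify}, and the two counting claims (at least $G$ primitive close vectors in a YES instance, at most $A$ annoying primitive vectors in a NO instance) all match the paper's proof. But the YES-side analysis has a genuine gap. You pick $q \approx G\log G$ so as to satisfy the regime condition $N \leq q/(20\log q)$ of Theorem~\ref{thm:sparsify}, implicitly treating $N := \xi_p(\lat', r')$ as being $\approx G$. In fact the YES promise only gives $N \geq G$; there is no upper bound on $N$ (it can be $2^{\Omega(d)}$), so the literal regime condition cannot be guaranteed by any $q$ with $\poly(\log G, \log A)$ bits. Your proposed fix --- pushing $q$ above $r/\lambda_1^{(p)}(\lat')$ to force the $G$ close vectors into distinct cosets --- does not run, because that ratio can itself be $2^{\Omega(d)}$ and then all survival probabilities become superpolynomially small, blowing up the number of repetitions. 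The working fix, which the paper uses implicitly, is monotonicity in the radius: since $\Pr[\lambda_1^{(p)}(\lat'')\leq r']$ is non-decreasing in $r'$, one can apply the lower bound of Theorem~\ref{thm:sparsify} at a (possibly smaller) radius where the primitive count is $\Theta(\sqrt{AG})$, which \emph{is} below $q/(20\log q)$ for the paper's choice $q = \Theta(M\log M)$ with $M := 10\sqrt{AG}$. (The ``furthermore'' clause of Theorem~\ref{thm:sparsify} covers the degenerate case $r' \geq q\lambda_1^{(p)}(\lat')$.)

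Two smaller inaccuracies in the same spirit. First, with any valid $q$, the YES probability is $\Theta(1/\log q)$, not constant; the paper therefore runs $\ell = \Theta(d\log M)$ independent sparsifications and accepts only when more than $\delta \ell$ of the oracle calls succeed, for a carefully placed threshold $\delta$. Your phrase ``standard error reduction'' should be understood to mean exactly this threshold amplification (a bare majority vote won't work, since both probabilities tend to $0$). Second, in the NO case one must check the hypothesis $\gamma r' < q\lambda_1^{(p)}(\lat')$ before invoking Theorem~\ref{thm:sparsify}; the paper derives this from the bound $A_{r,s,\gamma}^{(p)}(\vec{t},\lat) \leq A$, which forces $\lambda_1^{(p)}(\lat') > \gamma r'/(A+1) > \gamma r'/q$. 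With those three repairs your argument becomes the paper's proof.
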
	
		\begin{proof}
			On input a basis $\basis$ for a lattice $\lat \subset \R^n$, a target vector $\vec{t} \in \R^n$, and distances $r,s > 0$, the reduction does the following.
			Let $\lat'$ be the lattice generated by
			\[
			\basis' := \begin{pmatrix}
				\basis & -\vec{t}\\
				0 & s
			\end{pmatrix}
			\; ,
			\]
			as above.
			Let $M := 10\sqrt{AG}$.
			The reduction does the following $\ell := \ceil{100 d \log M}$ times. It finds a prime $q$ with $10 M \log M \leq q \leq 20 M \log M$ and calls the procedure from Theorem~\ref{thm:sparsify}, receiving as output some new lattice $\lat''$. It then calls its \SVP oracle with input $\lat''$ and $r':=(r^p + s^p)^{1/p}$. Finally, it outputs YES if and only if the $\SVP$ oracle returned YES more than $\delta  \ell$ times, where 
			\[
			\delta := \frac{M}{20 q} - \frac{M^2}{200 q^2}  \geq \frac{1}{100\log M}
			\; .
			\]
			
			The running time is clear, as is the fact that the reduction increases both the dimension and rank by exactly one.
			
		If the input instance is a YES instance, then the number of vectors in $\lat'$ of the form $(\vec{v} - \vec{t}, s)$, where $\vec{v} \in \lat$, is $N_p(\lat, r, \vec{t}) \geq G$. These are primitive vectors in $\lat'$ and have length at most $r'$ (and there is no pair $\pm \vec{y}$ in this collection of vectors). I.e., there are at least $M/10$ primitive lattice vectors in $\lat'$ of length at most $r'$, and it follows from Theorem~\ref{thm:sparsify} that
		\[
			\Pr[\lambda_1^{(p)}(\lat'') \leq r'] \geq 2\delta
			\; .
		\]
		Then, by the Chernoff-Hoeffding bound, the oracle will output YES except with probability $\exp(-\Omega(d))$, as needed.
		
		If the input instance is a NO instance, then notice the number of primitive vectors in $\lat'$ of length at most $\gamma r'$ is at most $A_{r, s, \gamma}^{(p)}(\vec{t}, \lat) \le A$ (up to sign). Furthermore, the total number of vectors of length at most $\gamma r'$ (including non-primitive vectors) is at most $2A_{r,s,\gamma}^{(p)}(\vec{t}, \lat)+1 \leq 2A + 1$. In particular, this implies that $\lambda_1^{(p)}(\lat') > \gamma r'/(A+1) > \gamma r'/q$.\footnote{Suppose that $\lambda_1^{(p)}(\lat') \leq \gamma r'/(A+1)$, and let $\vec{v} \in \lat \setminus \{\vec0\}$ with $\|\vec{v}\|_p \leq \gamma r'/(A+1)$. Then, for every $z \in \{-A-1,-A,\ldots, A,A+1\}$, $z \vec{v}$ is a distinct lattice vector with $\|z \vec{v}\|_p \leq \gamma r'$, which contradicts the fact that there are at most $2A+1$ such vectors.}
		 So, we may apply Theorem~\ref{thm:sparsify}, and we have that
		\[
			\Pr[\lambda_1^{(p)}(\lat'') \leq \gamma r'] \leq \frac{A}{q} \leq \frac{\delta}{2}
			\; .
		\]
		The result again follows by the Chernoff-Hoeffding bound.
		\end{proof}		
\section{SETH-hardness of this variant of CVP (and therefore SVP)}
\label{sec:gadget_Zn_all_halves}

We now show that $(A, G)\text{-}\CVP_p$ is SETH-hard. We first observe that the SETH-hard $\CVP_p$ instances from~\cite{BGS17} ``have a copy of $\Z^n$ embedded in them.'' This fact will allow us to compute $A_{r,s}^{(p)}$ quite accurately.

\begin{theorem}[\cite{BGS17}]
	\label{thm:CVPSETH}
	For any constant $k \geq 2$, the following holds for all but finitely many values of $p \geq 1$. There is a Karp reduction from Max-$k$-SAT on $n$ variables to $\CVP_p$ on a rank $n$ lattice $\lat \subset \R^d$ such that the resulting
	$\CVP_p$ instance $(\basis, \vec{t}, r)$ has the form
	\[
	\basis =
	\begin{pmatrix}
	\Phi\\
	I_n
	\end{pmatrix}
	\; ,
	\]
	for some matrix $\Phi \in \R^{(d- n) \times n}$;
	\[
	\vec{t} = 
	\begin{pmatrix}
	t_1\\
	\vdots\\
	t_{d-n}\\
	1/2\\
	\vdots\\
	1/2
	\end{pmatrix}
		\; ,
	\]
	for some scalars $t_i \in \R$; and $r = (n+1)^{1/p}/2$. 
	Moreover, when $k = 2$, this holds for all $p \geq 1$, and for any $k \geq 2$, this holds for all odd integers $p \geq 1$.
\end{theorem}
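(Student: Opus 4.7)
The statement is a theorem attributed to \cite{BGS17}, so my plan is to revisit their reduction and verify that it in fact produces $\CVP_p$ instances of the specific structural form claimed here. BGS17 gave a reduction from Max-$k$-SAT to $\CVP_p$ by encoding a Boolean assignment $\vec{x} \in \{0,1\}^n$ into the coefficient vector of a lattice combination, and then forcing those coefficients to be $0/1$-valued by the choice of target and distance. My first step would be to confirm that their construction appends an identity block $I_n$ to the bottom of the basis (so that every lattice point $\basis \vec{z}$ has its last $n$ coordinates equal to $\vec{z}$), and sets the corresponding $n$ coordinates of the target to $1/2$. Under the $\ell_p$ norm, this forces $\|\vec{z} - (1/2,\ldots,1/2)\|_p^p \geq n/2^p$ with equality exactly when $\vec{z} \in \{0,1\}^n$, so any vector achieving the stated distance bound $r = (n+1)^{1/p}/2$ must correspond to an actual Boolean assignment, and may spend at most $(1/2)^p$ of its $p$-th power distance on the upper coordinates.

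Next I would identify the matrix $\Phi \in \R^{(d-n) \times n}$ and offsets $t_1, \ldots, t_{d-n}$ used in the BGS17 reduction as their ``clause gadget'': a linear map whose composition with an assignment $\vec{z}$, shifted by $(t_1, \ldots, t_{d-n})$, yields a vector whose $\ell_p^p$-norm measures the number of unsatisfied clauses, normalized so that the Max-$k$-SAT threshold corresponds exactly to the budget $(1/2)^p$ remaining after the bottom block. The existence of such a gadget is the core technical contribution of BGS17, and it is precisely what is responsible for the ``almost all $p$'' restriction: their gadget requires a certain system of $p$-th power identities that is known unconditionally only for $k=2$ (all $p$), for odd integer $p$ (any $k$), and for all but finitely many $p$ in general.

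Having identified these ingredients, the remaining step is essentially bookkeeping: compose the clause gadget with the identity block to get the full basis $\basis = (\Phi;\; I_n)$, stack $(t_1,\ldots, t_{d-n})$ on top of $(1/2,\ldots, 1/2)$ to form $\vec{t}$, and set $r^p = n/2^p + 1/2^p = (n+1)/2^p$. Correctness of the resulting $\CVP_p$ instance---YES iff the Max-$k$-SAT instance has enough simultaneously satisfiable clauses---then follows directly from the corresponding correctness in BGS17. The main obstacle I anticipate is purely expository: verifying that the basis and target normalizations used in the BGS17 writeup match the ones stated here up to irrelevant affine shifts and uniform scalings, so that the all-halves target and the particular value $(n+1)^{1/p}/2$ for $r$ emerge exactly as written rather than as equivalent but superficially different constants. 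Once those conventions are aligned, no additional reduction is needed.
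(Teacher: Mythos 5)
The proposal is correct and matches the paper's treatment: this theorem is cited from~\cite{BGS17} and stated without a new proof in the present paper, so the only thing to do is confirm that the BGS17 construction has the stated form, which is exactly what you propose. Your reconstruction of the mechanism---the $I_n$ block forcing integer coefficients $\vec{z}$ to lie in $\{0,1\}^n$ via the all-halves target (using $|z-1/2|^p \geq (3/2)^p > 1/2^p$ for $z\notin\{0,1\}$), the resulting budget of $1/2^p$ left over for the $\Phi$ rows once $r^p=(n+1)/2^p$, and the attribution of the ``almost all $p$'' restriction to the $p$-th-power identities underlying the BGS17 clause gadget (unconditionally resolved for $k=2$ and for odd integer $p$)---is accurate and consistent with the stated form of $\basis$, $\vec{t}$, and $r$.
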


We note the following easy corollary, which we can think of as either an application of Khot's gadget reduction (as described in Section~\ref{sec:Khot}) or simply as a  ``padded'' variant of Theorem~\ref{thm:CVPSETH}.

\begin{corollary}
	\label{cor:kSAT_to_beta_CVP}
	For any constant integer $k \geq 2$, the following holds for all but finitely many values of $p \geq 1$. For any efficiently computable integer $n^\dagger = n^\dagger(n) \leq \poly(n)$, there is a Karp reduction from Max-$k$-SAT on $n$ variables to $(A, G)\text{-}\CVP_p$ on a rank $n+n^\dagger(n)$ lattice
	with 
	\[
	A := \sqrt{n+n^\dagger} \cdot N_p(\Z^{n+n^\dagger},(r^p + 1)^{1/p}, \vec{0}) \qquad \text{ and } \qquad G := 2^{n^\dagger}
	\; ,
	\]
	where $r := (n+ n^\dagger + 1)^{1/p}/2$ and $\widehat{\vec{t}} := (1/2, \ldots, 1/2) \in \R^{n+n^\dagger}$. Moreover, when $k = 2$, this holds for all $p \geq 1$, and for any $k \geq 2$, this holds for all odd integers $p \geq 1$.
\end{corollary}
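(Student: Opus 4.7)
The plan is to apply Khot's gadget reduction from Section~\ref{sec:Khot} to the $\CVP_p$ instance produced by Theorem~\ref{thm:CVPSETH}, using the integer-lattice gadget $(\lat^\dagger, \vec{t}^\dagger, r^\dagger) := (\Z^{n^\dagger}, (1/2, \ldots, 1/2), (n^\dagger)^{1/p}/2)$. Concretely, given a Max-$k$-SAT instance on $n$ variables, first invoke Theorem~\ref{thm:CVPSETH} to produce $(\basis, \vec{t}_{\mathrm{old}}, r_{\mathrm{old}})$ with $r_{\mathrm{old}} = (n+1)^{1/p}/2$ and $\basis$ of the specified block form. The reduction then outputs the $(A, G)\text{-}\CVP_p$ instance $(\widehat{\basis}, \widehat{\vec{t}}, r, s)$ with $\widehat{\basis} := \basis \oplus I_{n^\dagger}$, $\widehat{\vec{t}} := (\vec{t}_{\mathrm{old}}, 1/2, \ldots, 1/2)$, $r := (r_{\mathrm{old}}^p + n^\dagger/2^p)^{1/p} = (n + n^\dagger + 1)^{1/p}/2$, and $s := 1$, matching the parameters named in the statement.

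For the YES case, any close lattice vector $\vec{y}_0 \in \lat$ (i.e., $\|\vec{y}_0 - \vec{t}_{\mathrm{old}}\|_p \le r_{\mathrm{old}}$) combined with any $\vec{x} \in \{0,1\}^{n^\dagger}$ satisfies $\|(\vec{y}_0, \vec{x}) - \widehat{\vec{t}}\|_p^p \leq r_{\mathrm{old}}^p + n^\dagger/2^p = r^p$, immediately giving $N_p(\widehat{\lat}, r, \widehat{\vec{t}}) \geq 2^{n^\dagger} = G$.

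For the NO case, I bound $A_{r,1}^{(p)}(\widehat{\vec{t}}, \widehat{\lat}) = \sum_{z=0}^{Z} N_p(\widehat{\lat}, R_z, z \widehat{\vec{t}}) - 1$ term by term, where $R_z := (r^p - z^p + 1)^{1/p}$ and $Z := \lfloor (r^p + 1)^{1/p} \rfloor$. The key structural fact is that the bottom $I_n$ block of $\basis$ makes the coordinate projection $(\Phi\vec{z}, \vec{z}, \vec{y}^\dagger) \mapsto (\vec{z}, \vec{y}^\dagger)$ an $\ell_p$-nonincreasing injection $\widehat{\lat} \hookrightarrow \Z^{n+n^\dagger}$, which yields $N_p(\widehat{\lat}, R, \vec{u}) \leq N_p(\Z^{n+n^\dagger}, R, (\vec{u}_1, \vec{u}^\dagger))$ for any $\vec{u} = (\vec{u}_0, \vec{u}_1, \vec{u}^\dagger)$. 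Since the last $n$ coordinates of $\vec{t}_{\mathrm{old}}$ are all $1/2$, the image of $z\widehat{\vec{t}}$ under this projection is $(z/2)\cdot \vec{1} \in \R^{n+n^\dagger}$. From here: (i) the $z = 1$ term vanishes because the NO hypothesis $\dist_p(\vec{t}_{\mathrm{old}}, \lat) > r_{\mathrm{old}}$ gives $\dist_p(\widehat{\vec{t}}, \widehat{\lat})^p > r^p$; (ii) for odd $z \geq 3$ the projection target is $(1/2, \ldots, 1/2) \pmod{\Z^{n+n^\dagger}}$, and the elementary inequality $z^p > 1 + 2^{-p}$ implies $R_z^p < (n+n^\dagger)/2^p = \dist_p((1/2, \ldots, 1/2), \Z^{n+n^\dagger})^p$, so the term again vanishes; (iii) for even $z$ the projection target lies in $\Z^{n+n^\dagger}$ and the term is at most $N_p(\Z^{n+n^\dagger}, R_z, \vec{0}) \leq N_p(\Z^{n+n^\dagger}, (r^p+1)^{1/p}, \vec{0})$.

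Summing the at most $Z/2 + 1$ surviving terms bounds $A_{r,1}^{(p)}(\widehat{\vec{t}}, \widehat{\lat})$ by $(Z/2 + 1) \cdot N_p(\Z^{n+n^\dagger}, (r^p+1)^{1/p}, \vec{0})$, and for $p \geq 2$ this is at most $\sqrt{n+n^\dagger} \cdot N_p(\Z^{n+n^\dagger}, (r^p+1)^{1/p}, \vec{0}) = A$ for sufficiently large $n$. The main technical subtlety---and the only step that really uses the $\ell_p$-geometry in a non-trivial way---is case (ii) above: the vanishing of the contributions from odd ``fake targets'' with $z \geq 3$. Everything else is bookkeeping built on a norm-nonincreasing projection (from the $I_n$ block), a distance inequality inherited from the NO hypothesis, and a count of surviving values of $z$; for $1 \leq p < 2$ the same projection and distance inputs apply, but the coarse counting bound $\sqrt{n+n^\dagger}$ needs to be replaced by a tighter analysis that exploits the decay of $R_z$ with $z$.
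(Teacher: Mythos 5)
Your proposal matches the paper's proof essentially step for step: same gadget $(\Z^{n^\dagger}, (1/2,\ldots,1/2))$ appended by direct sum, same choice $s=1$, same count $G = 2^{n^\dagger}$ for the YES case, and the same decomposition of $A_{r,1}^{(p)}$ by $z$ with odd $z \geq 3$ terms vanishing via $z^p > 1 + 2^{-p}$ and even $z$ terms dominated by $N_p(\Z^{n+n^\dagger}, (r^p+1)^{1/p}, \vec0)$. The only substantive additions are that you make explicit the norm-nonincreasing coordinate projection $\widehat{\lat} \hookrightarrow \Z^{n+n^\dagger}$ (which the paper uses implicitly in its chain of inequalities) and that you flag the fact that the terminal bound $\lesssim r \cdot N_p \leq \sqrt{n+n^\dagger} \cdot N_p$ really uses $p \geq 2$ --- a gap that is also present in the paper's own argument, but is harmless since the corollary is only invoked downstream for $p > p_0 \approx 2.14$.
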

\begin{proof}
	It suffices to show how to convert the $\CVP_p$ instance from Theorem~\ref{thm:CVPSETH} into a $(A,G)\text{-}\CVP_p$ instance. To do this, we simply append the matrix $I_{n^\dagger}$ to the basis and $\vec{t}^\dagger := (1/2,\ldots, 1/2) \in \R^{n^\dagger}$ to the target. I.e., we construct,
	\[
		\basis :=
		\begin{pmatrix}
		\Phi & 0 \\
		I_n &0\\
		0 & I_{n^\dagger}
		\end{pmatrix}
		\; ,
		\]
		where $\Phi \in \R^{(d - n-n^\dagger) \times n}$ is as in Theorem~\ref{thm:CVPSETH}, and
		\[
		\vec{t} := 
		\begin{pmatrix}
		t_1\\
		\vdots\\
		t_{d-n-n^\dagger}\\
		1/2\\
		\vdots\\
		1/2
		\end{pmatrix} \in \R^{d}
		\; ,
		\]
		where $t_i \in \R$ are as in Theorem~\ref{thm:CVPSETH}. We simply take $s = 1$.
	
	Let $\lat := \lat(\basis) \subset \R^d$. Let $\lat' \subset \R^{d - n^\dagger}$ be the lattice generated by the basis without the additional identity matrix, and let $\vec{t}' \in \R^{d - n^\dagger}$ be the target without the additional coordinates. Notice that vectors in $\lat$ have the form $\vec{y} := (\vec{y}', \vec{z})$, where $\vec{y}' \in \lat'$ and $\vec{z} \in \Z^{n^\dagger}$. In particular,
	\[
	\|\vec{y} - \vec{t} \|_p^p = \|\vec{y}'-\vec{t}'\|_p^p + \|\vec{z} - \vec{t}^\dagger\|_p^p \geq \|\vec{y}'-\vec{t}'\|_p^p + n^\dagger/2^p
	\; .
	\]
	
	So, if the input Max-$k$-SAT instance is a YES instance, then $\dist_p(\vec{t}', \lat') \leq (n+1)^{1/p}/2$, and so  there are at least $2^{n^\dagger}$ distinct vectors in $\vec{y} \in \lat$ such that $\|\vec{y} - \vec{t}\|_p \leq r$. (In particular, all vectors of the form $(\vec{y}', \vec{z})$ with $\vec{z} \in \{0,1\}^{n^\dagger}$ and $\|\vec{y}' - \vec{t}'\|_p \leq (n+1)^{1/p}/2$ have this property.) Thus, the resulting $(A,G)\text{-}\CVP_p$ instance is a YES instance.

	On the other hand, if the input Max-$k$-SAT instance is a NO instance, then we have that $\dist_p(\vec{t}', \lat') > (n+1)^{1/p}/2$. I.e., $N_p(\lat, r, \vec{t}) = 0$. Therefore,
	\begin{align*}
	A_{r, 1}^{(p)}(\vec{t}, \lat) 
	&= \sum_{z = 0}^{(r^p + 1)^{1/p}} N_p(\lat, (r^p -z^p + 1)^{1/p}, z \vec{t}) -1 \\
	&\leq N_p(\Z^{n+n^\dagger},(r^p + 1)^{1/p}, \vec{0}) -1 + N_p(\lat, r, \vec{t})  + \sum_{z = 2}^{(r^p + 1)^{1/p}} N_p(\Z^{n+n^\dagger},(r^p - z^p + 1)^{1/p}, z \widehat{\vec{t}})\\
	&\leq r \cdot N_p(\Z^{n+n^\dagger},(r^p + 1)^{1/p}, \vec{0})	\; ,
	\end{align*}
	where we have used the fact that $N_p(\Z^{n+n^\dagger},(r^p - z^p + 1)^{1/p}, z \widehat{\vec{t}}) = 0$ for odd $z \geq 3$ and 
	\[
	N_p(\Z^{n+n^\dagger},(r^p - z^p + 1)^{1/p}, z \widehat{\vec{t}}) = N_p(\Z^{n+n^\dagger},(r^p - z^p + 1)^{1/p}, \vec0) \leq N_p(\Z^{n+n^\dagger},(r^p + 1)^{1/p}, \vec{0})
	\]
	 for even $z$.
	Thus, the resulting $(A,G)\text{-}\CVP_p$ instance is a NO instance. 
\end{proof}

In the next section, we show that $A \ll G$ if and only if $p > p_0 \approx 2.13972$.

		\subsection{Finishing the proof}
		\label{sec:integer_points_centered}
		
		It remains to bound the number of integer points in an $\ell_p$ ball centered at the origin.
		As in Section~\ref{sec:techniques}, for $1 \leq p < \infty$ and $\tau > 0$, we define
		\[
		\Theta_p(\tau) := \sum_{z \in \Z} \exp(-\tau |z|^p)
		\; .
		\]
		Notice that we can write $\Theta_p(\tau)^n$ as a summation over $\Z^n$,
		\[
		\Theta_p(\tau)^n = \sum_{\vec{z} \in \Z^n} \exp(-\tau \|\vec{z}\|_p^p)
		\; .
		\]
		In particular, for any radius $r > 0$ and $\tau > 0$, we have
		\[
		\Theta_p(\tau)^n \geq \sum_{\stackrel{\vec{z} \in \Z^n}{\|\vec{z}\|_p \leq r}} \exp(-\tau \|\vec{z}\|_p^p) \geq \exp(-\tau r^p) N_p(\Z^n, r, \vec0)
		\; .
		\]
		Rearranging and taking the minimum over all $\tau > 0$, we see that
		\begin{equation}
		\label{eq:tau_bound_on_integer_points_centered}
		N_p(\Z^n, r, \vec0) \leq \min_{\tau >0 } \exp(\tau r^p) \Theta_p(\tau)^n
		\; .
		\end{equation}
		(It is easy to see that the minimum is in fact achieved.)
		In Section~\ref{sec:integer_points}, we will show that this upper bound is actually quite tight (even in the more general settings of shifted balls here). Here, we use this bound to prove the following theorem.
		
		\begin{theorem}
			\label{thm:SETH_hardness_centered_theta}
			For any constant integer $k \geq 2$, the following holds for all but finitely many constants $p > p_0$. There is an efficient randomized reduction from Max-$k$-SAT on $n$ variables to $\SVP_p$ on a lattice of rank $\ceil{C_p n + \log^2 n}$, where 
			\[
			C_p := \frac{1}{1-\log_2W_p} \qquad \text{ and } \qquad W_p := \min_{\tau > 0} \exp(\tau/2^p) \Theta_p(\tau)
			\; .
			\] 
			Here, $p_0 \approx 2.13972$ is the unique solution to the equation $W_{p_0} = 2$. Moreover, when $k = 2$, this holds for all $p > p_0$, and for any $k \geq 2$, this holds for all odd integers $p \geq 3$.
			
			In particular, for every $\eps > 0$, for all but finitely many $p > p_0$ (including all odd integers $p \geq 3$) there is no $2^{n/(C_n + \eps)}$-time algorithm for $\CVP_p$ unless SETH is false.
		\end{theorem}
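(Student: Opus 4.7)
\medskip

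\noindent\textbf{Proof proposal.} The plan is to chain Corollary~\ref{cor:kSAT_to_beta_CVP} with Theorem~\ref{thm:sparsification_reduction}, using the $\Theta_p$ bound of Eq.~\eqref{eq:tau_bound_on_integer_points_centered} to verify the required ratio between $G$ and $A$. Given an input Max-$k$-SAT instance on $n$ variables, first apply Corollary~\ref{cor:kSAT_to_beta_CVP} with padding parameter $n^\dagger := \ceil{(C_p - 1) n + \log^2 n}$ to obtain an $(A,G)\text{-}\CVP_p$ instance on a rank $n+n^\dagger$ lattice with $r = (n+n^\dagger+1)^{1/p}/2$, $G = 2^{n^\dagger}$, and $A = \sqrt{n+n^\dagger}\cdot N_p(\Z^{n+n^\dagger}, (r^p+1)^{1/p}, \vec 0)$. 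Then apply Theorem~\ref{thm:sparsification_reduction} to obtain an $\SVP_p$ instance on a lattice of rank $n+n^\dagger+1 = \ceil{C_p n + \log^2 n}$. The entire pipeline is polynomial time, so the only nontrivial task is to verify the hypothesis $G \geq 1000 A$ of Theorem~\ref{thm:sparsification_reduction}.

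To check this, I would apply Eq.~\eqref{eq:tau_bound_on_integer_points_centered} with the optimal $\tau = \tau_p^\ast$ that achieves the infimum defining $W_p$. Since $r^p + 1 = (n+n^\dagger+1)/2^p + 1$, this gives
\[
N_p(\Z^{n+n^\dagger}, (r^p+1)^{1/p}, \vec 0) \;\leq\; \exp(\tau_p^\ast) \cdot \exp\bigl(\tau_p^\ast (n+n^\dagger+1)/2^p\bigr) \cdot \Theta_p(\tau_p^\ast)^{n+n^\dagger} \;\leq\; C \cdot W_p^{n+n^\dagger+1},
\]
for a constant $C = C(p)$. Thus
\[
\frac{A}{G} \;\leq\; C \sqrt{n+n^\dagger} \cdot \frac{W_p^{n+n^\dagger+1}}{2^{n^\dagger}} \;=\; C \sqrt{n+n^\dagger}\cdot W_p \cdot 2^{(n+n^\dagger)\log_2 W_p - n^\dagger}.
\]
The exponent in the last expression is $(n+n^\dagger)\log_2 W_p - n^\dagger = n\log_2 W_p - n^\dagger(1 - \log_2 W_p)$. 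Since $p > p_0$ implies $W_p < 2$, i.e.\ $1 - \log_2 W_p > 0$, plugging in $n^\dagger = \ceil{(C_p - 1)n + \log^2 n}$ and using $C_p = 1/(1-\log_2 W_p)$ makes this exponent at most $-(1-\log_2 W_p)\log^2 n = -\Omega(\log^2 n)$, which swamps the $\sqrt{n+n^\dagger}$ factor for all sufficiently large $n$. Hence $A/G \leq 1/1000$ for $n$ large enough, as required. (For the finitely many small $n$, we can simply hard-code the answer.)

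The "furthermore" clause then follows by contrapositive: a $2^{n'/(C_p + \eps)}$-time algorithm for $\SVP_p$ on lattices of rank $n' = \ceil{C_p n + \log^2 n}$ would, via the reduction, decide Max-$k$-SAT on $n$ variables in time $2^{(C_p n + \polylog(n))/(C_p + \eps)} = 2^{(1-\eps')n + \polylog(n)}$ for some $\eps' > 0$ depending only on $\eps$ and $p$, contradicting (randomized) SETH for sufficiently large $k$ (recalling that SETH-hardness is inherited directly through Max-$k$-SAT in Corollary~\ref{cor:kSAT_to_beta_CVP}, and that the qualifier "for all but finitely many $p$" is inherited from Theorem~\ref{thm:CVPSETH}).

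The main place to be careful is the $\Theta_p$ estimate. The additive "$+1$" inside $(r^p + 1)^{1/p}$ must only inflate the count by an $O(1)$ factor, which relies crucially on $\tau_p^\ast$ being a fixed constant independent of $n$; once that is pinned down, everything else is bookkeeping. A secondary subtlety is ensuring the $\poly(n)$ prefactor $\sqrt{n+n^\dagger}$ in $A$ (and the constant $1000$ from Theorem~\ref{thm:sparsification_reduction}) is absorbed, which is precisely the role of the $\log^2 n$ slack in the choice of $n^\dagger$.
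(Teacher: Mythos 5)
Your proposal is correct and follows essentially the same route as the paper: pad via Corollary~\ref{cor:kSAT_to_beta_CVP}, sparsify via Theorem~\ref{thm:sparsification_reduction}, and verify $G \ge 1000A$ through the $\Theta_p$ bound of Eq.~\eqref{eq:tau_bound_on_integer_points_centered} with the optimizing $\tau=\tau_p^\ast$, using the $\log^2 n$ slack to absorb the polynomial prefactor. The only cosmetic discrepancy is an off-by-one in the padding: with $n^\dagger = \ceil{(C_p-1)n+\log^2n}$ the output rank is $n+n^\dagger+1 = \ceil{C_p n + \log^2 n}+1$, so you should take $n^\dagger = \ceil{C_p n + \log^2 n}-n-1$ instead, as the paper does.
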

		\begin{proof}
			Let $n^\dagger := \ceil{C_p n + \log^2 n} - n - 1$
Then, by Corollary~\ref{cor:kSAT_to_beta_CVP} together with Theorem~\ref{thm:sparsification_reduction}, it suffices to show that 
			\[
					N_p(\Z^{n+n^\dagger}, \widehat{r}, \vec0) \leq 2^{n^\dagger-10}/\sqrt{n+n^\dagger}
			\]
			for sufficiently large $n$,
			where $\widehat{r} := (n+n^\dagger + 2^{p+1})^{1/p}/2$.
			
			Let $\tau_p > 0$ such that $W_p := \exp(\tau_p/2^p) \Theta_p(\tau_p)$. (One can check that $\tau_p$ exists, e.g., by differentiating $\exp(\tau_p/2^p) \Theta_p(\tau_p)$ with respect to $\tau$.) By Eq.~\eqref{eq:tau_bound_on_integer_points_centered}, we have
			\[
			N_p(\Z^{n+n^\dagger}, \widehat{r}, \vec0) \leq \exp(\tau_p \widehat{r}^p) \Theta_p(\tau_p)^{n + n^\dagger} \leq \exp(2\tau_p) \cdot W_p^{n + n^\dagger} \leq \exp(2\tau_p + 1) \cdot 2^{n^\dagger - \log^2 n} \cdot W_p^{\log^2 n}
			\; .
			\]
			The result follows by noting that $W_p < 2$ so that for sufficiently large $n$,
			$(2/W_p)^{\log^2 n} \geq \exp(2\tau_p + 20) \sqrt{n+n^\dagger}$.
		\end{proof}

	Finally, we compute a simple bound on $C_p$. In particular, this implies the claim that $C_p \to 1$ as $p \to \infty$.
		
		\begin{claim}
			For any $p \geq 3$, we have
			\[
			C_p <  \frac{1}{1-2^{-p}(p +\log_2(3 e))} 
			\; .
			\]
		\end{claim}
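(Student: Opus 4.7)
The plan is to exhibit a single value of $\tau > 0$ at which $\tau/2^p + \ln \Theta_p(\tau)$ is small, giving a correspondingly small upper bound on $\ln W_p = \inf_\tau [\tau/2^p + \ln \Theta_p(\tau)]$. Noting that $\log_2(3e) \cdot \ln 2 = \ln 3 + 1$ and that both denominators in the claim are positive when $p \geq 3$, the stated inequality is equivalent to
\[
\ln W_p \;<\; 2^{-p}\bigl(p \ln 2 + 1 + \ln 3\bigr),
\]
so producing an upper bound of this form is all that remains.

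I would take $\tau = p \ln 2$, so that $e^{-\tau} = 2^{-p}$. (This is within a $\ln 2$ factor of the optimum of the heuristic $\tau/2^p + 2e^{-\tau}$, and already good enough.) Splitting $\Theta_p(\tau) = 1 + 2e^{-\tau} + 2\sum_{z \geq 2} e^{-\tau z^p}$ and bounding the tail by $z^p \geq 2^p$ (valid for $z \geq 2$, $p \geq 1$) gives
\[
\sum_{z \geq 2} e^{-p \ln 2 \cdot z^p} \;\leq\; 2 \cdot 2^{-p \cdot 2^p},
\]
which for $p \geq 3$ is at most $2^{-23}$ and decays super-exponentially in $p$. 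Hence $\Theta_p(p \ln 2) \leq 1 + 2^{-p+1} + \eta_p$ with $\eta_p := 4 \cdot 2^{-p \cdot 2^p}$.

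Applying $\ln(1+x) \leq x$ and combining,
\[
\ln W_p \;\leq\; \frac{p \ln 2}{2^p} + 2 \cdot 2^{-p} + \eta_p \;=\; 2^{-p}\bigl(p \ln 2 + 2\bigr) + \eta_p.
\]
The desired bound then reduces to checking $2 + 2^p \eta_p < 1 + \ln 3$, i.e.\ $8 \cdot 2^{-p(2^p - 1)} < \ln 3 - 1 \approx 0.0986$. For $p \geq 3$ the left side is at most $8 \cdot 2^{-20}$, and the gap only widens as $p$ grows, so this holds comfortably.

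The only real decision is which $\tau$ to use; everything else is elementary arithmetic, and there is no subtle obstacle. The claim is essentially a quantitative record of the fact that $\Theta_p$ is dominated by its $z \in \{0, \pm 1\}$ contributions as $p$ grows, which forces $\log_2 W_p = O(p/2^p)$ and therefore $C_p \to 1$.
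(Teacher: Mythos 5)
Correct, and essentially the same approach as the paper: you bound $W_p = \min_\tau \exp(\tau/2^p)\Theta_p(\tau)$ by substituting a single well-chosen $\tau$ and observing that $\Theta_p$ is dominated by the $z\in\{0,\pm1\}$ terms. The only real difference is the choice $\tau = p\ln 2$ versus the paper's $\tau = \ln(3\cdot 2^p)$, which the paper tunes so that the constants $3$ and $e$ emerge directly rather than requiring the slack $\ln 3 - 1$ at the end; one small nit is that "$z^p\geq 2^p$" alone does not bound the infinite tail $\sum_{z\geq 2}e^{-\tau z^p}$ by twice its first term (you need a geometric-decay argument, e.g.\ $z^p\geq 2^{p-1}z$), and there is a harmless $4$-versus-$8$ slip at the end, but neither affects the conclusion.
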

		\begin{proof}
			We have 
			\[
			W_p := \min_{\tau > 0} \exp(\tau/2^p) \Theta_p(\tau) = \min_{x > 1} x^{2^{-p}} \cdot (1 + 2x^{-1} + 2x^{-2^p} + 2x^{-4^p} + \cdots)
			\; .
			\]
			Fix $x := 3 \cdot 2^{p}$. Then, we have
			\[
			W_p \leq x^{2^{-p}} \cdot (1 + 2x^{-1} + 2x^{-2^p} + 2x^{-4^p} + \cdots ) < x^{2^{-p}} \cdot (1 + 3x^{-1}) = x^{2^{-p}} \cdot (1+ 2^{-p})
			\; .
			\]
			Therefore,
			\[
			\log_2 W_p < 2^{-p}\log_2 x + \log_2 (1+2^{-p}) < 2^{-p}(p +\log_2(3 e))
			\; ,
			\]
			so that
			\[
			C_p = \frac{1}{1-\log_2 W_p} < \frac{1}{1-2^{-p}(p +\log_2(3 e))} 
			\; ,
			\]
			as needed.
		\end{proof}		
	
	\section{Gap-ETH-hardness via a gadget}
	\label{sec:Gap-ETH}

The following theorem shows how to use a certain gadget lattice $\lat^\dagger \subset \R^{d^\dagger}$ and target $\vec{t}^\dagger \in \R^{d^\dagger}$ with certain properties to reduce $\ESC_\eta$ to $(A, G)\text{-}\CVP_p$. In particular, the ratio of the number of ``close points'' in $\lat^\dagger$ to $\vec{t}^\dagger$ compared to the number of ``short points'' in $\lat^\dagger$ should be larger than the total number of short points in $\Z^m$. (For $p > 2$, we construct such a gadget in Section~\ref{sec:integer_points} that will be sufficient to prove the Gap-ETH-hardness of $\SVP_p$. For $1 \leq p \leq 2$, we do not know of such a gadget, but in Section~\ref{sec:gapeth_l2}, we will show that one exists under plausible conjectures.)

For $1 \leq p \leq \infty$, a lattice $\lat \subset \R^n$, and radius $r > 0$, we define the maximal density at radius $r$ of $\lat$ as 
\[
D_p(\lat, r) := \max_{\vec{t} \in \R^n} N_p(\lat, r, \vec{t})
\; .
\]
We observe the trivial fact that $D_p(\lat, r)$ is non-decreasing in $r$.

\begin{theorem}
	\label{thm:ESC_to_CVP}
	For any $p \geq 1$, constant $\eta \in (0,1)$ and $\gamma \ge 1$, there is a Karp reduction from $\ESC_\eta$ on $m$ sets with size bound $d$ to $(A, G)\text{-}\CVP_{p,\gamma}$ on a lattice of rank $m + n^\dagger$ that requires as auxiliary input a gadget consisting of a lattice $\lat^\dagger \subset \R^{d^\dagger}$ of rank $n^\dagger$, target $\vec{t}^\dagger \in \R^{d^\dagger}$, and distances $r \geq d^{1/p}$ and $s > 0$ for any
	\[
	A \geq N_p(\Z^m, r^*, \vec0) \cdot \big(N_p(\cL^\dagger, r^*, \vec0) 
	+ (r^*/s) \cdot D_{p}(\lat^\dagger, ((r^*)^p - d)^{1/p}) \big)
	\; ,
	\]
	and $G \leq N_p(\cL^\dagger, (r^p - \eta d)^{1/p}, \vec{t}^\dagger)$, where $r^* := \gamma(r^p + s^p)^{1/p}$.
\end{theorem}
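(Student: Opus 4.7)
The plan is to follow Khot's direct-sum template from Section~\ref{sec:Khot}. I take the output CVP instance to be $(\widehat{\lat}, \widehat{\vec{t}}, r, s)$ with $\widehat{\lat} := \lat_{SC} \oplus \lat^\dagger$ and $\widehat{\vec{t}} := (\vec{t}_{SC}, \vec{t}^\dagger)$, where the ``set-cover'' summand $\lat_{SC} \subset \R^{|U|+m}$ is the standard rank-$m$ lattice whose basis columns are $\vec{b}_i^{SC} := (c\, M_{S_i}, \vec{e}_i)$ and whose target is $\vec{t}_{SC} := (c\, \vec{1}_U, \vec{0})$, for the scalar $c := d^{1/p}$. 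With $\vec{y}_{SC} = \sum_i z_i \vec{b}_i^{SC}$, the identity
\[
\|\vec{y}_{SC} - z\vec{t}_{SC}\|_p^p = c^p \length*{\sum_i z_i M_{S_i} - z \vec{1}_U}_p^p + \sum_i |z_i|^p
\]
drives the analysis; choosing $c^p = d$ ensures that whenever the set-cover part of $\vec{y}_{SC}$ fails to match $z\vec{1}_U$ the cost is already at least $d$.

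For the YES direction, any exact cover $I \subseteq [m]$ of size $\ell \leq \eta d$ produces $\vec{y}_{SC}^* := \sum_{i \in I} \vec{b}_i^{SC} = (c\,\vec{1}_U, \vec{e}_I)$ satisfying $\|\vec{y}_{SC}^* - \vec{t}_{SC}\|_p^p = \ell \leq \eta d$. Pairing $\vec{y}_{SC}^*$ with any $\vec{y}^\dagger \in \lat^\dagger$ at distance at most $(r^p - \eta d)^{1/p}$ from $\vec{t}^\dagger$ yields a vector of $\widehat{\lat}$ within distance $r$ of $\widehat{\vec{t}}$, so $N_p(\widehat{\lat}, r, \widehat{\vec{t}}) \geq N_p(\lat^\dagger, (r^p - \eta d)^{1/p}, \vec{t}^\dagger) \geq G$.

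For the NO direction I bound $A_{r, s, \gamma}^{(p)}(\widehat{\vec{t}}, \widehat{\lat})$ by summing $N_p(\widehat{\lat}, r_z, z\widehat{\vec{t}})$ over $z \in \{0, 1, \ldots, \lfloor r^*/s \rfloor\}$, noting that each $r_z \leq r^*$. For $z = 0$, the injectivity of $(z_i) \mapsto \vec{y}_{SC}$ combined with $\sum_i |z_i|^p \leq \|\vec{y}_{SC}\|_p^p$ and a component decomposition of the $\widehat{\lat}$ ball yields a contribution of at most $N_p(\Z^m, r^*, \vec{0}) \cdot N_p(\lat^\dagger, r^*, \vec{0})$. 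The crux is the claim that for every $z \geq 1$ one has $\|\vec{y}_{SC} - z\vec{t}_{SC}\|_p^p \geq d$: if $\sum_i z_i M_{S_i} \neq z\vec{1}_U$ this follows from $c^p = d$ and the fact that a nonzero integer vector has $\ell_p$-norm raised to the $p$ at least one; otherwise each $u \in U$ forces $\sum_{i : u \in S_i} z_i = z \geq 1$, so negative coefficients alone cannot raise this to $z$ and hence $\{S_i : z_i > 0\}$ must cover $U$, at which point the NO condition of $\ESC_\eta$ forces $|\{i : z_i > 0\}| \geq d+1$ and therefore $\sum_i |z_i|^p \geq d + 1$. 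Either way, the number of admissible $\vec{y}^\dagger$ per $(z, (z_i))$ is at most $D_p(\lat^\dagger, ((r^*)^p - d)^{1/p})$, and multiplying by the at most $r^*/s$ choices of $z \geq 1$ and the at most $N_p(\Z^m, r^*, \vec{0})$ choices of $(z_i)$ (using $\sum_i |z_i|^p \leq r_z^p \leq (r^*)^p$) yields the second summand of $A$. The main obstacle is precisely this crux step, namely translating the combinatorial NO condition of $\ESC_\eta$ into a lower bound on $\|\vec{y}_{SC} - z\vec{t}_{SC}\|_p^p$ even when the $z_i$ are arbitrary integers; the ``positive-coordinate cover'' observation, which discards all $z_i \leq 0$ before applying the NO condition, is what makes this go through.
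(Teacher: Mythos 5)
Your proof is correct and follows the same Khot-style direct-sum template as the paper's. The one genuine difference is the scaling of the set-cover block: you scale the covering rows by $c := d^{1/p}$, whereas the paper scales them by $r^* = \gamma(r^p + s^p)^{1/p}$. With the paper's scaling, any uncovered element contributes a coordinate of magnitude at least $r^*$, so the vector simply drops out of the count of annoying vectors; with your scaling, an uncovered element contributes at least $c^p = d$ to the $\ell_p^p$-cost, leaving a residual budget of $(r^*)^p - d$ for the gadget block --- which is exactly the same conclusion. Your ``positive-coefficient cover'' case split (branch on whether $\sum_i z_i M_{S_i} = z\vec{1}_U$, and when it does use the sign structure to extract a genuine cover) is a slightly finer variant of the paper's simpler argument, which just counts nonzero $z_i$'s: at most $d$ of them forces a missed element (large cover-block cost), and more than $d$ of them forces identity-block cost $\geq d$. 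Your refinement is sound but not needed; either case split closes the NO direction, and the two gadget scalings are interchangeable. Everything else --- the direct sum, the YES-direction counting, the $z=0$ term via a projection onto $\Z^m \oplus \lat^\dagger$, and bounding the $z \geq 1$ terms by $(r^*/s) \cdot N_p(\Z^m,r^*,\vec0)\cdot D_p(\lat^\dagger,((r^*)^p - d)^{1/p})$ --- matches the paper's argument.
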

\begin{proof}
	The reduction takes as input $S_1, \cdots, S_m \subseteq U= \{u_1, \ldots, u_k\}$ with $\bigcup S_i = U $, size bound $d \leq m$, a lattice $\lat^\dagger \subset \R^{d^\dagger}$ with basis $\basis^\dagger$, target $\vec{t}^\dagger \in \R^{d^\dagger}$, and distances $r, s > 0$ and behaves as follows.
	We first define the intermediate $\CVP$ instance consisting of the lattice $\widehat{\lat} := \lat(\widehat{\basis}) \subset \R^{m + k}$ and target $\widehat{\vec{t}} \in \R^{m + k}$ given by
	\[ \widehat{\basis} := (\widehat{\vec{b}}_1, \ldots, \widehat{\vec{b}}_m) = 
	\begin{array}{c}
	\begin{array}{cccccccccc} 
	S_1 \cdots S_j&  \cdots & \cdots & S_m & & & & & 
	\end{array} \\
	\begin{array}{c|cccc|cc|c|c}
	\cline{2-5}\cline{8-8}
	u_1& & \vdots & & & & & r^* &  \\
	\vdots  & & \vdots & &  & & &  r^*&  \\
	u_i& \cdots & r^* &\text{ if }u_i \in S_j & & & & \vdots&  \\
	\vdots &  & 0& \text{ otherwise}& & & & \vdots& \\
	\vdots & & & & & & & \vdots& =\widehat{\vec{t}}\\
	u_k & & & & & & &  r^*& \\
	\cline{2-5} \cline{8-8}
	& 1& & & & & & & \\
	& & \ddots& & & & & & \\
	& & & \ddots& & & & & \\
	& & & & 1& & & & \\
	\cline{2-5}\cline{8-8}
	\end{array}\\   
	\end{array}
	\;.\]
	The reduction then constructs the $(A,G)\text{-}\CVP$ instance consisting of a lattice $\lat := \lat(\basis) \subset \R^{m + k + d^\dagger}$, $\vec{t} \in \R^{m + k + d^\dagger}$, and the distances $r,s$, where
	\[
	\basis :=
	\begin{pmatrix}
	\widehat{\basis} & 0 \\
	0 & \basis^\dagger
	\end{pmatrix}
	\; ,
	\]
	and 
	\[
	\vec{t} :=
	\begin{pmatrix}
	\widehat{\vec{t}}\\
	\vec{t}^\dagger
	\end{pmatrix}
	\; .
	\]
	The reduction then outputs YES if the $(A, G)\text{-}\CVP_p$ oracle on input $(\basis, \vec{t}, r, s)$ outputs YES, and NO, otherwise. 
	
	Suppose the input is a YES instance, and let $i_1, \ldots, i_j$ with $j \leq \eta d$ such that the $S_{i_\ell}$ are disjoint with $\bigcup S_{i_\ell} = U$. Let $\widehat{\vec{v}} :=  \widehat{\vec{b}}_{i_1} + \cdots + \widehat{\vec{b}}_{i_j}$ be the corresponding vector in $\widehat{\lat}$. Notice that $\|\widehat{\vec{v}} - \widehat{\vec{t}}\|_p^p = j \leq \eta d$. Therefore, for any $\vec{v}^\dagger \in \lat^\dagger$ with $\|\vec{v}^\dagger\|_p^p \leq r^p - \eta d$, the vector $\vec{v} := (\widehat{\vec{v}}, \vec{v}^\dagger)$ is in $\lat$ with $\|\vec{v}\|_p \leq r$. So,
	\[
	N_p(\lat, r, \vec{t}) \geq N_p(\lat^\dagger, (r^p - \eta d)^{1/p}, \vec{t}^\dagger) \geq G
	\;,
	\]
	as needed.
	
	Now, suppose the input is a NO instance. Then, we wish to show that
	\[
	\sum_{\ell = 0}^{\gamma (r^p/s^p + 1)^{1/p}} N_p(\cL, (\gamma^p r^p - (\ell^p - \gamma^p) s^p)^{1/p}, \ell \cdot \vec{t})  \leq A
	\; .
	\]
	We first bound the $\ell = 0$ term as
	\begin{align*}
	N_p(\cL, r^*) &\leq N_p( \Z^m \oplus \cL^\dagger, r^*, \vec0)  \leq  N_p(\Z^m, r^*, \vec0) \cdot N_p(\lat^\dagger, r^*, \vec0)
	\;.
	\end{align*}
	
	Turning to the $\ell \geq 1$ terms, let $\vec{v} = (\widehat{\vec{v}} - \ell \cdot \widehat{\vec{t}}, \vec{v}^\dagger - \ell \cdot \vec{t}^\dagger) \in \lat$ with $\ell \geq 1$, and suppose that $\|\vec{v} - \ell \cdot \vec{t}\|_p^p \leq \gamma^p r^p - (\ell^p - \gamma^p)s^p$.
	Let $\widehat{\vec{v}} = \sum_{i=1}^m z_i \widehat{\vec{b}}_i$. First, notice that $\|(z_1,\ldots, z_m)\|_p^p \leq \gamma^p r^p - (\ell^p - \gamma^p)s^p \leq (r^*)^p$ because of the ``identity matrix gadget'' at the bottom of $\widehat{\basis}$. Furthermore, if there are at most $d$ non-zero $z_i$'s, $z_{i_1}, \ldots, z_{i_{j}}$, then since the input is a NO instance, there must be an element $u \in U$ not contained in $S_{i_1} \cup \cdots \cup S_{i_j}$. Thus, the $u$th coordinate of $\widehat{\vec{v}} - \ell \cdot \widehat{\vec{t}}$ is at least $r^*$, and we cannot possibly have $\|\vec{v}\|_p^p \leq \gamma^p r^p - (\ell^p - \gamma^p) s^p$. 
	
	So, it must be the case that there are at least $d$ non-zero $z_i$. In particular, $\|(z_1,\ldots, z_m)\|_p^p \geq d$, so we must have $\|\vec{v}^\dagger - \ell \cdot \vec{t}^\dagger\|_p^p \leq \gamma^p r^p - d - (\ell^p - \gamma^p)s^p < (r^*)^p -d$. Therefore,
	\begin{align*}
	N_p(\cL, (\gamma^p r^p - (\ell^p - \gamma^p) s^p)^{1/p}, \ell \cdot \vec{t}) 
	&\leq N_p(\Z^m, r^*, \vec0) \cdot N_p(\lat^\dagger, ((r^*)^p -d)^{1/p}, \ell \cdot \widehat{\vec{t}})\\
	&\leq N_p(\Z^m, r^*, \vec0) \cdot D_p(\lat^\dagger, ((r^*)^p -d)^{1/p})
	\; .
	\end{align*}
	The result follows by noting that the total contribution of the terms with $\ell \geq 1$ is at most $r^*/s$ times this quantity (since there are at most $r^*/s$ such terms).
\end{proof}

Our goal is now to construct a useful gadget $\lat^\dagger, \vec{t}^\dagger, r^\dagger$ for Theorem~\ref{thm:ESC_to_CVP}. In particular we wish to find a gadget with $\rank(\lat^\dagger) = O(n)$ and $G \gg A$. In the following rather technical lemma, we show that such a gadget exists if there exists any lattice with ``an exponential gap between the number of close vectors and the number of short vectors.''

\begin{lemma}
\label{lem:exp_gap}
Suppose that for some constants $p \ge 1$, $\eps \in (0,1/2)$, and $\beta > 1$, the following holds. For every sufficiently large integer $n$, there exists a lattice $\lat_n \subset \R^{d_n}$ with $\rank(\lat_n) = n$ and $d_n \le \poly(n)$, target $\vec{t}_n \in \R^{d_n}$, and radius $r_n > 0$ such that
		\begin{equation}
		\label{eq:more_close_than_short}
		N_p(\lat_n, (1-\eps)^{1/p} \cdot r_n, \vec{t}_n) \geq \beta^n \cdot N_p(\lat_n, r_n, \vec0)
		\; .
		\end{equation}
		Then, for any constants $C \geq 1$ and $\eta \in (2\eps^2,1)$, there exist constants $\gamma > 1$, $C^\dagger > 0$ such that the following holds. 
		\begin{enumerate}
			\item \label{item:non-uniform} For any sufficiently large integers $m$ and $d$ satisfying $m/C \leq d \leq C d$, there exist distances $r, s > 0$ and $\vec{t}^\dagger \in \R^{d_{n^\dagger}}$ for $n^\dagger := \ceil{C^\dagger m}$
			such that 
			\begin{equation}
			\label{eq:good_gadget}
			N_p(\Z^m, r^*, \vec0) \cdot \big(N_p(\cL^\dagger, r^*, \vec0) 
			+ (r^*/s) \cdot D_{p}(\lat^\dagger, ((r^*)^p - d)^{1/p}) \big) < 2^{-m} \cdot N_p(\cL^\dagger, (r^p - \eta d)^{1/p}, \vec{t}^\dagger)
			\; ,
			\end{equation}
			where $r^* := \gamma (r^p + s^p)^{1/p}$ and $\lat^\dagger := \alpha \lat_{n^\dagger}$ for some $\alpha > 0$.
			\item \label{item:uniform}  If we also have 
			\begin{equation}
			\label{eq:target_not_dumb}
			N_p(\lat_n, (1-\eps)^{1/p}r_n, \vec{t}_n) \geq  \beta^n \cdot D_p(\lat_n, (1-\eps/\sqrt{\eta})^{1/p}r_n)
			\; ,
			\end{equation}
			 then we can take $r = (1-\eps/2)^{1/p}\alpha r_{n^\dagger}$, $\vec{t}^\dagger = \alpha \vec{t}_{n^\dagger}$, $s = 1$, and $\alpha = (2\eta d/(\eps r_{n^\dagger}^p))^{1/p}$.
		\end{enumerate}
\end{lemma}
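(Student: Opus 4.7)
My plan is to prove the two items separately, with Item~\ref{item:uniform} reducing to a direct algebraic verification and Item~\ref{item:non-uniform} requiring an additional non-uniform pigeonhole step on the density function $\rho \mapsto D_p(\lat_{n^\dagger}, \rho)$.

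For Item~\ref{item:uniform}, I plug in the prescribed parameters. The key algebraic identity $r^p - \eta d = (1-\eps/2)\alpha^p r_{n^\dagger}^p - \eta d = (1-\eps)\alpha^p r_{n^\dagger}^p$ (which uses $\alpha^p = 2\eta d/(\eps r_{n^\dagger}^p)$) makes the ball of radius $(r^p - \eta d)^{1/p}/\alpha$ around $\vec{t}_{n^\dagger}$ in $\lat_{n^\dagger}$ coincide with the ``close ball'' in the hypothesis, yielding $G \geq \beta^{n^\dagger} N_p(\lat_{n^\dagger}, r_{n^\dagger}, \vec0)$ via \eqref{eq:more_close_than_short}. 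For $\gamma > 1$ sufficiently close to $1$, one verifies that $r^*/\alpha \leq r_{n^\dagger}$ and $((r^*)^p - d)/\alpha^p \leq (1-\eps/\sqrt{\eta}) r_{n^\dagger}^p$. The former, combined with \eqref{eq:more_close_than_short}, gives $N_p(\lat^\dagger, r^*, \vec0) \leq \beta^{-n^\dagger} G$, and the latter, combined with \eqref{eq:target_not_dumb}, gives $D_p(\lat^\dagger, ((r^*)^p - d)^{1/p}) \leq \beta^{-n^\dagger} G$. The condition $\eta > 2\eps^2$ keeps $1 - \eps/\sqrt{\eta}$ bounded away from $0$, and the slack required for $\gamma > 1$ reduces algebraically to $(1/\sqrt{\eta} - 1)^2 > 0$. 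Finally, the standard bound $N_p(\Z^m, r^*, \vec0) \leq K_0^m$ for some constant $K_0 = K_0(\eps, \eta, p, C)$ (which holds because $r^* = O(m^{1/p})$ and $d \leq Cm$) together with the choice $C^\dagger > \log(2K_0)/\log\beta$ yields \eqref{eq:good_gadget}.

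For Item~\ref{item:non-uniform}, where \eqref{eq:target_not_dumb} is no longer available, I replace it by a pigeonhole step. Partition $[0, (1-\eps)r_{n^\dagger}^p]$ uniformly in the $p$-th power by $\rho_j^p := j(1-\eps)r_{n^\dagger}^p/K$ for $j = 0, \ldots, K$. The telescoping identity $\prod_{j=1}^K D_p(\lat_{n^\dagger}, \rho_j)/D_p(\lat_{n^\dagger}, \rho_{j-1}) = D_p(\lat_{n^\dagger}, \rho_K) \geq \beta^{n^\dagger}$ (which follows from \eqref{eq:more_close_than_short} and $D_p(\lat_{n^\dagger}, \rho_0) = 1$) forces at least one consecutive ratio to be $\geq \beta^{n^\dagger/K}$. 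Non-uniformly, take the \emph{largest} such index $i$, set $\rho_{\text{good}} := \rho_i$, $\rho_{\text{bad}} := \rho_{i-1}$, and choose $\vec{t}^\dagger := \alpha \vec{u}^*$ with $\vec{u}^* \in \argmax_{\vec{u}} N_p(\lat_{n^\dagger}, \rho_{\text{good}}, \vec{u})$. Picking $\alpha, r, s, \gamma$ so that $(r^p - \eta d)/\alpha^p = \rho_{\text{good}}^p$ and $((r^*)^p - d)/\alpha^p = \rho_{\text{bad}}^p$ (e.g.\ with $\alpha^p = 2d/(\eps r_{n^\dagger}^p)$ and $\gamma$ close to $1$) makes $G = D_p(\lat_{n^\dagger}, \rho_{\text{good}})$ and $D_p(\lat^\dagger, ((r^*)^p - d)^{1/p}) = D_p(\lat_{n^\dagger}, \rho_{\text{bad}}) \leq \beta^{-n^\dagger/K} G$. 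Because $i$ is chosen \emph{last}, every subsequent ratio is $< \beta^{n^\dagger/K}$, so $D_p(\lat_{n^\dagger}, \rho_K) < \beta^{n^\dagger(K-i)/K} G$; combined with $r^*/\alpha \leq r_{n^\dagger}$ (which holds once $K$ is large enough) and \eqref{eq:more_close_than_short}, this gives $N_p(\lat^\dagger, r^*, \vec0) \leq \beta^{-n^\dagger} D_p(\lat_{n^\dagger}, \rho_K) \leq \beta^{-n^\dagger/K} G$.

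The hard part is ensuring the simultaneous feasibility of $s > 0$, $r^*/\alpha \leq r_{n^\dagger}$, and $\gamma > 1$ in Item~\ref{item:non-uniform}. With $\alpha^p = 2d/(\eps r_{n^\dagger}^p)$, these conditions reduce to requiring the partition width $(1-\eps)r_{n^\dagger}^p/K$ to be smaller than roughly $\eps(1-\eta)r_{n^\dagger}^p/2$, i.e., $K > 2(1-\eps)/(\eps(1-\eta))$, which is a constant depending only on $\eps$ and $\eta$. The overall bound then becomes $\text{LHS} \leq 2 r^* K_0^m \beta^{-n^\dagger/K} G$, which is $< 2^{-m} G$ as soon as $C^\dagger > K\log(2K_0)/\log\beta$, itself a constant depending only on $p, \eps, \eta, C, \beta$.
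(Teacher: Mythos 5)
Your proof is correct. The treatment of Item~\ref{item:uniform} is essentially the same as the paper's: fix $\lat^\dagger = \alpha \lat_{n^\dagger}$ with $\alpha^p = 2\eta d/(\eps r_{n^\dagger}^p)$, verify the three algebraic inequalities $r^p - \eta d = (1-\eps)\alpha^p r_{n^\dagger}^p$, $(r^*)^p \le \alpha^p r_{n^\dagger}^p$, and $(r^*)^p - d \le (1-\eps/\sqrt\eta)\alpha^p r_{n^\dagger}^p$ for $\gamma$ close to $1$, invoke \eqref{eq:more_close_than_short} and \eqref{eq:target_not_dumb}, and absorb $N_p(\Z^m, r^*, \vec0)\le K_0^m$ by choosing $C^\dagger$ large.

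Your Item~\ref{item:non-uniform} takes a genuinely different route. The paper proves it by \emph{reduction to Item~\ref{item:uniform}}: it constructs a new family $(\vec{t}_n', r_n', \eps')$ satisfying \emph{both} \eqref{eq:more_close_than_short} and \eqref{eq:target_not_dumb}, via a pigeonhole over $O(1)$ radii decreasing geometrically (each step shrinks the radius by a factor $(1-\eps_{i-1})$ where $\eps_i := \eps_{i-1}/\sqrt\eta$ grows geometrically, so only $\ell = O(1)$ steps fit before $r_n^{(\ell)} \le r_n/2$), with the index chosen so that both the consecutive-ratio condition and the ratio-to-$N_n$ condition hold simultaneously. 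You instead prove Item~\ref{item:non-uniform} directly: partition $[0,(1-\eps)r_{n^\dagger}^p]$ into $K$ equal pieces in the $p$-th power, pick the \emph{largest} index where the consecutive density ratio jumps by $\beta^{n^\dagger/K}$, and reuse the telescoping product from that index up to $\rho_K$ (rather than from $\rho_0$) to control $N_p(\lat^\dagger, r^*,\vec0)$. Both are pigeonhole arguments over a constant number of radii; the paper's buys economy (Item~\ref{item:uniform}'s verification is reused wholesale), while yours is more self-contained but requires solving the parametric equations $(r^p-\eta d)/\alpha^p = \rho_{\text{good}}^p$ and $((r^*)^p - d)/\alpha^p \approx \rho_{\text{bad}}^p$ directly for each possible index $i$. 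Two small points you should clean up: (i) since $i$ is chosen non-uniformly, $\gamma$ must not depend on it --- take $\gamma$ to be the minimum over $i \in \{1,\dots,K\}$ of the value making the equality hold (this minimum exceeds $1$ precisely under your constraint $K > 2(1-\eps)/(\eps(1-\eta))$), and replace the equality $((r^*)^p - d)/\alpha^p = \rho_{\text{bad}}^p$ by the inequality $\le$, which suffices since $D_p$ is nondecreasing; (ii) the factor $2r^*$ in your final bound is polynomial in $m$ and is absorbed as a lower-order term when choosing $C^\dagger$, which you should state rather than fold silently into $K_0$.
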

\begin{proof} We prove Item~\ref{item:uniform} first. We take $\lat^\dagger = \alpha \lat_{n^\dagger}$, $r = (1-\eps/2)^{1/p}\alpha r_{n^\dagger}$, $\vec{t}^\dagger = \alpha \vec{t}_{n^\dagger}$, $s = 1$, and $\alpha = (2\eta d/(\eps r_{n^\dagger}^p))^{1/p}$, as above. Notice that $r^p = 2(1-\eps/2)\eta d/\eps$. 
	We choose 
	\[
	\gamma^p = 1+ \min \{ 1/100 , \   (1/\sqrt{\eta}-1)^2/2\}\cdot \eps > 1
	\; .
	\]
	We assume without loss of generality that $\eta d \geq 10$, and $\eps d(1-\sqrt{\eta})^2 \geq 2\gamma^p$
	
	Notice that $r^* = O(m^{1/p})$.
	Thus, there is some constant $\tC$ such that 
	\[
	N_p(\Z^m, r^*, \vec0) \leq \tC^m
	\; .
	\]
	(This follows, e.g., from Eq.~\eqref{eq:Theta_upper_bound}.) Furthermore, notice that 
	\[
	(r^*)^p = \gamma^p (r^p + 1) \leq ((1-\eps/2) \alpha^p r_{n^\dagger}^p + 1) \cdot (1+\eps/100) \leq \alpha^p r_{n^\dagger}^p
	\; ,
	\]
	where the last inequality uses the assumption that $\eps \alpha^p r_{n^\dagger}^p/2 = \eta d \geq 10$.
	Therefore, by Eq.~\eqref{eq:more_close_than_short},
	\[
	N_p(\lat^\dagger, r^*, \vec0) \leq \beta^{-C^\dagger m} N_p(\lat_{n^\dagger}, (1-\eps)^{1/p} \cdot r_{n^\dagger}, \vec{t}_{n^\dagger}) = \beta^{-C^\dagger m} N_p(\cL^\dagger, (r^p - \eta d)^{1/p}, \vec{t}^\dagger)
	\; ,
	\]
	where the last equality uses the fact that 
	$
	\alpha^p (1-\eps)r_{n^\dagger}^p = r^p \cdot (1-\eps)/(1-\eps/2) = r^p - \eta d
	$.
	Finally, we note that 
	\begin{align*}
	(r^*)^p - d
		&\leq 2\eta d \cdot \frac{\cdot (1-\eps/2) \cdot (1+ (1/\sqrt{\eta}-1)^2\cdot \eps/2)}{\eps} - d + \gamma^p \\
		&= 2\eta d \cdot \frac{(1-\eps/\sqrt{\eta}) }{\eps} - \frac{\eps d}{2} \cdot (1-\sqrt{\eta})^2 + \gamma^p\\
		&\leq 2\eta d \cdot \frac{(1-\eps/\sqrt{\eta}) }{\eps} \\
		&= (1-\eps/\sqrt{\eta}) \cdot \alpha^p r_{n^\dagger}^p
	\; .
	\end{align*}
	Therefore, applying Eq.~\eqref{eq:target_not_dumb}, we have
	\[
	D_{p}(\lat^\dagger, ((r^*)^p - d)^{1/p}) \leq D_p(\lat_{n^\dagger}, (1-\eps/\sqrt{\eta})^{1/p}r_{n^\dagger}) \leq \beta^{-C^\dagger m} N_p(\lat_{n^\dagger}, (1-\eps)r_{n^\dagger}, \vec{t}_{n^\dagger})
	\; .
	\]
	Putting everything together, we see that it suffices to take $C^\dagger > 0$ to be a large enough constant so that $\tC^m\beta^{-C^\dagger m} < 2^{-m}/(1+r^*/s)$.
	
	We now move to proving Item~\ref{item:non-uniform}. By Item~\ref{item:uniform}, it suffices to find some new family of targets $\vec{t}_1',\vec{t}_2',\ldots,$ and radii $r_1',r_2',\ldots,$ satisfying Eqs.~\eqref{eq:more_close_than_short} and~\eqref{eq:target_not_dumb}, perhaps for some new constants $\eps' \in (0,1)$ and $\beta' > 1$. We would of course like to simply take $\vec{t}_n' = \vec{t}_n$ and $r_n' = r_n$, but we have to worry about the possibility that $D_{p}(\lat_n, (1-\eps/\sqrt{\eta}) r_n)$ is not much smaller than $N_p(\lat_n, (1-\eps)^{1/p}r_n, \vec{t}_n)$. Intuitively, this can only happen if either (1) $D_{p}(\lat_n, (1-\eps)^{1/p}r_n) \gg N_p(\lat_n, (1-\eps)^{1/p}r_n, \vec{t}_n)$, in which case we should clearly replace $\vec{t}_n$ with $\vec{t}_n'$ such that $N_p(\lat_n, (1-\eps)^{1/p}r_n, \vec{t}_n') = D_p(\lat_n, (1-\eps)^{1/p}r_n)$; or (2) $D_p(\lat_n, (1-\eps)^{1/p}r_n') \approx D_p(\lat_n, (1-\eps)^{1/p}r_n)$, for some $r_n' < r_n$, in which case we should clearly replace $r_n$ with $r_n'$. So, intuitively, as long as $\vec{t}_n$ and $r_n$ are ``reasonable,'' we should be done.
	
	To make this rigorous, let $N_n := N_p(\lat_n, r_n, \vec0)$, $\eps_0 := \eps$, and $r_n^{(-1)} = r_n$. For $i = 1,\ldots, \ell+1$, let $\eps_i := \eps_{i-1}/\sqrt{\eta}$ and $r_n^{(i)} := (1-\eps_{i-1}) \cdot r_n^{(i-1)} $. Here, $\ell$ is the largest integer such that $r_n^{(\ell)} > r_n/2$. In particular, $\ell$ is a constant. Let $N_n^{(i)} := D_p(\cL_n, r_n^{(i)})$. It follows that $N_{n}^{(\ell+1)} \le N_n$, since if there were $N_n+1$ distinct lattice vectors $\vec{v}_1, \ldots, \vec{v}_{N_n+1}$ at distance $r_n/2$ from any vector $\vec{t}$ then, by triangle inequality, there would necessary be $N_n+1$ distinct lattice vectors $\vec{v}_i - \vec{v}_1$ for $i = 1,\ldots, N_n +1$ of length at most $r_n$, contradicting the definition of $N_n$. 
	
	 By Eq.~\eqref{eq:more_close_than_short}, we have
	$N_n^{(0)} \ge \beta^{n^\dagger} \cdot N_n$. 
	Thus, there exists an $i \in \{0, \ldots, \ell\}$ such that 
	\[
	\frac{N_n^{(i)}}{N_n^{(i+1)}} \ge (\beta')^{n} \;,
	\]
	and 
	\[
	\frac{N_n^{(i)}}{N_n} \ge (\beta')^{n} \;,
	\]
	where $\beta' := \beta^{1/(\ell+1)} > 1$. Fix such an $i$. Then, taking $\eps' := \eps_{i-1}$, $r_n' := r_n^{(i-1)}$, and $\vec{t}_n'$ to be any vector satisfying $N_p(\cL, r_n^{(i)}, \vec{t}_n') = D_p(\cL, r_n^{(i)})$ gives the result.
\end{proof}

Putting everything together, we get the following conditional result.

\begin{corollary}
	\label{cor:gadget_to_hardness}
	Suppose that for some constants $p \ge 1$, $\eps \in (0,1/2)$, and $\beta > 1$, the following holds. For every sufficiently large integer $n$, there exists a lattice $\lat_n \subset \R^{d_n}$ with $\rank(\lat_n) = n$ and $d_n \le \poly(n)$, target $\vec{t}_n \in \R^{d_n}$, and radius $r_n > 0$ such that
			\begin{equation}
			\label{eq:more_close_than_short_corollary}
			N_p(\lat_n, (1-\eps)^{1/p} \cdot r_n, \vec{t}_n) \geq \beta^n \cdot N_p(\lat_n, r_n, \vec{0})
			\; .
			\end{equation}
	Then for any constant $C \geq 1$ and $\eta \in (0,1)$, there is a constant $\gamma > 1$, such that there is an efficient (non-uniform) reduction from from Gap-$3$-$\SAT_{\eta}^{\le C}$ on $n$ variables to $\SVP_{p, \gamma}$ on a lattice of rank $O(n)$ and dimension $O(n)+d_{O(n)}$, for some constant $\gamma > 1$.
	
	Furthermore, if $\lat_n$, $\vec{t}_n$, and $r_n$ are computable in time $\poly(n)$ and	for any constant $\delta \in (\eps, 1)$, there exists a $\beta' > 1$ such that
	\begin{equation}
	\label{eq:target_not_dumb_corollary}
	N_p(\lat_n, (1-\eps)^{1/p}r_n, \vec{t}_n) \geq  (\beta')^n \cdot D_p(\lat_n, (1-\eps/\delta)^{1/p}r_n)
	\; ,
	\end{equation}
	then we may replace the non-uniform reduction with a randomized reduction.
\end{corollary}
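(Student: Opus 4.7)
The plan is to chain together the reductions already established in the excerpt. Starting from Gap-$3$-$\SAT_\eta^{\le C}$ on $n$ variables, the first step is Theorem~\ref{thm:SAT_to_ESC}, which produces an $\ESC_{\eta^*}$ instance with $m, k \in [n, C^* n]$ and size bound $d = n/\eta^*$ for some constants $C^* > 1$ and $\eta^* \in (0,1)$ depending only on $C$ and $\eta$. In particular, $d$ is within a constant factor of $m$, matching the size hypothesis of Lemma~\ref{lem:exp_gap}.

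Next I would invoke Lemma~\ref{lem:exp_gap} with this $\eta^*$ and with the lemma's parameter $C$ chosen so that $m/C \le d \le mC$. Before doing so, observe that Eq.~\eqref{eq:more_close_than_short_corollary} is monotone in $\eps$: if it holds for some $\eps_0$, it also holds for every smaller $\eps > 0$, since shrinking $\eps$ only enlarges the left-hand side while leaving the right-hand side unchanged. So I may replace the given $\eps$ by a small enough value to ensure $\eta^* > 2\eps^2$, which is the numerical hypothesis required by the lemma. Item~\ref{item:non-uniform} of Lemma~\ref{lem:exp_gap} then supplies constants $\gamma > 1$ and $C^\dagger > 0$ together with a gadget $(\lat^\dagger, \vec{t}^\dagger, r, s)$ with $\rank(\lat^\dagger) = n^\dagger = \Theta(n)$ satisfying Eq.~\eqref{eq:good_gadget} for the $\ESC$ parameters $m$ and $d$ produced above.

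Feeding this gadget into Theorem~\ref{thm:ESC_to_CVP} yields an $(A, G)\text{-}\CVP_{p, \gamma}$ instance on a lattice of rank $m + n^\dagger = \Theta(n)$ and dimension $O(n) + d^\dagger$ with $d^\dagger \le \poly(n)$, and with parameters chosen so that Eq.~\eqref{eq:good_gadget} forces $G \ge 2^m A \ge 1000 A$. A final application of Theorem~\ref{thm:sparsification_reduction} then converts the $(A, G)\text{-}\CVP_{p, \gamma}$ instance into an $\SVP_{p, \gamma}$ instance on a lattice of rank $O(n)$ via randomized sparsification. The non-uniformity of the overall reduction is inherited entirely from Item~\ref{item:non-uniform} of Lemma~\ref{lem:exp_gap}: choosing the optimal target $\vec{t}^\dagger$ (which realizes $D_p(\lat^\dagger, \cdot)$) and the shrunk radius index $i$ there requires data that is not a priori efficiently computable.

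For the ``furthermore'' clause I would instead use Item~\ref{item:uniform} of Lemma~\ref{lem:exp_gap}. The hypothesis Eq.~\eqref{eq:target_not_dumb_corollary}, applied with $\delta := \sqrt{\eta^*}$ — which lies in $(\eps, 1)$ because $\eta^* > 2\eps^2 > \eps^2$ — exactly matches Eq.~\eqref{eq:target_not_dumb}, so Item~\ref{item:uniform} provides an explicit gadget obtained by scaling $\lat_n$ and $\vec{t}_n$, all of which are computable in polynomial time under the efficient-computability hypothesis. This removes the need for the non-uniform search, leaving Theorem~\ref{thm:sparsification_reduction} as the sole source of randomness and yielding a randomized reduction. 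The main care point throughout the argument is the interlocking bookkeeping of the constants $\eps, \eta^*, \beta, \gamma, C^\dagger$ so that $\gamma$ ends up a fixed constant strictly greater than $1$ independent of $n$ and so that $G$ comfortably dominates $A$; no new ideas beyond what Lemma~\ref{lem:exp_gap}, Theorem~\ref{thm:ESC_to_CVP}, and Theorem~\ref{thm:sparsification_reduction} already package are needed.
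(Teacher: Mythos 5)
Your proposal traces the same chain of reductions as the paper's own proof: Theorem~\ref{thm:SAT_to_ESC} to get an $\ESC_{\eta^*}$ instance, then Theorem~\ref{thm:ESC_to_CVP} with the gadget supplied by Lemma~\ref{lem:exp_gap} to obtain an $(A,G)\text{-}\CVP_{p,\gamma}$ instance with $A \leq 2^{-m} G$, then Theorem~\ref{thm:sparsification_reduction} to reach $\SVP_{p,\gamma}$; for the ``furthermore'' you correctly use Item~\ref{item:uniform} of Lemma~\ref{lem:exp_gap} with $\delta = \sqrt{\eta^*}$ (matching the $(1-\eps/\sqrt{\eta})$ shift in Eq.~\eqref{eq:target_not_dumb} --- your choice here is the right one; the paper writes $\delta = 1/\sqrt{\eta'}$, which appears to be a slip, since that value exceeds $1$). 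One small caution: the monotonicity-in-$\eps$ detour you introduce to force $\eta^* > 2\eps^2$ is unnecessary, because the $\eta^*$ produced by Theorem~\ref{thm:SAT_to_ESC} can be taken strictly larger than $1/2 > 2\eps^2$ for any $\eps \in (0,1/2)$; and it is not entirely safe for the ``furthermore,'' since Eq.~\eqref{eq:target_not_dumb_corollary} is stated in terms of the original $\eps$ and a window $(\eps,1)$ of $\delta$'s, which does not automatically transfer when you replace $\eps$ by a smaller $\eps'$ on the left-hand side while keeping $\delta = \sqrt{\eta^*}$. Dropping the detour and noting $\eta^* > 2\eps^2$ directly removes this wrinkle and brings your argument into exact alignment with the paper's proof.
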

\begin{proof}
	By Theorem~\ref{thm:SAT_to_ESC}, it suffices to show a reduction from $\ESC_{\eta'}$ with $d = n/\eta'$ and $m  \in [n, C'n]$ for some constants $C' \geq 1$ and $\eta' \in (0,1)$. By Theorem~\ref{thm:sparsification_reduction}, it suffices to reduce this to $(A, G)\text{-}\CVP_{p,\gamma}$ on a rank $O(n)$ lattice with $A \leq 2^{-m} G$. By Theorem~\ref{thm:ESC_to_CVP}, such a reduction exists, but it requires a gadget a lattice of rank $O(n)$ satisfying Eq.~\eqref{eq:good_gadget} as auxiliary input.
	
	By Item~\ref{item:non-uniform} of Lemma~\ref{lem:exp_gap}, such a gadget exists for sufficiently large $n$ if Eq.~\eqref{eq:more_close_than_short_corollary} holds. Therefore, we get a non-uniform reduction that uses this gadget as advice.
	
	To prove the ``furthermore,'' it suffices to show that the additional assumptions are sufficient to make this gadget efficiently computable. Indeed, by Item~\ref{item:uniform} of Lemma~\ref{lem:exp_gap}, if Eq.~\eqref{eq:target_not_dumb_corollary} holds with $\delta = 1/\sqrt{\eta'}$, then we may take the gadget to be a scaling of $\lat_{n^\dagger}$, $\vec{t}_{n^\dagger}$, and $r_{n^\dagger}$ for an appropriate choice of $n^\dagger = O(n)$. The scaling and $n^\dagger$ are clearly efficiently computable. So, if both Eqs.~\eqref{eq:more_close_than_short_corollary} and~\eqref{eq:target_not_dumb_corollary} hold and the family is efficiently computable, then the advice is indeed efficiently computable, as needed.
\end{proof}

\subsection{Gap-ETH Hardness of \texorpdfstring{$\SVP_p$}{SVP\_p} for \texorpdfstring{$p > 2$}{p > 2}}
\label{sec:gapeth_lp}

In order to prove Gap-ETH hardness of $\SVP_p$, we will need the following lemma. The proof can be found in Section~\ref{sec:integer_points}.

\begin{lemma}
	\label{lem:integer_lattice_bound}
	For any constants $p > 2$ and $\delta\in (0,1)$, there exist (efficiently computable) constants $\beta > 1$, $t \in (0,1/2]$, $C_r > 0$, and $\eps \in (0,\delta)$, such that for any positive integer $n$,
	\[
		N_p(\Z^n, (1-\eps)^{1/p} \cdot r, \vec{t} ) \geq \beta^n \cdot N_p(\lat, r, \vec0)
		\; ,
	\]
	where $r := C_r n^{1/p}$ and $\vec{t} = (t,t,\ldots, t) \in \R^n$, and 
	\[
	N_p(\Z^n, (1-\eps)^{1/p}r, \vec{t}) \geq  \beta^n \cdot D_p(\Z^n, (1-\eps/\delta)^{1/p}r)
	 \; .
	\]
\end{lemma}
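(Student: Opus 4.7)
The plan is to reduce everything to an analysis of the generating function $g(t, C) := \inf_{\tau > 0} \exp(\tau C^p)\Theta_p(\tau;t)$. The content of Section~\ref{sec:integer_points} (which the lemma is drawn from) gives the sharp asymptotic $N_p(\Z^n, Cn^{1/p}, (t,\ldots,t)) = g(t,C)^n \cdot 2^{\pm O(\sqrt n)}$ from Eq.~\eqref{eq:Theta_approx_intro}, and an analogous upper bound $D_p(\Z^n, Cn^{1/p}) \le \widetilde{g}(C)^n$ obtained by pushing $\sup_{\vec{t}}$ past $\inf_\tau$ coordinate-wise, where $\widetilde{g}(C) := \inf_\tau \exp(\tau C^p) \sup_{t' \in [0,1/2]} \Theta_p(\tau; t')$. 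After absorbing the sub-exponential slack into $\beta$, both inequalities of the lemma reduce to
\[
g(t,(1-\eps)^{1/p} C_r) \;>\; \beta \cdot g(0, C_r) \qquad \text{and} \qquad g(t,(1-\eps)^{1/p} C_r) \;>\; \beta \cdot \widetilde{g}((1-\eps/\delta)^{1/p} C_r).
\]

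For the first inequality I invoke the key structural fact (also proved in Section~\ref{sec:integer_points}) that for every $p > 2$ there exist $\tau_0 > 0$ and $t_0 \in (0,1/2]$ with $\Theta_p(\tau_0; t_0) > \Theta_p(\tau_0; 0)$. Set $t := t_0$ and pick $C_r$ so that $\tau_0$ is the minimizer in $g(t_0, C_r)$, i.e.\ $C_r^p = -\partial_\tau \log \Theta_p(\tau_0; t_0)$. Then $g(t_0,C_r) = \exp(\tau_0 C_r^p)\Theta_p(\tau_0; t_0) \geq g(0, C_r) \cdot \mu_1$ with $\mu_1 := \Theta_p(\tau_0; t_0)/\Theta_p(\tau_0; 0) > 1$, because $g(0, C_r) \leq \exp(\tau_0 C_r^p)\Theta_p(\tau_0; 0)$. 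Continuity of $g(t_0, \cdot)$ in the radius preserves almost all of this ratio under the $(1-\eps)^{1/p}$-perturbation, giving the first inequality with any $\beta \in (1, \mu_1)$ once $\eps$ is small.

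The second inequality is the delicate one, because $\widetilde g(C_r) \ge g(t_0, C_r)$ rules out the $\eps = 0$ comparison. The escape is the slack $\eps/\delta > \eps$: the argument of $\widetilde g$ is evaluated at a strictly smaller radius. An envelope-theorem computation gives
\[
\frac{d}{d\eps}\Bigl[\log g(t_0,(1-\eps)^{1/p} C_r) - \log\widetilde g((1-\eps/\delta)^{1/p} C_r)\Bigr]_{\eps=0} = C_r^p\bigl(\widetilde\tau(C_r)/\delta - \tau_0\bigr),
\]
with $\widetilde\tau(C_r)$ the minimizer in the definition of $\widetilde g(C_r)$. Once this derivative is positive and large, integrating from $0$ to a suitable $\eps \in (0,\delta)$ overcomes the bounded starting deficit $\log \widetilde g(C_r) - \log g(t_0, C_r)$ and yields the second inequality with a strict $\beta > 1$.

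The main obstacle is the simultaneous parameter selection: the pair $(t_0, \tau_0)$ must lie inside the open region guaranteed by the structural fact, and the induced $C_r$ must additionally make $\widetilde\tau(C_r)/\delta$ exceed $\tau_0$ by enough to beat the deficit. I plan to resolve this by sliding $C_r$ along the continuous curve $C_r^p = -\partial_\tau \log \Theta_p(\tau_0; t_0)$ inside the admissible region toward values where the supremum defining $\widetilde g$ is attained at $t' = 1/2$; there, $\widetilde\tau(C_r)$ essentially coincides with the concentration parameter of $\Theta_p(\tau; 1/2)$, which diverges as $C_r \downarrow 1/2$. With parameters chosen, $\beta$ is taken slightly below the smaller of the two ratios proved above, and efficient computability follows because all constants depend only on $p$ and $\delta$ through continuous functions that can be computed to arbitrary precision by standard one-dimensional optimization.
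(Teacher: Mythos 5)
Your framework and your treatment of the first inequality are sound and essentially agree with the paper: you use the $\Theta_p$-asymptotic $N_p(\Z^n, Cn^{1/p}, \vec{t}) = g(t,C)^{n} \cdot 2^{\pm O(\sqrt n)}$ from Theorem~\ref{thm:theta_gives_good_approx}, plus the structural fact (Lemma~\ref{lem:local_minimum_theta}) that $p > 2$ gives some $\tau_0, t_0$ with $\Theta_p(\tau_0;t_0) > \Theta_p(\tau_0;0)$, and the first inequality then follows by continuity.

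The gap is in the second inequality, and it is a genuine one: the ``derivative--integrate--slide $C_r$'' plan does not clearly close.  Your own envelope computation shows the derivative of the difference at $\eps = 0$ is $C_r^p(\widetilde\tau(C_r)/\delta - \tau_0)$, of order $1/\delta$ when $\delta$ is small, while the integration range is confined to $\eps < \delta$; the integral and the ``starting deficit'' $\log\widetilde g(C_r) - \log g(t_0, C_r)$ are therefore both $O(1)$ and it is not clear which dominates.  The subsequent fix of sliding $C_r$ along the curve $C_r^p = -\partial_\tau\log\Theta_p(\tau_0;t_0)$ also does not clearly work: sliding $C_r$ means sliding $\tau_0$, which can push $(\tau_0, t_0)$ out of the region where $\Theta_p(\tau_0;t_0) > \Theta_p(\tau_0;0)$ and break the \emph{first} inequality.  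The idea you are missing is a much simpler parameter choice: fix $\tau_0 = 1$ (the precise value is immaterial since $C_r$ is free) and take $t_0$ to be the point \emph{maximizing} $\Theta_p(1;\cdot)$ over $[0,1/2]$, which lies in $(0,1/2]$ by Lemma~\ref{lem:local_minimum_theta}.  Then for any $\vec{t}' \in \R^n$ one has $\Theta_p(1;\vec t') = \prod_i \Theta_p(1; t_i') \le \Theta_p(1;t_0)^{n}$, so the Chernoff upper bound on $D_p(\Z^n, (1-\eps/\delta)^{1/p}r)$ at $\tau = 1$ already matches the $\Theta_p(1;t_0)^n$ appearing in the lower bound on $N_p(\Z^n, (1-\eps)^{1/p}r, \vec t)$, and your ``deficit'' is zero at the base of the comparison.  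The only surviving term is the radius gap, yielding the clean, $n$-independent ratio $\exp(\eps(1/\delta - 1)C_r^p)$ for \emph{any} $\eps \in (0,\delta)$; you then pick $\eps$ small only to make the \emph{first} inequality go through.  With the maximizer choice there is no need for the envelope/integration machinery at all.
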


\begin{corollary}
	\label{cor:Gap3SAT_to_SVP}
	For any constants $p > 2$ and $C' \geq 1$,  there exists a constant $\gamma > 1$ such that there is an efficient (randomized) reduction from Gap-$3$-$\SAT_{\eta'}^{\le C'}$ on $n$ variables to $\SVP_{p, \gamma}$ on a lattice of rank and dimension $O(n)$
	
	In particular, for some constant $\gamma > 1$, there is no $2^{o(n)}$-time algorithm for $\SVP_{p, \gamma}$ unless (randomized) Gap-ETH is false.
\end{corollary}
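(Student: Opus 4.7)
The plan is to assemble this corollary directly from the two preceding results: Lemma~\ref{lem:integer_lattice_bound} supplies a lattice family satisfying the density hypotheses, and Corollary~\ref{cor:gadget_to_hardness} converts such a family into the desired reduction. Concretely, fix $p > 2$, $C' \geq 1$, and an arbitrary $\eta' \in (0,1)$ (which will later come from Gap-ETH). First I would set $\delta := \sqrt{\eta'} \in (0,1)$ and invoke Lemma~\ref{lem:integer_lattice_bound} with this $\delta$, obtaining constants $\beta > 1$, $t \in (0,1/2]$, $C_r > 0$, and $\eps \in (0,\delta)$ for which the family $\lat_n := \Z^n$, $\vec{t}_n := (t,\ldots,t)$, $r_n := C_r n^{1/p}$ satisfies simultaneously
\[
N_p(\Z^n,(1-\eps)^{1/p}r_n,\vec{t}_n) \;\geq\; \beta^n \cdot N_p(\Z^n,r_n,\vec0)
\]
and
\[
N_p(\Z^n,(1-\eps)^{1/p}r_n,\vec{t}_n) \;\geq\; \beta^n \cdot D_p(\Z^n,(1-\eps/\delta)^{1/p}r_n).
\]
These are exactly Eqs.~\eqref{eq:more_close_than_short_corollary} and~\eqref{eq:target_not_dumb_corollary} of Corollary~\ref{cor:gadget_to_hardness}.

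Second, I would observe that the family $(\lat_n,\vec{t}_n,r_n)$ is trivially efficiently computable (the standard basis of $\Z^n$, a vector of $t$'s, and an explicit scalar), with ambient dimension $d_n = n \leq \poly(n)$. Thus the full hypothesis of the ``furthermore'' clause of Corollary~\ref{cor:gadget_to_hardness} is met, and applying that corollary with $C = C'$ and $\eta = \eta'$ produces a constant $\gamma > 1$ together with an efficient randomized reduction from Gap-$3$-$\SAT_{\eta'}^{\le C'}$ on $n$ variables to $\SVP_{p,\gamma}$ on a lattice of rank $O(n)$ and dimension $O(n) + d_{O(n)} = O(n)$, as required for the first assertion of the corollary.

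Third, for the ``in particular'' clause I would quote the Manurangsi--Raghavendra theorem recalled in Section~\ref{sec:fine-grained_prelims}: unless (randomized) Gap-ETH fails, there exist constants $\eta'\in(0,1)$ and $C' \geq 2$ for which Gap-$3$-$\SAT_{\eta'}^{\le C'}$ admits no $2^{o(n)}$-time algorithm. Feeding these specific values of $\eta'$ and $C'$ into the reduction just constructed, I obtain a $\gamma > 1$ such that a hypothetical $2^{o(m)}$-time algorithm for $\SVP_{p,\gamma}$ on lattices of rank $m = O(n)$ would yield a $2^{o(n)}$-time randomized algorithm for Gap-$3$-$\SAT_{\eta'}^{\le C'}$, contradicting Gap-ETH.

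There is no genuinely hard step here: all the technical content lives in Lemma~\ref{lem:integer_lattice_bound} (which handles the combinatorics of integer points in shifted $\ell_p$ balls via the $\Theta_p$ analysis) and in Corollary~\ref{cor:gadget_to_hardness} (which packages Khot-style sparsification into a generic gadget theorem). The only subtlety worth flagging in the write-up is the alignment of parameters between the Lemma and the Corollary, namely that the ``$\delta$'' in Eq.~\eqref{eq:target_not_dumb_corollary} must be chosen as $\sqrt{\eta'}$ so that the non-dumbness condition matches the slack $\eps/\sqrt{\eta}$ required by Item~\ref{item:uniform} of Lemma~\ref{lem:exp_gap} when instantiated with the $\ESC_{\eta'}$ obtained from Theorem~\ref{thm:SAT_to_ESC}.
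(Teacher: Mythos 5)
Your proposal is correct and follows exactly the same route as the paper's (very short) proof: apply Lemma~\ref{lem:integer_lattice_bound} with $\lat_n := \Z^n$, $\vec{t}_n := (t,\ldots,t)$, $r_n := C_r n^{1/p}$ to supply the density hypotheses of Corollary~\ref{cor:gadget_to_hardness}, note efficient computability to access the ``furthermore'' (randomized, not merely non-uniform) clause, and close with the Manurangsi--Raghavendra Gap-ETH theorem. The only quibble is cosmetic: the $\delta$ fed into Lemma~\ref{lem:integer_lattice_bound} should be the square root of the $\ESC$ density parameter produced by Theorem~\ref{thm:SAT_to_ESC}, not of the Gap-$3$-SAT parameter itself; your final sentence shows you understand this, but the earlier ``set $\delta := \sqrt{\eta'}$'' overloads $\eta'$ between the two roles.
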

\begin{proof}
	Simply combine Corollary~\ref{cor:gadget_to_hardness} with Lemma~\ref{lem:integer_lattice_bound}, with $\lat_n := \Z^n$, $r_n := C_r n^{1/p}$, and $\vec{t}_n := (t,t,\ldots, t) \in \R^n$.
\end{proof}

\subsection{Gap-ETH-hardness of \texorpdfstring{$\SVP_2$}{SVP\_2} under a certain assumption}
\label{sec:gapeth_l2}

We now show that, at least in the special case of the $\ell_2$ norm, we can simplify Corollary~\ref{cor:gadget_to_hardness} a bit further to get a relatively clean conditional hardness result. (See Theorem~\ref{thm:kissing_gives_hardness} below.) We focus on the $\ell_2$ norm because (1) we obtain the simplest statement in this case; and (2) hardness in the $\ell_2$ norm implies hardness in other norms. But, we mention in passing that qualitatively similar results hold for all $\ell_p$ norms.

\begin{lemma}
\label{lem:angle}
For any integer $n \ge 100$, let $\vec{u}\in \R^n$ be a fixed vector, and let $\vec{t} \in \R^n$ be a uniformly random unit vector in the $\ell_2$ norm. Then for any $0 < \theta_1 < \theta_2 < \pi$, the probability that the angle between $\vec{u}$ and $\vec{t}$ is between $\theta_1$ and $\theta_2$ is at least
 \[
 \int_{\theta_1}^{\theta_2} \sin^{n-2} \theta \, {\rm d} \theta \;.
 \]
\end{lemma}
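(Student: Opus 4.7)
The plan is to reduce to a one-dimensional integral via the spherical symmetry of the uniform distribution on the sphere, then bound the resulting normalization constant.

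First, by rotational invariance of the uniform measure on the unit $\ell_2$-sphere $S^{n-1} \subset \R^n$, we may assume without loss of generality that $\vec{u} = \vec{e}_1$, so that the angle between $\vec{u}$ and $\vec{t}$ is simply $\theta(\vec{t}) := \arccos(t_1)$. Next, I would parametrize $S^{n-1}$ in spherical coordinates where the first coordinate is $\theta \in [0, \pi]$ and the remaining $n-2$ coordinates parametrize the sphere $S^{n-2} \subset \R^{n-1}$ of radius $\sin\theta$ cutting out the ``latitude'' at angle $\theta$. A standard Jacobian computation gives that the pushforward of the uniform measure on $S^{n-1}$ under $\theta$ has density
\[
f_n(\theta) \;=\; c_n \sin^{n-2}\theta, \qquad \theta \in [0, \pi],
\]
where $c_n$ is the normalization constant $c_n = \big(\int_0^\pi \sin^{n-2}\theta\,d\theta\big)^{-1}$. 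Equivalently, using the standard evaluation of this Wallis-type integral, one gets $c_n = \Gamma(n/2)/(\sqrt{\pi}\,\Gamma((n-1)/2))$. So the probability in question equals exactly
\[
c_n \int_{\theta_1}^{\theta_2} \sin^{n-2}\theta\,d\theta,
\]
and it suffices to show $c_n \geq 1$ whenever $n \geq 100$.

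The remaining step is the bound $\Gamma(n/2) \geq \sqrt{\pi}\,\Gamma((n-1)/2)$ for $n \geq 100$. I would invoke Gautschi's inequality (or Wendel's inequality) for Gamma-ratios, which in particular gives
\[
\frac{\Gamma(n/2)}{\Gamma((n-1)/2)} \;\geq\; \sqrt{(n-2)/2}
\]
for all $n \geq 2$, so that $c_n \geq \sqrt{(n-2)/(2\pi)}$. Plugging in $n \geq 100$ yields $c_n \geq \sqrt{98/(2\pi)} > 3 > 1$, completing the proof. (In fact, $n \geq 9$ already suffices, but the stated hypothesis $n \geq 100$ gives comfortable slack.)

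There is no real obstacle here; the only nontrivial input is the Gamma-ratio estimate, which is standard. The one thing to be slightly careful about is verifying the Jacobian factor $\sin^{n-2}\theta$ in the first step; this just comes from the fact that the $(n-2)$-sphere at ``latitude'' $\theta$ has radius $\sin\theta$, so its $(n-2)$-dimensional surface measure scales as $\sin^{n-2}\theta$.
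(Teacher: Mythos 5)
Your proof is correct, and its overall structure matches the paper's: both reduce the claim to showing that the angle $\theta$ between $\vec{t}$ and a fixed direction has density proportional to $\sin^{n-2}\theta$, and then to showing that the normalization constant $c_n = \bigl(\int_0^\pi \sin^{n-2}\theta\, d\theta\bigr)^{-1}$ is at least $1$. The difference is in how that last quantitative step is handled. The paper avoids evaluating the integral entirely: it bounds $\int_0^\pi \sin^{n-2}\theta\,d\theta = 2\int_0^{\pi/2}\sin^{n-2}\theta\,d\theta$ by splitting the range at $2\pi/5$, using $\sin^{n-2}\theta \le \sin^{98}(2\pi/5)$ on the first piece and $\sin^{n-2}\theta \le 1$ on the short remaining piece; a quick numerical check then gives the bound $\le 1$. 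You instead compute the normalization constant exactly as $c_n = \Gamma(n/2)/(\sqrt{\pi}\,\Gamma((n-1)/2))$ and invoke Wendel's (or Gautschi's) inequality on the Gamma ratio. Your route buys a quantitatively sharper conclusion, $c_n = \Theta(\sqrt{n})$ (and in particular shows that a much smaller threshold on $n$ already suffices), at the cost of appealing to a Gamma-function inequality rather than a purely elementary one-line estimate. Both proofs take the spherical density fact $f_n(\theta)\propto\sin^{n-2}\theta$ as a known input (the paper cites it; you sketch the Jacobian). One small point worth being careful with in your write-up: the cleanest standard form of Wendel's inequality gives $\Gamma(n/2)/\Gamma((n-1)/2) \ge (n-1)/\sqrt{2n}$, which indeed dominates your stated $\sqrt{(n-2)/2}$ since $(n-1)^2 \ge n(n-2)$, so your bound is valid but slightly lossy; this does not affect correctness here.
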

\begin{proof}
The probability density function of the angle $\theta$ is proportional to $\sin^{n-2} \theta$~\cite{CFJ13}. So, it suffices to show that the constant of proportionality is at least one. I.e., it suffices to show that the integral of this from $0$ to $\pi$ is at most one. Indeed,
\begin{align*}
 \int_{0}^{\pi} \sin^{n-2} \theta \,{\rm d} \theta &= 2  \int_{0}^{\pi/2} \sin^{n-2} \theta \,{\rm d} \theta \\
 &= 2  \int_{0}^{2\pi/5} \sin^{n-2} \theta \,{\rm d} \theta + 2  \int_{2\pi/5}^{\pi/2} \sin^{n-2} \theta \,{\rm d} \theta \\
 &\le 2 \sin^{98}(2\pi/5) + 2 (\pi/2 - 2\pi/5) \\
 &\le 1\;.
 \end{align*}
 The result follows.
 \end{proof}

\begin{corollary}
	\label{cor:random_vector_close}
	For any $\eps, \delta \in (0,1/100)$ with  $\eps  \leq \sqrt{\delta}/10$ and vector $\vec{v} \in \R^n$ with $n \geq 100$ and $1 \leq \|\vec{v}\|_2^2 \leq 1+\delta$, we have
	\[
	\Pr[\|\vec{v} - \vec{t}\|_2^2 \leq 1-\eps] \geq \frac{\eps}{2\sqrt{\delta (1 + \delta)}} \cdot  \left(\frac{1 - 2\eps - \eps^2/\delta}{1+\delta}\right)^{n/2}
	\; ,
	\]
	where $\vec{t} \in \R^n$ is a vector of $\ell_2$ norm $\sqrt{\delta}$ chosen uniformly at random.
\end{corollary}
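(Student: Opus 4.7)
My plan is to translate the inequality $\|\vec{v}-\vec{t}\|_2^2 \leq 1-\eps$ into a purely angular condition on $\vec{t}$ and then invoke Lemma~\ref{lem:angle}. Expanding $\|\vec{v}-\vec{t}\|_2^2 = \|\vec{v}\|_2^2 + \delta - 2\|\vec{v}\|_2\sqrt{\delta}\cos\theta$, where $\theta$ is the angle between $\vec{v}$ and $\vec{t}$, the desired event becomes
\[
\cos\theta \;\geq\; \frac{\|\vec{v}\|_2^2 + \delta - 1 + \eps}{2\|\vec{v}\|_2\sqrt{\delta}}.
\]
A quick derivative check shows the RHS is increasing in $\|\vec{v}\|_2$ on $[1,\sqrt{1+\delta}]$, so the worst case is $\|\vec{v}\|_2^2 = 1+\delta$, which yields the uniform sufficient condition $\cos\theta \geq c^* := (2\delta+\eps)/(2\sqrt{\delta(1+\delta)})$. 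Thus it is enough to lower-bound $\Pr[\cos\theta \geq c^*]$.

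Next, I would apply Lemma~\ref{lem:angle} with $\theta_1 := \arccos(c^*+\xi)$ and $\theta_2 := \arccos(c^*)$ for a value $\xi > 0$ to be chosen (taking care that $0 < c^* < c^*+\xi < 1$ so the arccosines are real). Since $[\theta_1,\theta_2] \subseteq [0,\arccos c^*]$, the substitution $u = \cos\theta$ gives
\[
\Pr[\cos\theta \geq c^*] \;\geq\; \int_{\theta_1}^{\theta_2}\!\sin^{n-2}\theta\, d\theta \;=\; \int_{c^*}^{c^*+\xi} (1-u^2)^{(n-3)/2}\, du \;\geq\; \xi\,\bigl(1-(c^*+\xi)^2\bigr)^{(n-3)/2},
\]
using monotonicity of $1-u^2$ on the interval. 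The key step is to reverse-engineer $\xi$ so that $1-(c^*+\xi)^2$ matches the base in the target. A direct calculation shows that the choice $\xi := \eps/(2\sqrt{\delta(1+\delta)})$ gives $c^*+\xi = (\delta+\eps)/\sqrt{\delta(1+\delta)}$, hence $(c^*+\xi)^2 = (\delta+\eps)^2/(\delta(1+\delta))$ and
\[
1-(c^*+\xi)^2 \;=\; \frac{1-2\eps-\eps^2/\delta}{1+\delta}.
\]
Since this base lies in $(0,1)$ under our hypotheses, the exponent $(n-3)/2$ can be weakened to $n/2$, producing exactly the claimed bound.

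The main obstacle is not conceptual but bookkeeping: one must verify from $\eps \leq \sqrt{\delta}/10$ and $\delta \leq 1/100$ that $c^*+\xi \leq 1$ (so the integration range is valid) and that $1 - 2\eps - \eps^2/\delta > 0$ (so the claimed base is positive). Both reduce quickly to the inequality $\eps(2\delta+\eps) \leq \delta$, which follows since $\eps^2 \leq \delta/100$ and $2\delta\eps \leq \delta^{3/2}/5$. The only slightly nonobvious moment is \emph{finding} the right $\xi$; once one notices that the target prefactor $\eps/(2\sqrt{\delta(1+\delta)})$ is itself the correct interval width, everything else collapses into routine algebra.
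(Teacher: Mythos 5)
Your argument is correct and is essentially the same as the paper's: both reduce the event to $\cos\theta \ge (2\delta+\eps)/(2\sqrt{\delta(1+\delta)})$ via the law of cosines with worst-case $\|\vec{v}\|_2^2 = 1+\delta$, both invoke Lemma~\ref{lem:angle} on the interval with endpoints $\arccos\bigl((\delta+\eps)/\sqrt{\delta(1+\delta)}\bigr)$ and $\arccos\bigl((2\delta+\eps)/(2\sqrt{\delta(1+\delta)})\bigr)$, and both lower-bound the resulting integral by (interval width)$\times$(minimum of the integrand). The only cosmetic difference is that you estimate the width by substituting $u = \cos\theta$ and integrating in $u$-space, whereas the paper stays in angle-space and uses $\left|\frac{\partial}{\partial x}\arccos x\right| \ge 1$; these yield exponents $(n-3)/2$ and $(n-2)/2$ respectively, both of which are weakened to $n/2$ since the base lies in $(0,1)$.
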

\begin{proof}
	Let $\theta$ be the angle between $\vec{v}$ and $\vec{t}$. We have
	\begin{align*}
	\Pr[\|\vec{v} - \vec{t} \|^2 \le 1 - \eps] &= \Pr[\|\vec{v}\|_2^2 + \|\vec{t}\|_2^2 - 2 \|\vec{v}\|_2 \|\vec{t}\|_2 \cos \theta \le 1 - \eps] \\
	&\ge \Pr\Big{[} \cos \theta \ge \frac{ 1+\delta + \delta - 1 + \eps}{2 \sqrt{\delta (1 + \delta)}}\Big{]} \\ 
	&\geq \Pr\big{[} \cos \theta \ge (2\delta + \eps)/(2\sqrt{\delta(1 + \delta)})\big{]}\\
	&\geq \int_{\arccos \left(\frac{\delta + \eps}{\sqrt{\delta(1 + \delta)}}\right)}^{\arccos\left(\frac{2\delta + \eps}{2\sqrt{\delta(1 + \delta)}}\right)} \sin^{n-2} \theta \,{\rm d} \theta\\
	&\geq \left(\arccos \left(\frac{\delta + \eps}{\sqrt{\delta(1 + \delta)}}\right)- \arccos \left(\frac{2\delta + \eps}{2\sqrt{\delta(1 + \delta)}}\right) \right) \cdot \left(\frac{1 - 2\eps - \eps^2/\delta}{1+\delta}\right)^{n/2}\\
	&\geq\frac{\eps}{2\sqrt{\delta (1 + \delta)}} \cdot  \left(\frac{1 - 2\eps - \eps^2/\delta}{1+\delta}\right)^{n/2}
	\; ,
	\end{align*}
	where the first inequality uses the fact that $x - a/x$ is an increasing function of $x$ if $a >0$, the second-to-last inequality uses the fact that $\sin(\arccos(x)) = \sqrt{1-x^2}$, and the last inequality uses the fact that $\frac{\partial}{\partial x} \arccos(x) = -(1-x^2)^{-1/2} \leq -1$.
\end{proof}

\begin{corollary}
	\label{cor:kissing_gives_local_density}
	For any lattice $\lat \subset \R^n$ with $n \geq 100$, $\eps, \delta \in (0,1/100)$ with  $\eps  \leq \sqrt{\delta}/10$, radius $r > 0$, and target $\vec{t} \in \R^n$, there exists a vector $\vec{t}' \in \R^n$ such that
	\[
	N_2(\lat, \sqrt{1-\eps} \cdot r, \vec{t}') \geq \frac{\eps}{2\sqrt{\delta (1 + \delta)}} \cdot  \left(\frac{1 - 2\eps - \eps^2/\delta}{1+\delta}\right)^{n/2}
 \cdot N_2(\lat, \sqrt{1+\delta} \cdot r, \vec{t})
	\; .
	\]
\end{corollary}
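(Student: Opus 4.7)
The plan is to use a standard probabilistic argument: I will pick $\vec{t}'$ at random as $\vec{t}' := \vec{t} + r \sqrt{\delta} \cdot \vec{u}$, where $\vec{u}$ is a uniformly random unit vector in $\R^n$, and show that the expectation of $N_2(\lat, \sqrt{1-\eps} \cdot r, \vec{t}')$ over $\vec{u}$ is at least the claimed quantity. By averaging, some concrete realization of $\vec{t}'$ then achieves the bound. By linearity of expectation,
\[
\expect_{\vec{u}}\bigl[N_2(\lat, \sqrt{1-\eps} \cdot r, \vec{t}')\bigr] = \sum_{\vec{y} \in \lat} \Pr_{\vec{u}}\bigl[\|\vec{y} - \vec{t}'\|_2 \leq \sqrt{1-\eps} \cdot r\bigr]\;.
\]

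I will restrict the sum to the $N_2(\lat, \sqrt{1+\delta} \cdot r, \vec{t})$ many lattice vectors $\vec{y}$ with $\|\vec{y} - \vec{t}\|_2 \leq \sqrt{1+\delta} \cdot r$, and show that each term is at least $C := \frac{\eps}{2 \sqrt{\delta(1+\delta)}} \bigl(\frac{1-2\eps-\eps^2/\delta}{1+\delta}\bigr)^{n/2}$. To do so, fix such a $\vec{y}$, set $\vec{v} := (\vec{y}-\vec{t})/r$ (so $\|\vec{v}\|_2^2 \leq 1+\delta$), and note that the event $\|\vec{y} - \vec{t}'\|_2 \leq \sqrt{1-\eps} \cdot r$ is exactly $\|\vec{v} - \sqrt{\delta} \vec{u}\|_2^2 \leq 1 - \eps$, where $\sqrt{\delta} \vec{u}$ is a uniformly random vector of $\ell_2$ norm $\sqrt{\delta}$. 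This is precisely the setting of Corollary~\ref{cor:random_vector_close}, which yields the desired lower bound $C$ on the probability.

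The one subtlety is that Corollary~\ref{cor:random_vector_close} is stated with the extra hypothesis $\|\vec{v}\|_2^2 \geq 1$, whereas some of our $\vec{y}$ may satisfy $\|\vec{y}-\vec{t}\|_2 < r$. Inspecting its proof, the relevant probability is $\Pr[\cos\theta \geq f(\|\vec{v}\|_2)]$ with $f(x) := (x^2 + \delta - 1 + \eps)/(2 x \sqrt{\delta})$, and a quick derivative computation shows that $f$ is strictly increasing on $x > 0$ whenever $\delta + \eps < 1$ (which holds since $\delta,\eps < 1/100$). Hence for every $0 < \|\vec{v}\|_2 \leq \sqrt{1+\delta}$, the threshold $f(\|\vec{v}\|_2)$ is at most $f(\sqrt{1+\delta}) = (2\delta + \eps)/(2 \sqrt{\delta(1+\delta)})$, and the remainder of the proof of Corollary~\ref{cor:random_vector_close} goes through verbatim to give the same bound $C$. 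The degenerate case $\vec{y} = \vec{t}$ (if it occurs) is handled trivially, since then $\|\vec{y} - \vec{t}'\|_2 = r\sqrt{\delta} \leq r\sqrt{1-\eps}$ deterministically and the probability is $1 \geq C$.

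Combining these bounds gives $\expect_{\vec{u}}\bigl[N_2(\lat, \sqrt{1-\eps} \cdot r, \vec{t}')\bigr] \geq C \cdot N_2(\lat, \sqrt{1+\delta} \cdot r, \vec{t})$, so some $\vec{u}$ (and hence some $\vec{t}'$) witnesses the inequality. I expect no real obstacle here: the only non-cosmetic step is the mild extension of Corollary~\ref{cor:random_vector_close} to all $\|\vec{v}\|_2 \leq \sqrt{1+\delta}$, which reduces to the monotonicity of $f$ already implicit in its original proof.
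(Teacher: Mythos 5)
Your proof is correct and is precisely the paper's argument: choose $\vec{t}'$ uniformly at random at $\ell_2$ distance $\sqrt{\delta}\, r$ from $\vec{t}$, use linearity of expectation over lattice vectors $\vec{y}$ with $\|\vec{y}-\vec{t}\|_2 \leq \sqrt{1+\delta}\, r$, and lower-bound each per-vector probability via Corollary~\ref{cor:random_vector_close}. You additionally (and rightly) flagged a small gap that the paper's one-line proof glosses over---Corollary~\ref{cor:random_vector_close} is stated only for $\|\vec{v}\|_2^2 \geq 1$, while lattice points strictly inside the ball of radius $r$ around $\vec{t}$ violate this---and your fix, noting that the cosine threshold $\tfrac{1}{2\sqrt{\delta}}\bigl(x - \tfrac{1-\delta-\eps}{x}\bigr)$ is increasing in $x$ on $(0,\sqrt{1+\delta}\,]$ (so the bound in that corollary's proof holds verbatim for all $0 < \|\vec{v}\|_2 \leq \sqrt{1+\delta}$, and the $\vec{y}=\vec{t}$ case is trivial), is exactly the missing justification.
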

\begin{proof}
	Take $\vec{t}' \in \R^n$ to be a uniformly random vector that is at $\ell_2$ distance $\sqrt{\delta} \cdot r$ away from $\vec{t}$ as in Corollary~\ref{cor:random_vector_close}. Then, by the corollary, the expectation of $N_2(\lat, \sqrt{1-\eps} \cdot r, \vec{t}')$ is at least the right-hand side of the above inequality, and the result follows.
\end{proof}

We now show that a family of lattices with ``surprisingly many'' points in a ball is enough to instantiate Corollary~\ref{cor:gadget_to_hardness}. In more detail, notice that we expect $N_2(\lat, r, \vec{t})$ to ``typically'' be proportional $r^n$, and in particular, we expect $N_2(\lat, r', \vec{t})  \lesssim (r'/r)^n \cdot N_2(\lat, r, \vec0)$ for $r' > r$. We show that, in order to instantiate Corollary~\ref{cor:gadget_to_hardness}, it suffices to find a family of lattices, radii, and targets such $N_2(\lat, r', \vec{t})  \geq \beta^n \cdot (r'/r)^n \cdot N_2(\lat, r, \vec0)$ for some constant $\beta > 1$ and $r' \leq O(r)$. For example, by taking $r = \lambda_1^{(2)}(\lat) - \eps$, $r' = \lambda_1^{(2)}(\lat)$, and $\vec{t} = \vec0$, we see that it would suffice to find a family of lattices with exponentially many non-zero vectors of minimal length in the $\ell_2$ norm---i.e., a family with exponentially large kissing number.

\begin{theorem}
	\label{thm:kissing_gives_hardness}
	Suppose that for some constants $\beta,\alpha > 1$, the following holds. For every sufficiently large integer $n$, there exists a lattice $\lat_n \subset \R^n$, radii $0 < r_n \leq r_n' \leq \alpha r_n$, and a target $\vec{t}_n \in \R^n$  such that
	\begin{equation}
	\label{eq:kissing}
		N_2(\lat_n, r_n', \vec{t}_n) \geq  \beta^n \cdot (r_n'/r_n)^n \cdot N_2(\lat_n, r_n, \vec0)
		\; .
	\end{equation}
	Then for any constant $C \geq 1$ and $\eta \in (0,1)$, there is an efficient (non-uniform) reduction from from Gap-$3$-$\SAT_{\eta}^{\le C}$ on $n$ variables to $\SVP_{2, \gamma}$ on a lattice of rank and dimension $O(n)$, for some constant $\gamma > 1$.
	
	In particular, if such a family of lattices and radii exists, then for some constant $\gamma > 1$, there is no $2^{o(n)}$-time algorithm for $\SVP_{2,\gamma}$ unless (non-uniform) Gap-ETH is false.
\end{theorem}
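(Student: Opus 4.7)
The plan is to reduce to Corollary~\ref{cor:gadget_to_hardness} by constructing a family of lattices, targets, and radii satisfying its hypothesis Eq.~\eqref{eq:more_close_than_short_corollary}, starting from the assumed family $\{\lat_n, \vec{t}_n, r_n, r_n'\}$ that satisfies Eq.~\eqref{eq:kissing}. The key step will be to use Corollary~\ref{cor:kissing_gives_local_density} to shift $\vec{t}_n$ to a new target $\vec{t}_n^\star$ whose nearby lattice points vastly outnumber those near the origin, thereby producing the density condition required by Eq.~\eqref{eq:more_close_than_short_corollary}. Since the shifted target is produced only non-constructively (via averaging over random shifts), the resulting reduction will be non-uniform, which matches the conclusion of the theorem.

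Concretely, we would fix small constants $\eps_c, \delta_c \in (0,1/100)$ with $\eps_c \leq \sqrt{\delta_c}/10$, set $\rho := \sqrt{(1-\eps_c)/(1+\delta_c)} < 1$, and for each sufficiently large $n$ choose a positive integer $k = k(n)$ with $\rho^k \cdot (r_n'/r_n) = \sqrt{1-\eps^*}$, for a fixed $\eps^* \in (0, 1-1/\beta^2)$; since $r_n'/r_n \in [1,\alpha]$, such $k(n)$ exists and is uniformly bounded by a constant $k_{\max}$ depending only on $\alpha, \eps_c, \delta_c, \eps^*$. Iterating Corollary~\ref{cor:kissing_gives_local_density} $k$ times, starting from radius $r_n'/\sqrt{1+\delta_c}$ and target $\vec{t}_n$, produces a shifted target $\vec{t}_n^\star$ satisfying
\[ N_2(\lat_n, \rho^k r_n', \vec{t}_n^\star) \geq F^k \cdot N_2(\lat_n, r_n', \vec{t}_n) \geq F^k \beta^n (r_n'/r_n)^n \cdot N_2(\lat_n, r_n, \vec0) \; , \]
where $F$ is the density factor from Corollary~\ref{cor:kissing_gives_local_density} and the second inequality is Eq.~\eqref{eq:kissing}. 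By the defining equation for $k$, one has $\rho^k r_n' = \sqrt{1-\eps^*}\cdot r_n$, so taking $R := r_n$ as the ``short'' radius in Eq.~\eqref{eq:more_close_than_short_corollary} gives $N_2(\lat_n, R, \vec0) = N_2(\lat_n, r_n, \vec0)$ and the desired density inequality with $(\beta')^n := F^k \beta^n (r_n'/r_n)^n$ and $\eps := \eps^*$.

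The main technical obstacle will be to show that $\beta'$ is bounded below by a constant $> 1$ uniformly in $n$, since both $k(n)$ and $r_n'/r_n$ vary. Substituting the identity $\rho^k (r_n'/r_n) = \sqrt{1-\eps^*}$ into the asymptotic form of $F^{k/n}$ and simplifying, the $(r_n'/r_n)$ factor cancels and one obtains $\beta' \to (\sigma/\sqrt{1-\eps_c})^{k(n)} \sqrt{1-\eps^*}\, \beta$ as $n \to \infty$, where $\sigma := \sqrt{1-2\eps_c-\eps_c^2/\delta_c} \leq \sqrt{1-\eps_c}$. Since $\sigma/\sqrt{1-\eps_c}$ is at most $1$ and approaches $1$ as $\eps_c \to 0$ with $\delta_c$ fixed, and since $k(n) \leq k_{\max}$, for any $\beta > 1$ we can make $(\sigma/\sqrt{1-\eps_c})^{k_{\max}} \sqrt{1-\eps^*}\, \beta > 1$ by choosing $\eps^*, \eps_c, \delta_c$ sufficiently small (and $\eps_c$ much smaller than $\delta_c$). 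Applying the non-uniform reduction from Corollary~\ref{cor:gadget_to_hardness} to the resulting family $(\lat_n, \vec{t}_n^\star, r_n)$ then yields the claimed non-uniform hardness of $\SVP_{2,\gamma}$ under Gap-ETH.
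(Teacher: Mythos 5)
Your proof is correct in outline and takes a genuinely different route from the paper, though two details deserve to be tightened.

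The paper proves Theorem~\ref{thm:kissing_gives_hardness} by a case split on the ratio $r_n'/r_n$. When this ratio is close to $1$ (specifically $< 1+1/400$), a \emph{single} application of Corollary~\ref{cor:kissing_gives_local_density} is enough to produce the shifted target needed for Corollary~\ref{cor:gadget_to_hardness}. When the ratio is farther from $1$, the paper introduces a chain of intermediate radii $s_n^{(0)}, \ldots, s_n^{(\ell)}$ and applies a pigeonhole argument to Eq.~\eqref{eq:yet_another_more_close_than_short}, finding a consecutive pair of radii with ratio $<1+1/400$ that still witnesses an exponential density gap; this reduces the far case to the near case. You instead \emph{iterate} Corollary~\ref{cor:kissing_gives_local_density} a constant number of times $k(n)\le k_{\max}$, each application shrinking the radius by the fixed factor $\rho = \sqrt{(1-\eps_c)/(1+\delta_c)}$, and then show directly via the telescoping exponent calculation that the net exponential loss $F^{k}$ is dominated by the $(r_n'/r_n)^n \beta^n$ gain. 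Both approaches ultimately pay for reducing the radius: the paper pays once by replacing $(r_n', r_n)$ with a nearby pair and possibly weakening $\beta$ to $\beta^{1/(\ell+1)}$; you pay incrementally through the factor $(\sigma/\sqrt{1-\eps_c})^{k_{\max}}$. Your approach is in some ways cleaner because it avoids the case split and the explicit pigeonhole lemma, at the cost of the slightly delicate trade-off calculation at the end. Your observation that the $(r_n'/r_n)$ factor cancels exactly against $\rho^{-k}$, leaving only $(\tau/\rho)^k = (\sigma/\sqrt{1-\eps_c})^k$, is the crux and is correct.

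Two points need repair. First, you cannot demand a positive \emph{integer} $k$ with $\rho^k (r_n'/r_n) = \sqrt{1-\eps^*}$ exactly; instead take $k$ to be the smallest nonnegative integer with $\rho^k (r_n'/r_n) \le \sqrt{1-\eps^*}$. Then $N_2(\lat_n, \sqrt{1-\eps^*}\,r_n, \vec{t}_n^{(k)}) \ge N_2(\lat_n, \rho^k r_n', \vec{t}_n^{(k)})$ still gives the needed lower bound, and the minimality of $k$ gives $\rho^k(r_n'/r_n) > \rho\sqrt{1-\eps^*}$, which only costs an additional constant factor $\rho$ in the final estimate of $\beta'$ (absorbed by shrinking $\eps^*$). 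Second, your closing sentence says to make $\eps^*, \eps_c, \delta_c$ ``sufficiently small,'' but shrinking $\delta_c$ makes $\rho \to 1$ and $k_{\max} \to \infty$, creating a circularity. The earlier phrase ``with $\delta_c$ fixed'' is the right move and should be made the explicit instruction: fix $\delta_c$ (say $\delta_c = 1/200$), so that $\rho$ is bounded away from $1$ and $k_{\max}$ stabilizes to a constant depending only on $\alpha$ and $\delta_c$, and \emph{then} let $\eps_c \to 0$ (staying $\le \sqrt{\delta_c}/10$) so that $\sigma/\sqrt{1-\eps_c} \to 1$ while $k_{\max}$ stays fixed. With this sequencing the conclusion $(\sigma/\sqrt{1-\eps_c})^{k_{\max}}\sqrt{1-\eps^*}\,\beta > 1$ follows cleanly for any $\beta > 1$.
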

\begin{proof}
	We first prove the theorem under the assumption that $r_n'/r_n < 1+1/400$.
	We may assume without loss of generality that $r_n'/r_n > \beta^{1/2}$. (Otherwise, we may replace $r_n'$ with $r_n \cdot \beta^{1/2}$ and $\beta$ with $\beta^{1/2}$. Then, clearly $r_n'/r_n > \beta^{1/2}$ and Eq.~\eqref{eq:kissing} still holds.)
	
	Let $s_n := r_n \cdot \sqrt{1-\eps'}/\sqrt{1-\eps}$ for some small constants $0 < \eps' < \eps < 1/100$ to be chosen later. By Corollary~\ref{cor:gadget_to_hardness}, it suffices to find a $\vec{t}_n' \in \R^n$ such that 
		\begin{equation}
		\label{eq:more_close_than_short_one_more_time}
		N_2(\lat_n, \sqrt{1-\eps'} \cdot r_n, \vec{t}_n') = N_2(\lat_n, \sqrt{1-\eps} \cdot  s_n, \vec{t}_n') \geq (\beta')^n \cdot N_2(\lat_n, r_n, \vec{0})
		\; ,
		\end{equation}
		for some constant $\beta' > 1$ and sufficiently large $n$. 		
		Let $\delta := (r_n'/s_n)^2 - 1 \in (\beta^{1/2}-1,1/100)$. By Corollary~\ref{cor:kissing_gives_local_density}, as long as $\eps < \sqrt{\beta^{1/2} - 1}/10$, we see that there exists 
		a $\vec{t}_n' \in \R^n$ with
		\begin{align*}
		N_2(\lat_n, \sqrt{1-\eps} \cdot s_n, \vec{t}_n')
		&\geq \frac{\eps}{2\sqrt{\delta (1 + \delta)}} \cdot  \left(\frac{1 - 2\eps - \eps^2/\delta}{1+\delta}\right)^{n/2}
		\cdot N_2(\lat_n, \sqrt{1+\delta} \cdot s_n, \vec{t}_n)\\
		&= \frac{\eps}{2\sqrt{\delta (1 + \delta)}} \cdot  \left(\frac{1 - 2\eps - \eps^2/\delta}{1+\delta}\right)^{n/2}
		\cdot N_2(\lat_n, r_n', \vec{t}_n)\\
		&\geq \frac{\eps}{2\sqrt{\delta (1 + \delta)}} \cdot  \left(\frac{1 - 2\eps - \eps^2/\delta}{1+\delta}\right)^{n/2} \cdot (r_n'/r_n)^n \cdot \beta^n
		\cdot N_2(\lat_n, r_n, \vec0) \\
		&> \frac{\eps}{2\sqrt{\delta (1 + \delta)}} \cdot  (1 - 2\eps - \eps^2/\delta)^{n/2}\cdot \beta^n
		\cdot N_2(\lat_n, r_n, \vec0)
		\; .
		\end{align*}
	The result follows by taking $\eps$ and $\beta'$ to be a small enough constant that $(1 - 2\eps - \eps^2/\delta)^{1/2}\cdot \beta > \beta' > 1$.
	
	Now, suppose $r_n'/r_n \geq 1+1/400$. We claim that there exists some $0 < s_n \leq s_n' < (1+1/400) \cdot s_n$ such that 
	\begin{equation}
	\label{eq:yet_another_more_close_than_short}
	D_2(\lat_n, s_n') \geq  (\beta')^n \cdot (s_n'/s_n)^n \cdot N_2(\lat_n, s_n, \vec0)
	\end{equation}
	for some constant $\beta' > 1$ depending only on $\beta$ and $\alpha$. Clearly, this suffices to prove the result. 
	
	To that end, for $i = 0,\ldots, \ell := \ceil{2000 \log(r_n'/r_n)}$, let $s_n^{(i)} := (r_n'/r_n)^{i/\ell} \cdot r_n$, $D_n^{(i)} := D_2(\lat_n, s_n^{(i)})$, and $N_n^{(i)} := N_2(\lat_n, s_n^{(i)}, \vec0)$.  We take to be the constant $\beta' := \beta^{1/\ceil{2000 \beta \log \alpha}} \leq \beta^{1/\ell}$.
	We claim that it suffices to find an index $i$ such that $D_n^{(i+1)}/N_n^{(i)}  \geq (r_n'/r_n)^{n/\ell} \beta^{n/\ell}$. Indeed, if such an index exists, then we can take $s_n' := s_n^{(i+1)}$ and $s_n := s_n^{(i)}$. Clearly, $s_n'/s_n = (r_n'/r_n)^{1/\ell} \leq 1+1/400$, and Eq.~\eqref{eq:yet_another_more_close_than_short} is indeed satisfied.
	
	It remains to find such an index $i$.
	By assumption, we have $D_n^{(\ell)}/N_n^{(0)} \geq (r_n'/r_n)^n \cdot \beta^n$, and by definition, we have $D_n^{(i)}  \geq N_n^{(i)}$.   If there exists an index $i$ such that $D_n^{(i+1)}/D_n^{(i)} \geq (r_n'/r_n)^{n/\ell} \beta^{n/\ell}$, then we are done, since $N_n^{(i)} \leq D_n^{(i)}$. Otherwise, we have 
	\[
	D_n^{(1)} \geq (r_n'/r_n)^{-(\ell - 1)n/\ell} \cdot \beta^{-(\ell -1) n/\ell} D_n^{(\ell)} \geq (r_n'/r_n)^{n/\ell} \cdot \beta^{n/\ell} N_n^{(0)}
	\; ,
	\]
	as needed.
	
\end{proof}

\begin{corollary}
	\label{cor:ellp_hard}
	Suppose that for some constants $\beta,\alpha > 1$, the following holds. For every sufficiently large integer $n$, there exists a lattice $\lat_n \subset \R^n$, radii $0 < r_n \leq r_n' \leq \alpha r_n$, and a target $\vec{t}_n \in \R^n$  such that
	\[
	N_2(\lat_n, r_n', \vec{t}_n) \geq  \beta^n \cdot (r_n'/r_n)^n \cdot N_2(\lat_n, r_n, \vec0)
	\; .
	\]
	Then for any constants $C \geq 1$, $\eta \in (0,1)$, and $p \in [1,2]$, there is an efficient (non-uniform) reduction from from Gap-$3$-$\SAT_{\eta}^{\le C}$ on $n$ variables to $\SVP_{p, \gamma}$ on a lattice of rank and dimension $O(n)$, for some constant $\gamma > 1$.
	
	In particular, if such a family of lattices and radii exists, then for each constant $p \in [1,2]$ there exists a constant $\gamma_p > 1$, such that there is no $2^{o(n)}$-time algorithm for $\SVP_{p,\gamma_p}$ unless (non-uniform) Gap-ETH is false.
\end{corollary}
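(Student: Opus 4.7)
The plan is to simply compose the reduction from Theorem~\ref{thm:kissing_gives_hardness} with the norm-embedding reduction given by Corollary~\ref{cor:embedding_reduction}. Under the stated geometric assumption, Theorem~\ref{thm:kissing_gives_hardness} already provides, for any constants $C \geq 1$ and $\eta \in (0,1)$, an efficient non-uniform reduction from Gap-$3$-$\SAT_\eta^{\le C}$ on $n$ variables to $\SVP_{2,\gamma}$ on a lattice of rank and dimension $O(n)$, for some constant $\gamma > 1$ depending only on $\beta$, $\alpha$, $C$, and $\eta$.

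Fix any $p \in [1,2]$, and pick a constant $\gamma_p \in (1,\gamma)$. Corollary~\ref{cor:embedding_reduction} then gives an efficient rank-preserving non-uniform reduction from $\SVP_{2,\gamma}$ in dimension $d$ to $\SVP_{p,\gamma_p}$ in dimension $O(d)$. Composing the two non-uniform reductions yields an efficient non-uniform reduction from Gap-$3$-$\SAT_\eta^{\le C}$ on $n$ variables to $\SVP_{p,\gamma_p}$ on a lattice whose rank and dimension are both $O(n)$, which is the first claim of the corollary.

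For the ``in particular,'' suppose there were a $2^{o(n)}$-time algorithm for $\SVP_{p,\gamma_p}$. Then, running the composed reduction on a Gap-$3$-$\SAT_\eta^{\le C}$ instance (using the appropriate non-uniform advice for each input size) produces an $\SVP_{p,\gamma_p}$ instance of rank $O(n)$, which our hypothetical algorithm solves in time $2^{o(O(n))} = 2^{o(n)}$. By the Manurangsi--Raghavendra theorem quoted before Theorem~\ref{thm:SAT_to_ESC}, such an algorithm for Gap-$3$-$\SAT_\eta^{\le C}$ would contradict (non-uniform) Gap-ETH, for appropriate choices of the constants $\eta$ and $C$.

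There is essentially no obstacle here: all the work has been done in Theorem~\ref{thm:kissing_gives_hardness}, and the Regev--Rosen style embedding in Corollary~\ref{cor:embedding_reduction} is exactly tailored to convert Euclidean $\SVP$ hardness into $\ell_p$ hardness for $1 \le p \le 2$ while preserving rank up to a constant factor. The only minor point to verify is that one may indeed pick $\gamma_p > 1$ strictly less than $\gamma$ so that Corollary~\ref{cor:embedding_reduction} applies; this is immediate since $\gamma > 1$ is a fixed constant output of Theorem~\ref{thm:kissing_gives_hardness}.
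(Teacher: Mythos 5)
Your proof is correct and takes exactly the approach the paper uses: the paper's proof of Corollary~\ref{cor:ellp_hard} is literally to combine Theorem~\ref{thm:kissing_gives_hardness} with Corollary~\ref{cor:embedding_reduction}. Your additional remarks about choosing $\gamma_p \in (1,\gamma)$ so that Corollary~\ref{cor:embedding_reduction} applies, and about the rank and dimension both remaining $O(n)$ after composition, are accurate and simply make explicit what the paper leaves implicit.
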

\begin{proof}
	Combine Thereom~\ref{thm:kissing_gives_hardness} with Corollary~\ref{cor:embedding_reduction}.
\end{proof}

	\section{On the number of integer points in an \texorpdfstring{$\ell_p$}{ell p} ball}
	\label{sec:integer_points}

In this section, we prove Lemma~\ref{lem:integer_lattice_bound} by studying the function $N_p(\Z^n, r, \vec{t})$. The results in this section were originally proven by Mazo and Odlyzko~\cite{MO90} for $p = 2$ and Elkies, Odlyzko, and Rush~\cite{EOR91} for arbitrary $p$. In particular, the main theorem of this section, Theorem~\ref{thm:theta_gives_good_approx}, appeared in~\cite{MO90} for $p = 2$ and in~\cite{EOR91} for arbitrary $p$ (and even in a more general setting), and (a variant of) Lemma~\ref{lem:integer_lattice_bound} already appeared in~\cite{EOR91}. Our proof mostly follows that of Mazo and Odlyzko.

	\subsection{Approximation by \texorpdfstring{$\Theta_p$}{Theta}}
	
	We extend our definition of $\Theta_p(\tau)$ to ``shifted sums'' as follows. For $1 \leq p < \infty$, $\tau > 0$, and $t \in \R$, let
	\[
	\Theta_p(\tau; t) := \sum_{z \in \Z} \exp(-\tau |z - t|^p)
	\; .
	\]
	We then further extend this definition to vectors $\vec{t} = (t_1,\ldots, t_n) \in \R^n$ by
	\[
	\Theta_p(\tau; \vec{t}) := \prod_{i=1}^n \Theta_p(\tau; t_i)
	\]
		We will often assume without loss of generality that $t \in [0,1/2]$ and $\vec{t} \in [0,1/2]^n$.

	We have
	\[
	\Theta_p(\tau; \vec{t}) = \sum_{\vec{z} \in \Z^n} \exp(-\tau \|\vec{z} - \vec{t}\|_p^p)
	\; .
	\]
	It follows that for any radius $r > 0$ and any $\tau > 0$,
	\begin{equation}
	\label{eq:basic_upper_bound_theta}
	N_p(\Z^n, r, \vec{t}) \leq \exp(\tau r^p) \Theta_p(\tau; \vec{t})
	\; .
	\end{equation}
	We wish to show that this inequality is quite tight if $\tau$ satisfies $\mu_p(\tau; \vec{t}) = r^p$, where
	\[
	\mu_p(\tau; \vec{t}) := \sum_{i=1}^n \expect_{X \sim D_p(\tau; t_i)}[|X|^p]
	\; ,
	\]
	and $D_p(\tau; t)$ is the probability distribution over $\Z - t$ that assigns to each $x \in \Z -t$ probability $\exp(-\tau |x|^p)/\Theta_p(\tau; t)$.\footnote{It is easy to see that there exists a $\tau$ satisfying $\mu_p(\tau; \vec{t}) = r^p$ if and only if there is a lattice point inside the \emph{open} $\ell_p$ ball of radius $r$ around $\vec{t}$. So, we do not lose much by assuming that $r^p = \mu_p(\tau; \vec{t})$ for some $\tau > 0$.}
	Indeed, the main theorem that we prove in this section is the following (which, again, was originally proven in~\cite{MO90,EOR91}).
	
	\begin{theorem}
		\label{thm:theta_gives_good_approx}
		For any constants $p \geq 1$ and $\tau > 0$, there is another constant $C^* > 0$ such that
		\[
		\exp(\tau \mu_p(\tau; \vec{t}) -C^*\sqrt{n}) \cdot  \Theta_p(\tau; \vec{t}) \leq N_p(\Z^n, \mu_p(\tau; \vec{t})^{1/p}, \vec{t}) \leq \exp(\tau \mu_p(\tau; \vec{t}) ) \cdot  \Theta_p(\tau; \vec{t})
		\; ,
		\]
		for any $\vec{t} \in \R^n$ and any positive integer $n$.
	\end{theorem}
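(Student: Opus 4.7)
My plan is to reduce the lower bound to a concentration statement via a standard Gibbs/saddle-point argument. The upper bound is already recorded as Eq.~\eqref{eq:basic_upper_bound_theta}, so only the lower bound requires work.

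The first step is to introduce the tilted probability measure $\nu$ on $\Z^n$ that assigns each $\vec{z}$ the mass $\exp(-\tau\|\vec{z}-\vec{t}\|_p^p)/\Theta_p(\tau;\vec{t})$. Because $\Theta_p(\tau;\vec{t})$ factors across coordinates, under $\nu$ the entries $z_i$ are independent and $z_i - t_i$ is distributed according to $D_p(\tau;t_i)$. The defining property of $\mu := \mu_p(\tau;\vec{t})$ is exactly that $\expect_\nu[\|\vec{z}-\vec{t}\|_p^p] = \mu$. I may assume without loss of generality that $t_i \in [0,1/2]$ for each $i$ by the periodicity and reflection symmetries of $\Theta_p(\tau;\cdot)$.

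Next, I would show concentration of $X := \|\vec{z}-\vec{t}\|_p^p = \sum_i |z_i - t_i|^p$ around $\mu$ under $\nu$. Since $\tau$ and $p$ are fixed, the distributions $D_p(\tau;t_i)$ have exponentially decaying tails, so the mean, variance, and third absolute central moment of $|z_i - t_i|^p$ are uniformly bounded above for $t_i \in [0,1/2]$; moreover, a direct computation at the endpoints $t_i \in \{0,1/2\}$ combined with continuity in $t_i$ shows that the variance is bounded \emph{below} by a positive constant $\sigma_*^2$ depending only on $p$ and $\tau$. In particular the total variance of $X$ is $\Theta(n)$ and the summed third absolute central moment is $O(n)$, so the Berry-Esseen theorem for independent non-identical summands yields constants $c, C > 0$ (depending only on $p$ and $\tau$) and $n_0$ such that
\[
\Pr{}_\nu\!\bigl[\mu - C\sqrt{n} \leq X \leq \mu\bigr] \geq c
\]
for all $n \geq n_0$.

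The lower bound then follows mechanically. Rewriting this inequality in terms of $\Theta_p$ gives $\sum_{\vec{z}} \exp(-\tau\|\vec{z}-\vec{t}\|_p^p) \geq c\,\Theta_p(\tau;\vec{t})$, where the sum runs over the shell $\mu - C\sqrt{n} \leq \|\vec{z}-\vec{t}\|_p^p \leq \mu$. Each summand is at most $\exp(-\tau(\mu - C\sqrt{n}))$, so the shell contains at least $c\exp(\tau\mu - \tau C\sqrt{n})\,\Theta_p(\tau;\vec{t})$ lattice points, all of which lie inside the ball of $\ell_p$-radius $\mu^{1/p}$ around $\vec{t}$. Setting $C^* := \tau C + \log(1/c)$ (and inflating $C^*$ afterwards to handle the finitely many $n < n_0$, where the claim can be made vacuous) yields the theorem. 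The main technical obstacle I anticipate is the uniform lower bound $\sigma_*^2 > 0$ on the per-coordinate variance: without it, Berry-Esseen would give a shell probability that degenerates with $n$ and the bookkeeping at the end would break. All the moment bounds reduce to elementary estimates on $D_p(\tau;t)$, but this uniformity over $t \in [0,1/2]$ is the step most in need of care.
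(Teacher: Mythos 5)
Your proof is correct and shares the saddle-point philosophy of the paper's argument --- tilt by the Gibbs measure at inverse temperature $\tau$, concentrate $\|\vec{z}-\vec{t}\|_p^p$ around its mean $\mu_p(\tau;\vec{t})$, and turn the concentration into a count --- but it executes the concentration step with a genuinely different tool. The paper stays inside the algebra of $\Theta_p$: it introduces the three-term gadget $H_p(\tau,\delta;\vec{t})$, Taylor-expands $\log\Theta_p$ in $\tau$ (Lemma~\ref{lem:theta_derivatives}) using a uniform lower bound $C_{\min}$ on the per-coordinate variance to control the remainder, and then needs a separate constant $C_{\max}$ to migrate back from the parameters $\tau+\delta$ and $\tau+2\delta$ to $\tau$. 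This is effectively a one-sided Chernoff bound. You instead invoke Berry-Esseen for independent, non-identically distributed summands directly at $\tau$. Both routes rest on the same compactness fact (uniform moment bounds on $|z_i-t_i|^p$ together with a uniform variance lower bound $\sigma_*^2 > 0$ over $t_i\in[0,1/2]$), and both yield the same $\exp(-O(\sqrt{n}))$ loss; your version is arguably cleaner in that it avoids the $H_p$ construction and the $C_{\max}$ bookkeeping, at the cost of pulling in a heavier external theorem. Two small polishing remarks. First, the positivity of the per-coordinate variance for each $t_i\in[0,1/2]$ is immediate from the fact that $D_p(\tau;t)$ charges at least two values of $z$ with distinct $|z-t|^p$; continuity of the variance on the compact interval $[0,1/2]$ then gives the uniform lower bound, so no endpoint computation is needed. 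Second, for the finitely many $n < n_0$ you should note explicitly why inflating $C^*$ makes the inequality trivial: there is always at least one lattice point within $\ell_p$ distance $\mu_p(\tau;\vec{t})^{1/p}$ of $\vec{t}$, because $\mu_p(\tau;t_i) > \min_{z\in\Z}|z-t_i|^p$ in each coordinate, while the left-hand side is bounded above by a quantity depending only on $n_0$, $p$, $\tau$ once $C^*$ is large.
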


	\begin{lemma}
		\label{lem:theta_derivatives}
		For any $1 \leq p < \infty$, $\tau > 0$, and $t \in \R$,
		\[
		\frac{\partial }{\partial \tau } \log \Theta_p(\tau; t) = -\mu_p(\tau; t)  < 0
		\; ,
		\]
		and 
		\[
		\frac{\partial^2 }{\partial \tau^2} \log \Theta_p(\tau; t) = \expect_{X \sim D_p(\tau; t) }[|X|^{2p}] - \mu_p(\tau; t)^2 > 0
		\; .
		\]
	\end{lemma}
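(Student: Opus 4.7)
The plan is to prove both identities by straightforward term-by-term differentiation of the defining series for $\Theta_p(\tau; t) = \sum_{z \in \Z} \exp(-\tau |z-t|^p)$. First I would observe that for any fixed $t \in \R$ and any $\tau$ in a compact subinterval of $(0, \infty)$, the series converges absolutely and uniformly, as do the series obtained by differentiating once or twice with respect to $\tau$ (each extra differentiation introduces a polynomial factor $|z-t|^p$ or $|z-t|^{2p}$, which is easily dominated by the super-polynomial decay of $\exp(-\tau|z-t|^p)$). This justifies swapping $\partial_\tau$ with the sum.

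For the first identity, term-by-term differentiation gives
\[
\frac{\partial}{\partial \tau} \Theta_p(\tau; t) = -\sum_{z \in \Z} |z-t|^p \exp(-\tau |z-t|^p).
\]
Dividing by $\Theta_p(\tau; t)$ and recognizing the right-hand side as the expectation of $|X|^p$ under the probability distribution $D_p(\tau; t)$ which assigns mass $\exp(-\tau|x|^p)/\Theta_p(\tau; t)$ to each $x \in \Z - t$, I get $\partial_\tau \log \Theta_p(\tau; t) = -\mu_p(\tau; t)$. Strict negativity is immediate since $|X|^p \geq 0$ and is positive with positive probability.

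For the second identity, I would differentiate once more, either by applying $\partial_\tau$ directly to $-\mu_p(\tau; t)$ using the quotient rule on $\Theta_p'/\Theta_p$, or equivalently by recognizing that $\log \Theta_p(\tau; t)$ is (up to sign conventions) a cumulant generating function and its second derivative is the variance of the sufficient statistic. Either way, the computation yields
\[
\frac{\partial^2}{\partial \tau^2} \log \Theta_p(\tau; t) \;=\; \frac{\Theta_p''(\tau;t)}{\Theta_p(\tau;t)} - \Big(\frac{\Theta_p'(\tau;t)}{\Theta_p(\tau;t)}\Big)^2 \;=\; \expect_{X\sim D_p(\tau;t)}[|X|^{2p}] - \mu_p(\tau; t)^2,
\]
which is exactly $\mathrm{Var}_{X\sim D_p(\tau;t)}(|X|^p)$.

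The only remaining point is to verify strict positivity of this variance, i.e., that $|X|^p$ is not almost surely constant under $D_p(\tau; t)$. Since $D_p(\tau; t)$ has full support on the countably infinite set $\Z - t$ and $|z-t|^p \to \infty$ as $|z| \to \infty$, the random variable $|X|^p$ takes infinitely many distinct values with positive probability, so its variance is strictly positive. I do not expect any real obstacle in this proof; the only mild care needed is the justification for interchanging differentiation and summation, which is routine.
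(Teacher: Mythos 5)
Your proposal is correct and follows essentially the same route as the paper: term-by-term differentiation of the defining series (with the paper leaving the interchange of differentiation and summation, and the second-derivative computation, implicit), yielding the first derivative as $-\mu_p(\tau;t)$ and the second as the variance of $|X|^p$ under $D_p(\tau;t)$. Your added justifications for uniform convergence and for strict positivity of the variance are sound but routine, and the paper simply omits them.
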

	\begin{proof}
		We have
		\begin{align*}
		\frac{\partial }{\partial \tau } \log \Theta_p(\tau; t)  &= \frac{1}{\Theta(\tau; t)} \cdot \sum_{z \in \Z} \frac{\partial }{\partial \tau} \exp(-\tau |z-t|^p)\\
		&= -\frac{1}{\Theta(\tau; t)} \cdot \sum_{z \in \Z} |z-t|^p \exp(-\tau |z-t|^p)\\
		&= -\mu_p(\tau; t)
		\; .
		\end{align*}
		The second derivative follows from a similar computation.
	\end{proof}
	
	To use $\Theta_p(\tau; \vec{t})$ to get a lower bound on $N_p(\Z^n, r, \vec{t})$, we wish to ``isolate'' the contribution of vectors of length roughly $r$ to the sum $\Theta_p(\tau; \vec{t})$. To do so, we define for $\delta > 0$ the (rather unnatural) function
	\[
	H_p(\tau, \delta; \vec{t}) := \Theta_p(\tau + \delta; \vec{t}) - \exp(-\delta \mu_p(\tau; \vec{t})) \Theta_p(\tau; \vec{t}) - \exp(\delta \mu_p(\tau + 2\delta; \vec{t})) \Theta_p(\tau + 2\delta; \vec{t})
	\; .
	\]
	The following lemma shows why we are interested in $H_p$.
	
	\begin{lemma}
		\label{lem:H_gives_lower_bound}
		For any $1 \leq p < \infty$, $\tau > 0$, $\delta > 0$, and $\vec{t} \in [0,1/2]^n$,
		\[
		N_p(\Z^n, \mu_p(\tau; \vec{t})^{1/p}, \vec{t}) \geq \exp(\tau \mu_p(\tau + 2\delta; \vec{t}) ) H_p(\tau, \delta; \vec{t}) 
		\; .
		\]
	\end{lemma}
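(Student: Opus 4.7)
The plan is to partition $\Z^n$ according to where $\|\vec{z}-\vec{t}\|_p^p$ sits relative to the two ``mean'' values $\mu_p(\tau;\vec{t})$ and $\mu_p(\tau+2\delta;\vec{t})$, use this partition to show that the ``middle'' part of the sum $\Theta_p(\tau+\delta;\vec{t})$ is at least $H_p(\tau,\delta;\vec{t})$, and then bound this middle part above in terms of $N_p(\Z^n, \mu_p(\tau;\vec{t})^{1/p}, \vec{t})$. Note by Lemma~\ref{lem:theta_derivatives} that $\mu_p(\cdot;\vec{t})$ is strictly decreasing, so $\mu_p(\tau+2\delta;\vec{t}) < \mu_p(\tau;\vec{t})$. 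If $H_p(\tau,\delta;\vec{t}) \leq 0$, then the conclusion is immediate since $N_p \geq 0$, so we may assume $H_p > 0$.

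First I would partition $\Z^n$ into three sets: $A := \{\vec{z} : \|\vec{z}-\vec{t}\|_p^p < \mu_p(\tau+2\delta;\vec{t})\}$ (too short), $B := \{\vec{z} : \mu_p(\tau+2\delta;\vec{t}) \leq \|\vec{z}-\vec{t}\|_p^p \leq \mu_p(\tau;\vec{t})\}$ (middle), and $C := \{\vec{z} : \|\vec{z}-\vec{t}\|_p^p > \mu_p(\tau;\vec{t})\}$ (too long). Writing $S_A, S_B, S_C$ for the corresponding partial sums of $\Theta_p(\tau+\delta;\vec{t})$, I would handle $S_A$ by pulling out a factor of $\exp(\delta\|\vec{z}-\vec{t}\|_p^p) < \exp(\delta \mu_p(\tau+2\delta;\vec{t}))$ and reshaping the remainder as $\exp(-(\tau+2\delta)\|\vec{z}-\vec{t}\|_p^p)$, which gives $S_A \leq \exp(\delta\mu_p(\tau+2\delta;\vec{t}))\,\Theta_p(\tau+2\delta;\vec{t})$. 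The analogous trick for $S_C$, pulling out $\exp(-\delta\|\vec{z}-\vec{t}\|_p^p) < \exp(-\delta\mu_p(\tau;\vec{t}))$, yields $S_C \leq \exp(-\delta\mu_p(\tau;\vec{t}))\,\Theta_p(\tau;\vec{t})$. Subtracting both bounds from $\Theta_p(\tau+\delta;\vec{t}) = S_A + S_B + S_C$ gives precisely $S_B \geq H_p(\tau,\delta;\vec{t})$.

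To finish, I would upper-bound $S_B$ by observing that each summand satisfies $\exp(-(\tau+\delta)\|\vec{z}-\vec{t}\|_p^p) \leq \exp(-(\tau+\delta)\mu_p(\tau+2\delta;\vec{t}))$ since $\|\vec{z}-\vec{t}\|_p^p \geq \mu_p(\tau+2\delta;\vec{t})$ on $B$, and that $|B| \leq N_p(\Z^n, \mu_p(\tau;\vec{t})^{1/p}, \vec{t})$ because $B \subseteq \{\vec{z} : \|\vec{z}-\vec{t}\|_p \leq \mu_p(\tau;\vec{t})^{1/p}\}$. Combining,
\[
H_p(\tau,\delta;\vec{t}) \leq S_B \leq N_p(\Z^n, \mu_p(\tau;\vec{t})^{1/p}, \vec{t}) \cdot \exp(-(\tau+\delta)\mu_p(\tau+2\delta;\vec{t})),
\]
and rearranging together with $\exp(\delta\mu_p(\tau+2\delta;\vec{t})) \geq 1$ (which is where we use $H_p > 0$) yields the stated lower bound.

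The main bookkeeping subtlety is just making sure the direction of each inequality matches up after taking exponentials and sums — both the bound for $S_A$ and the bound for $S_C$ exploit the fact that, on their respective sets, we can replace $\|\vec{z}-\vec{t}\|_p^p$ by the mean value and only pay the sign-consistent penalty. Beyond that, no genuine obstacle arises: the definition of $H_p$ is tailored so that these two bounds combine to exactly isolate the middle-length contribution.
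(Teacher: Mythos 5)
Your proof is correct and follows essentially the same strategy as the paper's: the paper rewrites $H_p(\tau,\delta;\vec{t})$ as $\sum_{\vec{z}}\exp(-(\tau+\delta)\|\vec{z}-\vec{t}\|_p^p)(1-f_1(\vec{z})-f_2(\vec{z}))$ and observes that only the summands with $\mu_p(\tau+2\delta;\vec{t}) < \|\vec{z}-\vec{t}\|_p^p < \mu_p(\tau;\vec{t})$ are positive, which is an algebraic repackaging of your three-way partition into $A$, $B$, $C$. Both arguments then bound the middle contribution by $N_p(\Z^n,\mu_p(\tau;\vec{t})^{1/p},\vec{t})\cdot\exp(-(\tau+\delta)\mu_p(\tau+2\delta;\vec{t}))$ and relax to $\exp(-\tau\mu_p(\tau+2\delta;\vec{t}))$, so this is the same proof.
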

	\begin{proof}
		We can write
		\[
		H_p(\tau, \delta; \vec{t}) = \sum_{\vec{z} \in \Z^n} \exp(-(\tau + \delta) \|\vec{z} - \vec{t}\|_p^p) \cdot \big(1 - f_1(\vec{z}) - f_2(\vec{z})\big)
		\; ,
		\]
		where $f_1(\vec{z}) := \exp(\delta (\|\vec{z} - \vec{t}\|_p^p - \mu_p(\tau; \vec{t})))$, and $f_2(\vec{z}) :=  \exp(\delta (\mu_p(\tau + 2\delta; \vec{t}) - \|\vec{z} - \vec{t}\|_p^p))$. In particular, the summand is negative if $\|\vec{z} - \vec{t}\|_p^p \geq \mu_p(\tau; \vec{t})$ or $\|\vec{z} - \vec{t}\|_p^p \leq \mu_p(\tau + 2\delta; \vec{t})$. Therefore, there are at most $N_p(\Z^n, \mu_p(\tau; \vec{t})^{1/p}, \vec{t})$ positive summands, and each has magnitude at most 
		\[
		\exp(-(\tau + \delta)\mu_p(\tau + 2\delta; \vec{t})) \leq \exp(-\tau \mu_p(\tau + 2\delta; \vec{t}))
		\; .
		\]
		The result follows.
	\end{proof}
	
	\begin{proof}[Proof of Theorem~\ref{thm:theta_gives_good_approx}]
		We may assume without loss of generality that $\vec{t} \in [0,1/2]^n$.
		Let $\delta > 0$ with $\delta = O(1)$ to be chosen later. By taking the Taylor expansion of $\log \Theta_p(\tau; \vec{t})$ around $\tau$, we see by Lemma~\ref{lem:theta_derivatives} that
		\[
		\log \Theta_p(\tau; \vec{t}) - \log \Theta_p(\tau + \delta; \vec{t}) \leq \delta \mu_p(\tau; \vec{t}) - \frac{\delta^2}{2} \inf_{\tau \leq \tau' \leq \tau + \delta} \sum_{i=1}^n V(\tau'; t_i)
		\; ,
		\]
		where 
		\[
		V(\tau'; t_i) :=  \expect_{X \sim D_p(\tau'; t_i) }[|X|^{2p}] - \mu_p(\tau'; t_i)^2
		\; .
		\]
		Notice that $V(\tau'; t_i)$ is a continuous positive function. Therefore, it has a positive lower bound on the compact set $\tau \leq \tau' \leq \tau + 2\delta$ and $0 \leq t_i \leq 1/2$. (We have deliberately taken the upper limit on $\tau'$ to be $\tau + 2\delta$ rather than just $\tau + \delta$.) Let $C_{\min} > 0$ be such a bound. We therefore have
		\[
		\log \Theta_p(\tau; \vec{t}) - \log \Theta_p(\tau + \delta; \vec{t}) \leq \delta \mu_p(\tau; \vec{t}) - C_{\min}n \delta^2 /2
		\; .
		\]
		By an essentially identical argument, we have, 
		\[
		\log \Theta_p(\tau + 2\delta; \vec{t})-\log \Theta_p(\tau + \delta; \vec{t})  \leq -\delta \mu_p(\tau + 2\delta; \vec{t}) - C_{\min} n\delta^2 /2
		\; .
		\]
		It follows that
		\[
		H_p(\tau, \delta; \vec{t}) \geq \Theta_p(\tau + \delta; \vec{t}) \cdot (1-2\exp(-C_{\min}n\delta^2 /2))
		\; .
		\]
		
		Plugging this in to Lemma~\ref{lem:H_gives_lower_bound}, we see that
		\begin{align*}
		N_p(\Z^n, \mu_p(\tau; \vec{t})^{1/p}, \vec{t}) \geq \exp(\tau \mu_p(\tau + 2\delta; \vec{t}) ) \Theta_p(\tau + \delta; \vec{t}) \cdot (1-2\exp(-C_{\min}n\delta^2 /2))
		\; .
		\end{align*}
		By a similar argument, we see that there is some constant $C_{\max} > 0$ such that $\mu_p(\tau; t_i) - \mu_p(\tau + 2\delta; t_i) \leq C_{\max} \delta$ \emph{and} $\log \Theta_p(\tau; t_i) - \log \Theta_p(\tau + \delta; t_i) \leq C_{\max} \delta$. Therefore,
		\[
		N_p(\Z^n, \mu_p(\tau; \vec{t})^{1/p}, \vec{t})  \geq \exp(\tau \mu_p(\tau; \vec{t}) ) \Theta_p(\tau; \vec{t}) \cdot \exp(-C_{\max}\delta n(\tau + 1)) (1-2\exp(-C_{\min}n\delta^2 /2))
		\; .
		\]
		The result follows by taking $\delta = C^\dagger/\sqrt{n}$ for a sufficiently large constant $C^\dagger > 0$.
	\end{proof}
	
	\subsection{Dense shifted balls and the proof of Lemma~\ref{lem:integer_lattice_bound}}
	
	We now wish to show that for $p > 2$, there exist shifts $\vec{t} \in \R^n$ such that $N_p(\Z^n, r, \vec{t})$ is exponentially larger than $N_p(\Z^n, r, \vec0)$. (Again, this result is not original to us, as it was already proven in~\cite{EOR91}.) By Theorem~\ref{thm:theta_gives_good_approx}, it suffices to show that there exists $\tau > 0$ and $t \in (0,1/2]$ such that $\Theta_p(\tau; t) > \Theta_p(\tau; 0)$.
	
	\begin{lemma}
		\label{lem:local_minimum_theta}
	For any $p > 2$ and $\tau \geq 1-1/p$, there exists a $t \in (0,1/2]$ such that 
	\[
	\Theta_p(\tau; t) > \Theta_p(\tau; 0)
	\; .
	\]
	\end{lemma}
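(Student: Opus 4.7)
The plan is to show that $t = 0$ is a strict local minimum of $t \mapsto \Theta_p(\tau;t)$ by computing the first two derivatives at $0$.

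First I would observe that $\Theta_p(\tau;t)$ is symmetric under $t \mapsto -t$ (shift $z \mapsto -z$ in the sum) and periodic with period $1$ (shift $z \mapsto z+1$). Since the summands decay like $\exp(-\tau|z|^p)$ uniformly on any compact $t$-interval, we can differentiate termwise. The $t\mapsto -t$ symmetry forces
\[
\frac{\partial}{\partial t}\Theta_p(\tau;t)\Big|_{t=0} = 0,
\]
so it suffices to establish
\[
\frac{\partial^2}{\partial t^2}\Theta_p(\tau;t)\Big|_{t=0} > 0,
\]
from which a strictly smaller value at $t=0$ than at some small $t \in (0,1/2]$ follows immediately.

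Next I would carry out the straightforward chain-rule calculation for $p > 2$, which (crucially) makes the summand smooth at $z = 0$ since $|u|^{p-2}$ vanishes at $u = 0$ when $p>2$. The result is
\[
\frac{\partial^2}{\partial t^2}\Theta_p(\tau;t)\Big|_{t=0} = \sum_{z \in \Z}|z|^{p-2}\bigl[\tau^2 p^2 |z|^p - \tau p(p-1)\bigr]\exp(-\tau|z|^p),
\]
and the $z=0$ term contributes $0$. After pulling out the positive factor $\tau p$, the sign of each remaining summand is determined by $\tau p|z|^p - (p-1)$.

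Finally I would check term-by-term positivity under the hypothesis $\tau \ge 1 - 1/p = (p-1)/p$. For $z = \pm 1$ the bracket equals $\tau p - (p-1) \ge 0$, with equality precisely at the boundary $\tau = (p-1)/p$. For $|z| \ge 2$ and $p > 2$ one has $\tau p |z|^p \ge (p-1)\cdot 2^p > p-1$, so those terms are strictly positive. Hence the whole sum is strictly positive, which gives the required local-minimum property and yields a $t \in (0,1/2]$ with $\Theta_p(\tau;t) > \Theta_p(\tau;0)$. There is no real obstacle here; the only subtlety is handling the $z=0$ term, which is why the hypothesis $p>2$ is used, and the boundary case $\tau = (p-1)/p$, which works because the ``dangerous'' $z = \pm 1$ contributions only vanish while the $|z| \ge 2$ contributions remain strictly positive.
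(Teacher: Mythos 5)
Your proposal is correct and follows essentially the same route as the paper: both observe that the first $t$-derivative of $\Theta_p(\tau;t)$ vanishes at $t=0$ by symmetry, compute the second derivative termwise to get $\tau p\sum_{z}|z|^{p-2}\big(\tau p|z|^p-(p-1)\big)\exp(-\tau|z|^p)$ (using $p>2$ to handle the $z=0$ term), and check that the bracket is nonnegative for $\tau\ge 1-1/p$. You are slightly more careful than the paper at the boundary $\tau=(p-1)/p$, where the $z=\pm 1$ terms vanish but the $|z|\ge 2$ terms remain strictly positive, which is what actually secures a \emph{strict} local minimum.
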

	\begin{proof}
		We have $\frac{\partial }{\partial t}  \Theta_p(\tau; t) |_{t = 0} = 0$,
		and
		\begin{equation}
		\label{eq:second_deriv_Theta_t}
		\frac{\partial^2 }{\partial t^2}  \Theta_p(\tau; t) \big|_{t = 0} 
		= p\tau \sum_{z \in \Z} \exp(-\tau |z|^p) |z|^{p-2}(p \tau |z|^p-(p-1))
		\; .
		\end{equation}
		(Here, we have used the fact that $\exp(-\tau |t|^p)$ is twice differentiable at $t = 0$ for $p > 2$, with first and second derivative both zero. Notice that this is false for $p \leq 2$.) 
		For $\tau \geq 1-1/p$, all of the summands are non-negative, so that $0$ is a local minimum of the function $t \mapsto \Theta_p(\tau; t)$. Therefore, for sufficiently small $t > 0$, $\Theta_p(\tau; t) > \Theta_p(\tau; 0)$, as needed.
	\end{proof}
	
	\begin{proof}[Proof of Lemma~\ref{lem:integer_lattice_bound}]
		Choose $t \in [0,1/2]$ to maximize $\Theta_p(1; t)$. (Since $\Theta_p(1;t)$ is a continuous function and $[0,1/2]$ is a compact set, such a maximizer is guaranteed to exist.)  By Lemma~\ref{lem:local_minimum_theta}, we must have $\Theta_p(1; t) > \Theta_p(1; 0)$, and in particular $t > 0$. Let $\eps \in (0,\delta)$ be a constant to be chosen later. Let $C_r := \mu_p(1;t)^{1/p}/(1-\eps)^{1/p}$.
		
		Let $\eps \in (0,\delta)$ be a constant to be chosen later, and let $r := C_r n^{1/p}$ and $\vec{t} := (t,t,\ldots, t) \in \R^n$. Notice that $(1-\eps) r^p = \mu_p(1; \vec{t})$. By Theorem~\ref{thm:theta_gives_good_approx}, we have
		\begin{equation}
		\label{eq:t_bound}
		N_p(\Z^{n}, (1-\eps)^{1/p} \cdot r, \vec{t}) \geq \exp(-C^*\sqrt{n}) \cdot \exp((1-\eps) r^p) \cdot  \Theta_p(1; t)^{n}
		\; 
		\end{equation}
		for some constant $C^* > 0$.
		
		We have
		\[
		N_p(\Z^{n}, r, \vec{0}) \leq \exp(r^p) \Theta_p(1; 0)^{ n}
		\; .
		\]
		Let $\alpha := \Theta_p(1; t)/\Theta_p(1; 0) > 1$.
		Then, combining the above with Eq.~\eqref{eq:t_bound}, we see that
		\[
		\frac{N_p(\Z^{n}, (1-\eps)^{1/p} r, \vec{t})}{N_p(\Z^{n}, r, \vec{0})}
		\geq \alpha^{n} \cdot \exp(-\eps C_r^p n)
		\; .
		\]
		We therefore take $\eps \in (0,\delta)$ to be any constant small enough to make $\alpha \exp(-\eps C_r^p) > 1$.
		
		Now, for any $\vec{t}' = (t_1',\ldots, t_{n}') \in \R^{n}$, we may repeat the above argument to show that
		\[
		\frac{N_p(\Z^{n}, (1-\eps)^{1/p} r, \vec{t})}{N_p(\Z^{n}, (1-\eps/\delta) r, \vec{t}')} \geq \exp( \eps C_r^p n \cdot (1/\delta - 1)) \cdot \Theta_p(1; t)^n/\Theta_p(1;\vec{t}')
		\]
		Recall that, by definition, $\Theta_p(1; \vec{t}') = \prod_i \Theta_p(1; t_i') \leq \Theta_p(1; t)^{n}$, where the last inequality follows from our choice of $t$.  Therefore,
		\[
		\frac{N_p(\Z^{n}, (1-\eps)^{1/p} r, \vec{t})}{D_p(\Z^n,(1-\eps/\delta) r)} \geq \exp( \eps C_r^p n \cdot (1/\delta - 1)) 
		\; .
		\]
		The result follows by taking $\beta := \min\{ \alpha \exp(-\eps C_r^p),\ \exp(\eps C_r^p (1/\delta - 1))\} > 1$.
	\end{proof}

\bibliographystyle{alpha}
\newcommand{\etalchar}[1]{$^{#1}$}
\def\cprime{$'$}

\end{document}